\def\editmode{0}
\def\bibfilenames{refs}
\def\spsformat{0}
\newcommand{\arev}[1]{\textcolor{blue}{#1}} 
\newcommand{\cmt}[1]{\noindent\textcolor{lightgreen}{\underline{[#1]}}} 
\newcommand{\hc}[1]{\textcolor{blue}{#1}} 
\newlist{mydeepitemize}{enumerate}{9}
\setlist[mydeepitemize,1]{label=$\bullet$}
\setlist[mydeepitemize,2]{label=$\diamond$}
\setlist[mydeepitemize,3]{label=$\rightarrow$}
\setlist[mydeepitemize,4]{label=$\circ$}
\setlist[mydeepitemize,5]{label=$-$}
\setlist[mydeepitemize,6]{label=$\Delta$}
\setlist[mydeepitemize,7]{label=$\star$}
\setlist[mydeepitemize,8]{label=$\square$}
\setlist[mydeepitemize,9]{label=$\checkmark$}
\newenvironment{myitemize}{\begin{mydeepitemize}}{\end{mydeepitemize}}
\newcommand{\myitem}{\item}
\newcommand{\cmt}[1]{} 
\newcommand{\hc}[1]{\textcolor{black}{#1}} 
\newenvironment{myitemize}{}{}
\newcommand{\myitem}{}
\newcommand{\printmybibliography}{
\if\editmode1 
\printbibliography
\else
\bibliography{\bibfilenames}
\fi
}
\newcounter{rulecounter}
\newcommand{\resetrule}{ \setcounter{rulecounter}{0}}
\newsavebox{\selvestebox}
\newenvironment{colbox}[1]
  {\newcommand\colboxcolor{#1}%
   \begin{lrbox}{\selvestebox}%
   \begin{minipage}{\dimexpr\columnwidth-2\fboxsep\relax}}
  {\end{minipage}\end{lrbox}%
   \begin{center}
   \colorbox{\colboxcolor}{\usebox{\selvestebox}}
   \end{center}}
\definecolor{orange}{rgb}{1,0.8,0}
\definecolor{gray}{rgb}{.9,0.9,0.9}
\definecolor{darkgray}{rgb}{.3,0.3,0.3}
\definecolor{darkblue}{rgb}{.1,0.0,0.3}
\definecolor{lightblue}{rgb}{0.7,0.7,1}
\definecolor{lightred}{rgb}{1,0.7,.7}
\definecolor{purple}{RGB}{204,153,255}
\definecolor{lightgray}{rgb}{.95,0.95,0.95}
\definecolor{lightgreen}{rgb}{0.3,0.5,0.3}
\definecolor{darkgreen}{rgb}{0.05,0.3,0.05}
\newcommand{\ra}{$\rightarrow$~}
\newcommand{\bbm}[1]{{\bar{\bm #1}}}
\newcommand{\cbm}[1]{{\check{\bm #1}}}
\newcommand{\inv}{^{-1}}
\newcommand{\rfield}{\mathbb{R}}
\newcommand{\vect}{\mathop{\rm vec}}
\newcommand{\diagnb}{\mathop{\rm diag}}
\newcommand{\rank}{\mathop{\rm rank}}
\newcommand{\range}[1]{\mathcal{R}\brackets{#1}}
\newcommand{\trnb}{\mathop{\rm Tr}}
\newcommand{\transpose}{^\top}
 \newcommand{\define}{:=}
\newcommand{\expectednb}{\mathop{\textrm{E}}\nolimits}
\newcommand{\minimize}{\mathop{\text{minimize}}}
\newcommand{\st}{\mathop{\text{s.t.}}}
\DeclareMathOperator*{\argmin}{arg\,min}
\newtheorem{myproposition}{Proposition}
\newtheorem{myremark}{Remark}
\newtheorem{myproblemstatement}{Problem Statement}
\newtheorem{mylemma}{Lemma}
\newtheorem{mytheorem}{Theorem}
\newtheorem{mydefinition}{Definition}
\newtheorem{mycorollary}{Corollary}
 \newcommand{\leqnomode}{\tagsleft@true}
\newcommand{\reqnomode}{\tagsleft@false}
\newtheorem{mydef}{Definition}
 \renewcommand{\arev}[1]{{#1}} 
\renewcommand{\expectednb}{\mathbb{\hc{E}}}
\renewcommand{\range}[1]{\mathcal{R}\{#1\}}
\newcommand{\rangeperp}[1]{\mathcal{R}^\perp\{#1\}}
\renewcommand{\vect}{{\rm vec}}
\newcommand{\colspan}{{\hc{\mathcal{R}}}}
\renewcommand{\transpose}{^\top}
\newcommand{\canbasisvec}{{\hc{\bm e}}} 
\renewcommand{\emptyset}{\varnothing}
\newcommand{\graph}{{\hc{\mathcal{G}}}}
\newcommand{\nodeset}{{\hc{\mathcal{V}}}}
\newcommand{\edgeset}{{\hc{\mathcal{E}}}}
\newcommand{\reducededgeset}{{\hc{\bar{\mathcal{E}}}}} 
\newcommand{\subspace}{{\hc{\mathcal{U}}}}
\newcommand{\shiftmatset}{{\hc{\mathcal{S}}_\graph}} 
\newcommand{\polfeasmatset}[1]{{\hc{\mathcal{F}}_{#1}}}
\newcommand{\prefeasmatset}[1]{{\hc{\tilde{\mathcal{F}}}_{#1}}}
\newcommand{\neighborhood}[1]{{\hc{\mathcal{N}}_{#1}}} 
\newcommand{\obsvec}{{\hc{\bm z}}} 
\newcommand{\obsentry}{{\hc{z}}} 
\newcommand{\projobsvec}{{\hc{\bm y}}} 
\newcommand{\coord}{{\hc{\alpha}}}
\newcommand{\coordvec}{{\bm{\coord}}}
\newcommand{\projmat}{{\hc{\bm P}}}
\newcommand{\uparmat}{{\hc{\bm U}}_{\parallel}}
\newcommand{\fparmat}{{\hc{\bm F}}_{\parallel}}
\newcommand{\fparevecmat}{{\hc{\bm Q}}_{\parallel}}
\newcommand{\evalparmat}{{\hc{\bm \Lambda}}_{\parallel}}
\newcommand{\shiftparmat}{{\shiftmat}_{\parallel}}
\newcommand{\upermat}{{\hc{\bm U}}_{\perp}}
\newcommand{\fpermat}{{\hc{\bm F}}_{\perp}}
\newcommand{\fperevecmat}{{\hc{\bm Q}}_{\perp}}
\newcommand{\evalpermat}{{\hc{\bm \Lambda}}_{\perp}}
\newcommand{\shiftpermat}{{\shiftmat}_{\perp}}
\newcommand{\shiftmat}{{\hc{\bm S}}}
\newcommand{\filtermat}{{\hc{\bm H}}}
\newcommand{\shiftentry}{{\hc{s}}}
\newcommand{\vandermondemat}{{\hc{\bm \Psi}}}
\newcommand{\dummymat}{{\hc{\bm Q}}}
\newcommand{\auxconst}{{\hc{\kappa}}}
\newcommand{\auxvec}{{\hc{\bm x}}}
\newcommand{\nodenot}[1]{_{#1}}  
\newcommand{\nodenodenot}[2]{_{#1,#2}}  
\newcommand{\itnot}[1]{^{(#1)}} 
\newcommand{\itnodenot}[2]{\itnot{#1}\nodenot{#2}} 
\newcommand{\itind}{{\hc{l}}} 
\newcommand{\itnum}{{\hc{L}}} 
\newcommand{\nodeind}{{\hc{n}}} 
\newcommand{\nodenum}{{\hc{N}}} 
\newcommand{\reducededgenum}{{\hc{\bar{E}}}} 
\newcommand{\filtercoef}{{\hc{c}}} 
\newcommand{\filtercoefvec}{{\hc{\bm c}}} 
\newcommand{\orderfun}[1]{{\hc{\mathcal{O}}_{#1}}} 
\newcommand{\noevalsfun}{{\hc{\mathcal{L}}}} 
\newcommand{\regpar}{{\hc{\lambda}}} 
\newcommand{\subspacedim}{{\hc{r}}} 
\newcommand{\topologyconstraintmat}{{\hc{\bm{W}}}}
\newcommand{\exactpfeasmat}{{\hc{\bm C}}} 
\newcommand{\exactpfeasvec}{{\hc{\bm b}'}}  
\newcommand{\dlmsstepsize}{{\hc{\mu}_\text{DLMS}}}%
\newcommand{\dlmsauglagpar}{{\hc{\rho}_\text{DLMS}}}%
\newcommand{\singleedgeshiftmat}{{\hc{\bm \Phi}}}
\newcommand{\linearmat}{{\hc{\bm B}}}
\newcommand{\eval}{{\hc{\bm \lambda}}}
\newcommand{\feasmat}{{\hc{\bm D}}}
\newcommand{\symmat}{{\hc{\bm G}}}
\newcommand{\symmatset}{{\hc{\mathbb{S}}}} 
\newcommand{\umat}{{\hc{\bm U}}}
\newcommand{\uvec}{{\hc{\bm u}}}
\newcommand{\uset}{{\hc{\mathcal{U}}}}
\newcommand{\ADMMreg}{{\hc{\rho}}} 
\newcommand{\eigenreg}{{\hc{\epsilon}}}
\newcommand{\linfunmat}{{\hc{\bm A}}}
\newcommand{\linfunvec}{{\hc{\bm a}}}
\newcommand{\linearmatperp}{{\hc{\bm B}}_{\perp}}
\newcommand{\linfunmatpar}{{\hc{\bm A}}_{\parallel}}
\newcommand{\identity}{{\hc{\bm I}}}
\newcommand{\one}{{\hc{\bm 1}}}
\newcommand{\noise}{{\hc{\bm v}}}
\newcommand{\noisepow}{{\hc{\sigma^2}}}
\newcommand{\desiredvec}{{\hc{\bm \xi}}}
\newcommand{\linearvec}{{\hc{\bm b}}}
\newcommand{\symmetryperp}{{\hc{\bm G}}_{\perp}}
\newcommand{\symmetrypar}{{\hc{\bm G}}_{\parallel}}
\newcommand{\regnucper}{{\hc{ \eta}}_{\perp}}
\newcommand{\regnucpar}{{\hc{ \eta}}_{\parallel}}
\newcommand{\conspermat}{{\hc{\bm T}}_{\perp}}
\newcommand{\consparmat}{{\hc{\bm T}}_{\parallel}}
\newcommand{\consmat}{{\hc{\bm T}}}
\newcommand{\nucpermat}{{\hc{\bm Y}}_{\perp}}
\newcommand{\nucparmat}{{\hc{\bm Y}}_{\parallel}}
\newcommand{\nucmat}{{\hc{\bm Y}}}
\newcommand{\lagfirst}{{\hc{\bm q}}_{1}}
\newcommand{\lagsec}{{\hc{\bm q}}_{2}}
\newcommand{\lagtre}{{\hc{\bm q}}_{3}}
\newcommand{\canbasisperp}{{\hc{\bm t}}} 
\newcommand{\auxevecmat}{{\hc{\bm V}}}
\newcommand{\auxevalmat}{{\hc{\bm \Lambda}}}
\newcommand{\loc}{{\hc{x}}}
\newcommand{\locvec}{{\hc{\bm x}}}
\newcommand{\sensorlocvec}{{\hc{\bm x}}}
\newcommand{\regiondim}{{\hc{d}}}
\newcommand{\signalfun}{{\hc{\xi}}}
\newcommand{\sourcelocvec}[1]{{\hc{\bm x_{\text{s},#1}}}}
\newcommand{\length}{{\hc{X}}}
\begin{document}
\title{Fast Graph Filters for \\Decentralized Subspace  Projection}

\author{Daniel~Romero,~\IEEEmembership{Member,~IEEE,} Siavash Mollaebrahim,~\IEEEmembership{Student Member,~IEEE,} \\Baltasar~Beferull-Lozano,~\IEEEmembership{Senior Member,~IEEE}, and C\'esar~Asensio-Marco,~\IEEEmembership{Member,~IEEE.}
\thanks{Daniel Romero is with the Department of Information and Communication Technology, University of Agder, Norway. 
Siavash Mollaebrahim and Baltasar Beferull-Lozano are with the Intelligent Signal Processing and Wireless Networks (WISENET) Center, University of Agder, Norway. 
Cesar Asensio is with the AINIA Technology Center, Spain. 
e-mail:\{daniel.romero, siavash.mollaebrahim, baltasar.beferull\}@uia.no, ceamar@gmail.com}
\thanks{This work was supported in part by the PETROMAKS Smart-Rig grant
244205/E30 and the SFI Offshore Mechatronics grant 237896/O30 from the
Research Council of Norway.
}
\thanks{Part of this work was presented in the Int. Conf. on
Acoustics, Speech, and Signal Processing, Calgary, Canada, 2018~\cite{weerasinghe2018fast}.}
}
\maketitle
\begin{abstract}A number of inference problems with sensor networks involve projecting a
measured signal onto a given subspace. In existing decentralized
approaches, sensors communicate with their local neighbors to obtain a
sequence of iterates that asymptotically converges to the desired
projection. In contrast, the present paper develops methods that
produce these projections in a finite and approximately minimal number
of iterations. Building upon tools from graph signal processing, the
problem is cast as the design of a graph filter which, in turn, is
reduced to the design of a suitable graph shift operator. Exploiting
the eigenstructure of the projection and shift matrices leads to an
objective whose minimization yields approximately minimum-order graph
filters. To cope with the fact that this problem is not convex, the
present work introduces a novel convex relaxation of the number of
distinct eigenvalues of a matrix based on the nuclear norm of a
Kronecker difference. To tackle the case where there exists no graph
filter capable of implementing a certain subspace projection with a
given network topology, a second optimization criterion is presented
to approximate the desired projection while trading the number of
iterations for approximation error. Two algorithms are proposed
to optimize the aforementioned criteria based on the
alternating-direction method of multipliers. An exhaustive simulation
study demonstrates that the obtained filters can effectively obtain
subspace projections markedly faster than existing algorithms.

\end{abstract}
\begin{keywords}
Subspace projection, graph filters, graph signal processing, decentralized signal
processing, wireless sensor networks.
\end{keywords}

\section{Introduction}
\label{sec:intro}

\cmt{Motivation}
\begin{myitemize}%
\myitem\cmt{monitoring spatial fields vs (possibly W)SNs}A frequent
inference problem in signal processing involves the estimation of a
spatial field using measurements collected by a (possibly wireless)
sensor
network~\cite{raghavendra2006,chen2015ubiquitous,nevat2015heterogeneous,asensio2015energy,nowak2004inhomogeneous}.
The field of interest may quantify magnitudes such as temperature,
electromagnetic radiation, concentration of airborne or liquid
pollutants, flows of gas or liquids in porous soils and rocks such as
oil reservoirs, acoustic pressure, and radioactivity to name a
few. Instead of spatial fields, one may  be alternatively interested in fields
defined on the nodes or edges of a network; see
e.g.~\cite{forero2014dictionary}.
\myitem\cmt{Inference\ra projection}In either case, a number of
common inference tasks such as least squares estimation, denoising,
(weighted) consensus, and decentralized detection can be cast as projecting
the observations onto a given signal subspace; see
e.g.~\cite{kay1,harsanyi1994hyperspectral,behrens1994oblique}.
\myitem\cmt{decentralized}
\begin{myitemize}%
\myitem\cmt{centralized}Such a fundamental task can be implemented in
a centralized fashion,
\begin{myitemize}%
\myitem\cmt{description}where a fusion center gathers and processes
the measurements collected by all sensors.
\myitem\cmt{limitations}Unfortunately, this approach
\begin{myitemize}%
\myitem\cmt{comp}gives rise to i) communication bottlenecks, since those nodes
near the fusion center are required to forward data packets from many
sensors, 
\myitem\cmt{comm}ii) computational challenges, since the load is
concentrated in the fusion center,
\myitem\cmt{robustness}and iii)  vulnerability to attacks or failure of the fusion center.
\end{myitemize}%
\end{myitemize}%
\myitem\cmt{description}For these reasons, the decentralized
paradigm, where there is no central processor and all nodes share the
computational load, is oftentimes preferred~\cite{nedic2018decentralized}.
\myitem\cmt{strengths}These implementations are therefore scalable,
robust, and balance the communication and processing requirements across nodes.
\end{myitemize}%
\myitem\cmt{fast}The present paper capitalizes on the notion of graph
filter~\cite{shuman2013emerging,sandryhaila2014bigdata} to develop
algorithms for computing projections in a decentralized fashion with
an approximately minimal number of iterations.
\end{myitemize}%

\cmt{literature: approaches for subspace proj}
\begin{myitemize}%
\myitem\cmt{general inference}To tackle problems involving
decentralized processing, it is common to define a \emph{communication
graph} where each node represents a sensor and there exists an edge
between two nodes if the corresponding sensors can communicate,
e.g. via a radio link.  One could therefore feel inclined to address
the problem at hand using standard inference tools for data defined on
graphs; see
e.g.~\cite{belkin2004regularization,romero2017spacetime,romero2017multikernel,lauritzen1996graphical,isufi2019forecasting,scarselli2008graph}. However,
these methods are based on exploiting a certain relation between the
data and the graph;
e.g. smoothness~\cite{romero2017multikernel}. Therefore, this
framework fundamentally differs from the one at hand since here the
graph only provides information about how the sensors communicate,
i.e., it does not generally provide information about the spatial
field of interest.\footnote{\arev{If the field is smooth over space
and the graph is also smooth over space, meaning that nodes that are
spatially close have a low geodesic distance in the graph, then both
notions may approximately coincide. However, even in this case,
exploiting spatial information would still be more
accurate. Fortunately, as described in this paper, graph signal
processing tools can still be applied to exploit spatial smoothness
rather than graph smoothness. }}
\myitem\cmt{decentralized optimization}To obtain a decentralized
    subspace projection algorithm, one could instead adopt a
    decentralized optimization standpoint, e.g. via
\begin{myitemize}%
\myitem\cmt{works}
\begin{myitemize}%
    \myitem\cmt{DLMS}the \emph{distributed least mean squares} (DLMS)
    method in~\cite{mateos2012drls}, based on the \emph{alternating
    direction method of multipliers} (ADMM)~\cite{giannakis2016decentralized,boyd2011distributed},
    \myitem\cmt{DGD}or the \emph{decentralized gradient descent} (DGD)
    method in~\cite{shi2017distributed}, which builds upon gradient descent.
\end{myitemize}%
\myitem\cmt{limitations}Although these algorithms can accommodate
    general objective functions, their convergence is only
    asymptotic and can be significantly improved by exploiting the
    structure of the subspace projection problem; see
    Sec.~\ref{sec:sim}. 
\end{myitemize}%
\myitem\cmt{general projections}%
\begin{myitemize}%
\myitem\cmt{works}%
\begin{myitemize}%
\myitem\cmt{barbarossa and insausti}For this reason, a method tailored to computing projections in a
decentralized fashion was proposed in~\cite{barbarossa2009projection},
later extended in~\cite{insausti2012in-network} and~\cite{dilorenzo2020innetwork}, where every node
obtains each iterate by linearly combining its previous iterate with
the previous iterate of its neighbors.  The combination weights are
adjusted to achieve a fast asymptotic convergence.
\end{myitemize}%
\myitem\cmt{limitations \ra asymptotic}The main strength of this 
approach is its simplicity, since each node simply repeats the same
operation over and over. The price to be paid is that convergence is
asymptotic, which means that a large number of data packages need to
be exchanged to attain a prescribed projection accuracy. In addition,
these algorithms can only accommodate a limited set of
topologies~\cite{camaro2013reducing}.
\end{myitemize}%
\myitem\cmt{finte-time consensus}A special case of the subspace
projection problem is average consensus, where the signal subspace
comprises the vectors whose entries are all equal.
\begin{myitemize}%
\myitem\cmt{works no GSP}For this special case, the
algorithms in~\cite{kibangou2012graph,safavi2015nulling} produce
projections with a finite number of communication rounds. 
\myitem\cmt{works GSP}A graph signal
processing~\cite{shuman2013emerging,sandryhaila2014bigdata}
perspective to tackle this special case was adopted
in~\cite{sandryhaila2014finitetime}.
\myitem\cmt{Limitations}Unfortunately, these schemes cannot be applied
to the general subspace projection problem.
\end{myitemize}%
\end{myitemize}%
\begin{myitemize}%
\myitem\cmt{GSP}%
\begin{myitemize}%
\myitem\cmt{works}%
\begin{myitemize}%
\myitem\cmt{segarra}A more general framework is proposed
in~\cite{segarra2017graphfilterdesign}, which would allow
implementation of a projection with a graph filter if one were given
a \emph{shift matrix} such that a subset of its eigenvectors spans the
subspace of interest. \cmt{limitations}Unfortunately, this framework
does not include any method to find such a shift matrix in a general case unless the
target subspace is of dimension 1. 
\myitem\cmt{coutinho}An even more general setup is presented
in~\cite{coutino2019dgf}, which approximates an arbitrary linear
 transformation at the expense of more complex node computations
 through the notion of ``edge-variant'' graph filters.
\cmt{limitations}%
\begin{myitemize}%
\myitem\cmt{Non-convex}Unfortunately, the non-convex~\cite{boyd}
 nature of the optimization problem involved therein
\begin{myitemize}%
\myitem\cmt{guarantees}yields no guarantees that a projection filter can be
found even if it exists
\myitem\cmt{unpredictable behavior}and may lead to unpredictable
behavior if the network topology is modified and the filter weights
need to be updated. 
\myitem\cmt{complexity}
\end{myitemize}%
\myitem\cmt{higher complexity \ra Each node needs to store more coefficients}
\myitem\cmt{speed}Besides, even in the unlikely event that  the optimization algorithm finds
a global optimum, the resulting filter is not necessarily
implementable in a small number of iterations. 
\end{myitemize}%
\myitem\cmt{freq response}Further graph-filter design  schemes abound,
but they typically seek implementing a given frequency
    response \cite{isufi2017autoregressive,tay2014near,narang2012perfect}
    relative to a given shift matrix.
\end{myitemize}%
\end{myitemize}%
\end{myitemize}%

\cmt{contribution}
\begin{myitemize}%
\myitem\cmt{gap to fill}To sum up, there is no decentralized algorithm for computing
    general subspace projections in a finite number of iterations. 
\myitem\cmt{exact}The present paper fills this gap by suitably
designing a shift matrix and the graph filter coefficients. The sought
filter is of approximately minimal order, which implies that the
number of data exchanges among nodes is approximately minimized. 
\myitem\cmt{convex surrogate for the multiplicity}The minimal order is
seen to depend on the  multiplicity of the eigenvalues of the shift
matrix. Since maximizing this multiplicity would lead to a non-convex
problem, a novel convex relaxation technique is developed relying on
a nuclear norm functional of the shift matrix. 
\myitem\cmt{convexity}To solve the resulting optimization problem, a
solver based on ADMM is also developed. 
\myitem\cmt{approximate}For those scenarios where there exists no
graph filter  that can implement the desired
projection on the given topology, a method is proposed to approximate such a projection
while trading approximation error for filter order.
\myitem\cmt{solvers}Another ADMM solver is developed for this case.

\end{myitemize}%

\cmt{conference}\arev{
The conference precursor~\cite{weerasinghe2018fast} of this work
  contains~\thref{prop:disjointevals} and the key steps
  leading to \eqref{prob:exactrelaxed}. Most of the analysis,
  simulations, the approximate projection method, and the ADMM solvers
  are presented here anew. We also published a related subgradient
  method in~\cite{mollaebrahim2018large} but it is not contained in the
  present manuscript. }

\cmt{paper structure}The paper is structured as
follows. Sec.~\ref{prob} formulates the problem, reviews common
applications, and lies some background on graph
filters. Secs.~\ref{sec:exactproj} and~\ref{sec:approximate}
respectively propose methods for exact and approximate projection
implementation. Finally, Sec.~\ref{sec:sim} \arev{presents the simulations
and Sec.~\ref{sec:conclusions} summarizes the main conclusions and provides a discussion. One proof
and the derivations of the ADMM methods are provided in the supplementary
material.}

\emph{Notation}:
\begin{myitemize}%
\myitem\cmt{}Symbol $\define$ denotes equality by definition.
\myitem\cmt{Set notation}%
\begin{myitemize}
\myitem\cmt{}For sets $\mathcal{A}$ and $\mathcal{B}$, the cardinality
of $\mathcal{A}$ is denoted as $|\mathcal{A}|$ whereas
\myitem\cmt{}$\mathcal{A}\subsetneq\mathcal{B}$ indicates that
$\mathcal{A}$ is a proper subset of $\mathcal{B}$. 
\end{myitemize}
\myitem\cmt{matrix}
\begin{myitemize}
\myitem\cmt{}Boldface lowercase (uppercase) letters represent column
vectors (matrices).
\myitem\cmt{}The $\ell_n$ norm of vector $\bm
v$ is denoted as $\|\bm v\|_n$.
\myitem\cmt{}With $\bm A$ and $\bm B$ matrices of appropriate dimensions, 
$[\bm A;\bm B]$ and $[\bm A,\bm B]$ respectively denote
their vertical and horizontal concatenation, 
 \myitem\cmt{}$\bm A\transpose$ the transpose of $\bm A$,
 \myitem\cmt{}$\mathsf{cols}(\bm A)$ the set
 of the  columns of $\bm A$,
 \myitem\cmt{}$\diagnb(\bm A)$ a vector comprising the diagonal
 entries of $\bm A$,
 \myitem\cmt{}$\colspan(\bm A)$ the span of the columns of $\bm A$, 
 \myitem\cmt{}$\bm A\otimes\bm B$  the Kronecker product of $\bm A$
 and $\bm B$,
  \myitem\cmt{}$\mathsf{evals}(\bm A)$ the set of  eigenvalues of $\bm A$,
 \myitem\cmt{}$\lambda_i(\bm A)$ the $i$-th largest eigenvalue of $\bm
 A$,
 \myitem\cmt{}$\sigma_{i}(\bm A)$ the
$i$-th largest singular value of $\bm A$,
\myitem\cmt{norms}%
\begin{myitemize}
 \myitem\cmt{}$
 ||\bm A||_2 \define \sigma_1(\bm A) $ 
 the $2$-norm of $\bm A$, and
\myitem\cmt{}$||\bm A||_\star\define\sum_i\sigma_i(\bm A) $
 the nuclear norm of $\bm A$.
\end{myitemize}
\myitem\cmt{}For a subspace $\mathcal{A}$, notation
$\mathcal{A}^\perp$ represents the orthogonal complement.
\end{myitemize}
\myitem\cmt{Probability}Finally,
\begin{myitemize}
\myitem\cmt{}$\expectednb$ denotes expectation and
\myitem\cmt{}$\mathcal{N}$  the normal distribution.
\end{myitemize}
\end{myitemize}%

\section{Preliminaries} \label{prob}
\subsection{The Subspace Projection Problem}
\label{sec:projformulation}
\cmt{graph}
\begin{myitemize}%
\myitem\cmt{def}Let $\graph(\nodeset,\edgeset)$ denote a
 graph with vertex set $\nodeset=\{1,\ldots,\nodenum\}$, where  each vertex 
 corresponds to a sensor or \emph{node}, and edge set
 $\edgeset\subset \nodeset^2$.  Let there be an edge
 $(n,n')$ in $ \edgeset$ if and only if (iff) the nodes $n$ and
 $n'$ can communicate directly, e.g. through their radio
 interface. Thus, it is natural to assume (i) that  $\edgeset$ contains all
 self loops, i.e., $(n,n)\in\edgeset~\forall n\in\nodeset$, and (ii) that $\graph$ is undirected, which means
 that $(n,n')\in \edgeset$ implies that  $(n',n)\in \edgeset$.
\myitem\cmt{neighborhood}The neighborhood of the $n$-th node is defined as $\neighborhood{\nodeind}=\{n'\mid(n,n')\in\edgeset\}$. 
\end{myitemize}%

\cmt{signal}
\begin{myitemize}%
\myitem\cmt{observations}Given  $\obsvec=[\obsentry_1,\ldots,
\obsentry_\nodenum]^\top$, where $ \obsentry_n\in \mathbb{R}$ denotes the observation or measurement
acquired by the $n$-th node,
\myitem\cmt{goal}the goal is to estimate the
signal vector $\desiredvec \in \mathbb{R}^\nodenum$, which quantifies
the phenomenon of interest (e.g. temperature field).
\myitem\cmt{model}
\begin{myitemize}%
\myitem\cmt{observation model}The latter is
related to $\obsvec$ via
\begin{equation}
\label{eq:obsvecdef}
{\obsvec}=\desiredvec+{\noise},
\end{equation}
where  $\noise \in \mathbb{R}^\nodenum$ stands for additive noise.
\myitem\cmt{subspace}Vector $\desiredvec$ is known to
lie in a given subspace $\range{\uparmat}$ of dimension
$\subspacedim < \nodenum$, where the columns of   $\uparmat
\in \rfield^{\nodenum \times \subspacedim} $ are assumed orthonormal without loss of
generality (w.l.o.g.). Hence, $\desiredvec$ can be expressed as $
\desiredvec=\uparmat \coordvec $ for some
$\coordvec\in \rfield^\subspacedim$. 
\end{myitemize}%
\end{myitemize}%

\cmt{subspace projection problem}
\begin{myitemize}%
\myitem\cmt{projection}The
orthogonal projection of $\obsvec$ onto
$\range{\uparmat}$, also known as the least-squares
estimate\footnote{Also the \emph{best linear unbiased
estimator}, \emph{minimum variance unbiased estimator},
and \emph{maximum likelihood estimator} \cite{kay1} under appropriate
assumptions. } of $\desiredvec$, is given
by: \begin{equation}\label{projection}
{\hat{\desiredvec}}\define
{\uparmat\uparmat^{\top}}{{\obsvec}}\mathop{=}^{\Delta} {\projmat\obsvec}\hspace{0.5mm},
\end{equation}
\noindent where $\projmat \in \mathbb{R}^{\nodenum \times \nodenum} $ is the
projection matrix onto $\range{\uparmat}$.
\myitem\cmt{problem}The subspace projection problem is to find $ \hat\desiredvec$ given
$\obsvec$ and $\uparmat$. Vector ${\hat\desiredvec}$ is expected to be a better
estimate of $ \desiredvec$ than $\obsvec$ since the noise 
 is annihilated along $\nodenum-\subspacedim$ dimensions.

\end{myitemize}%

\subsection{The Choice of the Basis}
\label{sec:basis} 

\cmt{Overview}This section discusses specific choices of the basis
$\uset\define\{\uvec_1,\ldots, \uvec_\subspacedim\}$ formed by the
columns of $\uparmat$ in different application scenarios where a
spatial field needs to be monitored. 
\cmt{Notation}
\begin{myitemize}%
\myitem\cmt{space}To this end, let
$\sensorlocvec_\nodeind\in \rfield^\regiondim$ denote the spatial
location of the $\nodeind$-th sensor, where $\regiondim=2$ or
3. Similarly, let $\locvec\in \rfield^\regiondim$ denote an arbitrary
location in the area of interest.
\myitem\cmt{field}Suppose that the goal is to estimate a spatial field
$\signalfun:\rfield^\regiondim\rightarrow\rfield$ given the
measurements in \eqref{eq:obsvecdef}, where $\desiredvec\define
[\signalfun(\sensorlocvec_1),\ldots,\signalfun(\sensorlocvec_\nodenum)]\transpose$.
\end{myitemize}%

\cmt{Parametric model}Oftentimes, the physics of the problem directly
provides a linear parametric expansion for $\signalfun$. 
\begin{myitemize}%
\myitem\cmt{examples}
\begin{myitemize}
\myitem\cmt{diffusion}For example,
in the case of a diffusion field, such as a temperature field, one has
\begin{align}
\label{eq:diffusion}
\signalfun(\locvec) 
=&\sum_{i=1}^{\subspacedim} \frac{ \exp\left\{-  \|\locvec -
\sourcelocvec{i}\|_2^2 /(2 \sigma_i^2)\right\}}{2 \pi \sigma_i^2}
\arev{\tilde\coord_i}
\end{align}
\arev{for some coefficients $\tilde\coord_i$}, where $\sourcelocvec{i}$ is the location of the $i$-th source and
the parameters $\{\sigma_i^2\}_{i=1}^r$ are related to the diffusivity
of the medium.
\myitem\cmt{RSS}In some cases governed by a wave
equation, as occurs in wireless communications (see e.g.~\cite{romero2017spectrummaps}),
 $\signalfun$ may admit an expansion in terms of  Cauchy bells~\cite{barbarossa2009projection}: 
\begin{align}
\label{eq:cauchy}
\signalfun(\locvec) = \sum_{i=1}^{\subspacedim} \frac{1}{1 + { \|\locvec -
\sourcelocvec{i}\|_2^2}/{\sigma_i^2}}
\arev{\tilde\coord_i}.
\end{align}
\end{myitemize}%
\myitem\cmt{Find basis}\arev{To obtain $\uset$, %
evaluate \eqref{eq:diffusion} or \eqref{eq:cauchy} at the
sensor locations and collect the coefficients that multiply each
$\tilde\coord_i$ to form the vector
$\tilde \uvec_i\in \rfield^\nodenum$, $i=1,\ldots,r$. 
This yields the expansion
$\desiredvec=\sum_{i=1}^\subspacedim \tilde\uvec_i\tilde\coord_i$. Finally, 
orthonormalize  $\{\tilde \uvec_i\}_{i=1}^{r}$.}

\end{myitemize}%

\cmt{Approximations model}\arev{This approach applies when   $\signalfun$
satisfies a parametric expansion as in \eqref{eq:diffusion}
or \eqref{eq:cauchy} and this expansion is known. However, it is
often the case that the form of the expansion is known but it contains
unknown parameters, the form of the expansion is not even known,
or the field does not even admit a linear expansion but it
approximately does. In these situations, one may still pursue a linear
inference approach by capitalizing on some form of smoothness that the
target field exhibits across space.}
\begin{myitemize}%
\myitem\cmt{approximations}%
\arev{For instance, $\signalfun$} can be approximately bandlimited, which means
that $\signalfun$ can be reasonably approximated by a reduced number
$\subspacedim$ of Fourier or \emph{discrete cosine transform} (DCT) basis functions.
\begin{myitemize}%
\myitem\cmt{bandlimited}In the latter case, upon letting
$\locvec\define[\loc_1,\loc_2]\transpose$, one can write
\begin{align}
 \signalfun(\locvec) \approx& \sum_{i_1=0}^{\subspacedim_1 -1} \sum_{i_2=0}^{\subspacedim_2 -1} \coord_{i_1,i_2}  & \\
& \times \cos \left ( \frac{\pi}{\length_1} i_1 \left ( \loc_1 + \frac{1}{2} \right )   \right ) \; 
\cos \left ( \frac{\pi}{\length_2} i_2 \left ( \loc_2 + \frac{1}{2} \right )   \right ) \nonumber.
\end{align}
Here, $\length_1$ and
 $\length_2$ denote the length  along the 1st and 2nd dimensions of the region where $\signalfun$ is
 defined. The vector $\coordvec$ defined in
 Sec.~\ref{sec:projformulation} can be recovered by stacking the
 $\subspacedim_1\subspacedim_2$ coefficients $\{\coord_{i_1,i_2}\}$,
 whereas $\uset$ can be found as described earlier in this section.

\myitem\cmt{Others}Besides Fourier or DCT bases, one may pursue
 approximations based on any other collection of 
 basis functions such as conventional polynomials, discrete prolate
 spheroidal functions, and wavelets.
\end{myitemize}%
\myitem\cmt{tradeoff}Note that in any approximation of this kind there
 is a fundamental variance-bias trade-off; see
 e.g.~\cite[Ch. 3.4]{cherkassky2007}. To see this, note that the \emph{signal-to-noise ratio} after
 projection in the model
 \eqref{eq:obsvecdef} is given by
 $\|\projmat\desiredvec\|^2_2/\expectednb[\|\projmat\noise\|_2^2]$. If
 $\noise$ has zero mean and  covariance matrix
 $\expectednb[\noise \noise\transpose]=\noisepow\bm I_\nodenum$, then 
$\expectednb[\|\projmat\noise\|_2^2]=\expectednb[\|\uparmat\uparmat\transpose\noise\|_2^2]=\noisepow\trnb[\projmat]=\noisepow\subspacedim$. Thus, although a basis with a larger $\subspacedim$ may  capture more signal
 energy $\|\projmat\desiredvec\|^2_2$, the power of the noise
 component in $\hat\desiredvec$ is also
 increased.

\end{myitemize}%

\subsection{Graph Filters}
\label{sec:graphfilters}

\cmt{overview}This section briefly reviews the notion of graph
 filters~\cite{shuman2013emerging,sandryhaila2014bigdata}, which
 constitute a central part of the proposed algorithms.
\cmt{Graph signal}In this context, vector $\bm
z\define[z_1,\ldots,z_N]\transpose$ is referred to as a \emph{graph signal},
which emphasizes the fact that the entry $z_n$ is stored at the 
$n$-th node.

\cmt{graph filter}A graph filter involves two steps, as described
next.
\begin{myitemize}%
\myitem\cmt{first: local aggregations}In the first step,
\begin{myitemize}%
\myitem\cmt{overview}a finite
sequence of graph signals
$\{\obsvec\itnot{\itind}\}_{\itind=0}^{\itnum}$, where $\obsvec\itnot{\itind}\define[\obsentry_1\itnot{\itind},\ldots,
\obsentry_\nodenum\itnot{\itind}]^\top$, is collaboratively
obtained by the network through a sequence of $\itnum$ \emph{local
data exchange rounds}, or just \emph{local exchanges} for short, where
$\obsvec\itnot{0}\define\obsvec$ is the graph signal to filter.
\myitem\cmt{l-th round}At the $\itind$-th round,  each node sends
its $\obsentry\itnodenot{\itind-1}{\nodeind}$ to its neighbors and
computes a linear combination of the entries
$\{\obsentry\itnodenot{\itind-1}{\nodeind'}\}_{\nodeind'\in \neighborhood{\nodeind}}$
that it receives from them. Specifically, the next graph signal is
obtained as $\obsentry\itnodenot{\itind}{\nodeind}
= \sum_{\nodeind'\in \neighborhood{\nodeind}}\shiftentry\nodenodenot{\nodeind}{\nodeind'} \obsentry\itnodenot{\itind-1}{\nodeind'},~\nodeind=1,\ldots,\nodenum,$
where  $\shiftentry\nodenodenot{\nodeind}{\nodeind'}$ 
is the
 coefficient corresponding to the linear aggregation that takes place
 between nodes ${\nodeind}$ and ${\nodeind'}$.
 \myitem\cmt{shift operator}
\begin{myitemize}%
\myitem\cmt{matrix form}By letting $\shiftentry\nodenodenot{\nodeind}{\nodeind'}=0$
 whenever $\nodeind'\notin\neighborhood{\nodeind}$, one can
 equivalently write  $\obsentry\itnodenot{\itind}{\nodeind}
 = \sum_{\nodeind'=1}^\nodenum\shiftentry\nodenodenot{\nodeind}{\nodeind'} \obsentry\itnodenot{\itind-1}{\nodeind'}$
 or, in matrix form, $\obsvec\itnot{\itind}
 = \shiftmat\obsvec\itnot{\itind-1}$, where
 $\shiftmat\in \rfield^{\nodenum\times \nodenum}$ is given by
 $(\shiftmat)_{\nodeind,\nodeind'}=\shiftentry_{\nodeind,\nodeind'},~\nodeind,\nodeind'=1,\ldots,\nodenum$.
\myitem\cmt{term}In the graph signal processing literature, 
the matrix $\shiftmat$ is usually referred to as  \emph{shift
 matrix}~\cite{segarra2017graphfilterdesign}.
\myitem\cmt{def}More generally, an $\nodenum\times\nodenum$ matrix
 $\shiftmat$
 is said to be a shift matrix over the graph
 $\graph\define(\nodeset,\edgeset)$ if $(\shiftmat)_{\nodeind,\nodeind'}=0$
 for all $(\nodeind,\nodeind')\notin \edgeset$. The set of all possible shift
 matrices over   $\graph$
 will be denoted as $\shiftmatset$.
\myitem\cmt{examples}Examples of  matrices in  $\shiftmatset$ include the adjacency
 and Laplacian matrices of $\graph$~\cite{segarra2017graphfilterdesign}. 
  \myitem\cmt{operator}Associated with the shift matrix is the \emph{shift
  operator}, defined as the function
  $\obsvec \mapsto \shiftmat \obsvec$.
\end{myitemize}%
\myitem\cmt{explicit form}Notice that, upon recursively applying  $\obsvec\itnot{\itind}
 = \shiftmat\obsvec\itnot{\itind-1}$,  one can write\footnote{Throughout the paper,  $\bm A^0$ for a
 square matrix $\bm A$ denotes the identity matrix of the same size as
 $\bm A$, regardless of whether $\bm A$ is invertible.}
 $\obsvec\itnot{\itind}=\shiftmat^\itind \obsvec$,
 $\itind=0,\ldots, \itnum$.
\end{myitemize}%

\myitem\cmt{second: linear combination}In the second step of the graph filter,
all nodes linearly combine the iterates in the first step.
 Specifically, the following graph signal is computed:
\begin{align}
\label{eq:outputgfilter}
\projobsvec
= \sum_{\itind=0}^{\itnum}\filtercoef_\itind \obsvec\itnot{\itind}
= \sum_{\itind=0}^{\itnum}\filtercoef_\itind
 \shiftmat^{\itind}\obsvec
\end{align}
where $\filtercoef_\itind\in\rfield,~{\itind=0},\ldots,{\itnum},$ are
the so-called filter coefficients.

\myitem\cmt{Operator}%
\begin{myitemize}%
\myitem\cmt{graph filter}The operation in \eqref{eq:outputgfilter} can be
generically expressed as $\obsvec \mapsto \filtermat\obsvec$, where
\begin{equation} \label{filterH}
{\filtermat}\define\sum_{\itind=0}^{\itnum}\filtercoef_\itind{\shiftmat}^{\itind}, 
\end{equation}
and is commonly referred to as an order-$\itnum$ \emph{graph filter}.
\myitem\cmt{cayley-hamilton}An important implication of  the
Cayley-Hamilton Theorem \cite{horn} is that for any order-$\itnum$
graph filter $\filtermat$ with $\itnum\geq \nodenum$, there exists an
order-($\nodenum-1$) graph filter $\filtermat'$ with shift matrix $\shiftmat$
and coefficients $\filtercoef_\itind'$ such that
$\filtermat=\filtermat'$. This establishes an upper bound on the order
and, therefore, the number of local
exchanges required to apply a graph filter. Thus,   one can
assume w.l.o.g. that
$\itnum\leq\nodenum-1$.
\end{myitemize}%
\end{myitemize}%

\subsection{Asymptotic Decentralized Projections}
\label{sec:barbarossa}

\begin{myitemize}%
\myitem\cmt{description}A decentralized scheme for subspace projection
was proposed in~\cite{barbarossa2009projection}. There, a matrix
$\shiftmat$ is found such that (i) $\shiftmat\in \shiftmatset$ and
(ii) $\lim_{\itind\rightarrow\infty} \shiftmat^\itind
= \projmat$. Then, the nodes compute the sequence
$\{\obsvec\itnot{\itind}, \itind=0, 1,\ldots\}$, where
$\obsvec\itnot{\itind}=\shiftmat^\itind \obsvec$. This
constitutes the infinite counterpart of the first step in a graph
filter; cf.  Sec.~\ref{sec:graphfilters}. Due to (ii), it follows that
$\lim_{\itind\rightarrow\infty}  \obsvec\itnot{\itind}
= \projmat\obsvec$, as desired.
\myitem\cmt{strengths}The main strength of this method is its
simplicity, since each node just needs to store one coefficient for
each neighbor and the same operation is repeated over and
over. 
\myitem\cmt{limitations}A limitation is that 
the number of local exchanges required to attain a target error
$||\obsvec\itnot{\itind}- \projmat \obsvec||$ is generally high since this
approach only provides asymptotic convergence. Furthermore, the set of
graphs for which (i) and (ii) can be simultaneously satisfied is
considerably limited; see Sec.~\ref{sec:sim}
and \cite{camaro2013reducing}.

\end{myitemize}%

\section{Exact Projection Filters} \label{sec:exactproj}

\cmt{overview}This section proposes an algorithm to find
a graph filter that yields a subspace projection in an approximately
minimal number of iterations. To this end, Secs.~\ref{sec:minorder}
and~\ref{sec:polfeas}
formalize the problem and 
characterize the set of feasible shift matrices for a given
$\mathcal{E}$ and $\uparmat$. Subsequent sections introduce an
optimization methodology  to approximately minimize the order of the
filter, i.e. the number of communication steps needed to obtain the
projection via graph filtering.

\subsection{Minimum-order Projection Filters}
\label{sec:minorder}

\cmt{feasibility problem}
\begin{myitemize}%
\myitem\cmt{goal}To solve the subspace projection problem formulated
in Sec.~\ref{sec:projformulation} with a graph filter,  one could think of
finding a shift matrix $\shiftmat\in \shiftmatset$ and a set of
coefficients $\{\filtercoef_\itind\}_{\itind=0}^{\itnum}$ such that
$\projmat \obsvec
= \sum_{\itind=0}^{\itnum}\filtercoef_\itind \shiftmat^{\itind}\obsvec$
for all $\obsvec\in\rfield^\nodenum$ or, equivalently, such that
$\projmat
= \sum_{\itind=0}^{\itnum}\filtercoef_\itind \shiftmat^{\itind}$. Since
$\projmat$ is symmetric, it will be assumed that $\shiftmat$ is also
symmetric.
\myitem\cmt{poly feasible}To assist in this quest, consider the following
definition:
\begin{mydef}
\thlabel{def:polfeas}
Let $\dummymat\in\rfield^{\nodenum\times\nodenum}$ be an arbitrary
(not necessarily a projection) matrix.  A symmetric matrix (not
necessarily a shift matrix) $\shiftmat\in{\mathbb{R}}^{\nodenum\times{\nodenum}}$
is polynomially feasible to implement the operator
$\obsvec \mapsto\dummymat\obsvec$ if there exist $\itnum$ and
$\filtercoefvec\define[\filtercoef_0,\ldots,\filtercoef_\itnum]\transpose$
such that $\sum_{\itind=0}^{\itnum}\filtercoef_\itind\shiftmat^{\itind}=\dummymat$.
\end{mydef}
For a given $\dummymat$, the set of all polynomially feasible matrices
$\shiftmat\in \rfield^{\nodenum\times \nodenum}$  will be denoted as
$\polfeasmatset{\dummymat}$.
Except for the tip, this set is a cone since $ \shiftmat\in\polfeasmatset{\dummymat}
$ implies
$\auxconst \shiftmat\in\polfeasmatset{\dummymat} $ for all $\auxconst\neq 0$.

\myitem\cmt{coefficients straightfwd if shift known}Note that given a matrix in
$\polfeasmatset{\dummymat}$, it is straightforward to obtain 
$\filtercoefvec$
such that $\sum_{\itind=0}^{\itnum}\filtercoef_{\itind}\shiftmat^{l}=\dummymat$; see e.g.~\cite{segarra2017graphfilterdesign} and 
Sec.~\ref{sec:polfeas}.
\myitem\cmt{formulation}One may then consider the following \emph{feasibility problem}:
\begin{myitemize}%
\myitem\cmt{given}Given a graph $\graph\define(\nodeset,\edgeset)$ and a
projection matrix $\projmat$, 
\myitem\cmt{find}find a polynomially feasible shift matrix, i.e.,
find $\shiftmat \in \shiftmatset \cap \polfeasmatset{\projmat}$.
\end{myitemize}%
\myitem\cmt{feasibility}When this problem admits a solution, we say
that \emph{there exists an exact projection filter} for implementing
$\projmat$ on $\graph$. Whether this is the case depends on $\projmat$
and $\graph$. For example, if $\graph$ is too sparse, then $\projmat$
will not be computable as a graph filter. In the extreme case where
$\graph$ is fully disconnected, then the only computable projection is
$\projmat = \identity_\nodenum$. Conversely, when $\graph$ is fully
connected, then all projections can be computed as a graph
filter. Furthermore, given  that $ \shiftmatset $ is a
subspace and that $ \shiftmat\in\polfeasmatset{\dummymat} $ implies
$\auxconst \shiftmat\in\polfeasmatset{\dummymat} $ for all
$\auxconst\neq 0$, it is easy to see that such a feasibility problem
either has no solution or has infinitely many.

\end{myitemize}%
\cmt{fast filter}
\begin{myitemize}%
\myitem\cmt{informally}When feasible shifts exist, it is reasonable to
seek the $\shiftmat$ that minimizes the number of local exchanges
$\itnum$.
\myitem\cmt{formally}For arbitrary  matrices $\shiftmat$ and $\dummymat$, define $\orderfun{\dummymat}(\shiftmat)$ 
as the minimum $\itnum$ such that $\dummymat
= \sum_{\itind=0}^{\itnum}\filtercoef_\itind \shiftmat^{\itind}$ for
some $\{\filtercoef_\itind\}_{\itind=0}^\itnum$. In view of the bound
dictated by the Cayley-Hamilton Theorem (Sec.~\ref{sec:graphfilters}),   $\orderfun{\dummymat}(\shiftmat)$ can
be viewed as a function
$\orderfun{\dummymat}:\polfeasmatset{\dummymat}\rightarrow\{0,\ldots,\nodenum-1\}$.
\myitem\cmt{formulation}Given a projection matrix $\projmat$, the
problem of finding the shift matrix associated with the minimum-order
filter can therefore be  formulated as: 
\leqnomode
\begin{align}
\tag{P1}\label{prob:exact}~~\minimize_{\shiftmat}~~&\orderfun{\projmat}(\shiftmat)\\
 \st~~& \shiftmat \in \shiftmatset \cap \polfeasmatset{\projmat}.
 \nonumber
\end{align}
\reqnomode 
\end{myitemize}%
\cmt{approximate subspace proj}Conversely, when
$\shiftmatset \cap \polfeasmatset{\projmat}=\emptyset$, there exists
no graph filter capable of implementing $\projmat$. For those cases,
Sec.~\ref{sec:approximate} describes how to find a graph filter that
approximates $\projmat$.

\subsection{Polynomially Feasible Matrices}
\label{sec:polfeas}

\cmt{overview} To assist in solving \eqref{prob:exact}, this section presents an
algebraic characterization of the set $\polfeasmatset{\projmat}$ of
polynomially feasible matrices. Recall that the matrices in this set
need not be shift matrices, that is, they need not satisfy the
topology constraints determined by the edge set $\mathcal{E}$.

\cmt{Prefeasible shifts}
\begin{myitemize}%
\myitem\cmt{condition}

\begin{mylemma}\thlabel{prop:prefeasible}
Let $\uparmat\in\rfield^{\nodenum\times\subspacedim}$ with orthonormal
columns be given and let $\projmat =\uparmat\uparmat^{\top}$.  Let
also $\upermat\in\rfield^{\nodenum\times{\nodenum-\subspacedim}}$ with
orthonormal columns satisfy
${\colspan(\upermat)=\colspan^{\perp}(\uparmat)}$. If
$\shiftmat\in\polfeasmatset{\projmat}$, then there exist symmetric
matrices $ \fparmat\in\rfield^{{\subspacedim}\times{\subspacedim}}$
and
$ \fpermat\in\rfield^{\nodenum-\subspacedim\times{\nodenum-\subspacedim}}$
such that:
\begin{align}\label{eq:def:cpref}
\shiftmat=\begin {bmatrix}\uparmat& \upermat\end {bmatrix}\left[	\begin{matrix} \fparmat& \bm{0}\\ \bm{0}&\fpermat\end{matrix} \right]\begin {bmatrix}\uparmat^{\top}\\ \upermat^{\top}\end {bmatrix}.
\end{align} 
\end{mylemma}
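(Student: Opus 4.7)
The plan is to exploit the fact that any matrix that can be written as a polynomial in $\shiftmat$ must commute with $\shiftmat$. Since $\shiftmat\in\polfeasmatset{\projmat}$, there exist coefficients $\{\filtercoef_\itind\}$ and an order $\itnum$ with $\projmat=\sum_{\itind=0}^{\itnum}\filtercoef_\itind\shiftmat^\itind$, and this immediately yields $\projmat\shiftmat=\shiftmat\projmat$.

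From this commutation relation I would next argue that $\range{\uparmat}$ and $\range{\upermat}$ are both $\shiftmat$-invariant. For the first subspace, pick any $\uvec\in\range{\uparmat}$ so that $\projmat\uvec=\uvec$; then $\shiftmat\uvec=\shiftmat\projmat\uvec=\projmat\shiftmat\uvec$, which shows $\shiftmat\uvec\in\range{\projmat}=\range{\uparmat}$. The same argument (using $\projmat\bm w=\bm 0$ for $\bm w\in\range{\upermat}=\range{\uparmat}^\perp$) gives $\shiftmat\bm w\in\range{\upermat}$. Invariance of both subspaces guarantees the existence of matrices $\fparmat$ and $\fpermat$ with $\shiftmat\uparmat=\uparmat\fparmat$ and $\shiftmat\upermat=\upermat\fpermat$, which can be stacked as $\shiftmat[\uparmat,\upermat]=[\uparmat,\upermat]\,\mathrm{bdiag}(\fparmat,\fpermat)$.

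To conclude, I would use that $[\uparmat,\upermat]$ is square and orthogonal (since $\uparmat$ has orthonormal columns and $\upermat$ is an orthonormal basis of $\range{\uparmat}^\perp$), so right-multiplying by $[\uparmat,\upermat]^\top$ yields the desired factorization \eqref{eq:def:cpref}. Symmetry of $\fparmat$ and $\fpermat$ then follows from the identities $\fparmat=\uparmat^\top\shiftmat\uparmat$ and $\fpermat=\upermat^\top\shiftmat\upermat$, combined with the assumed symmetry of $\shiftmat$.

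The only subtle point, and the place I would be most careful, is ensuring the zero off-diagonal blocks: a single invariance (say of $\range{\uparmat}$) would only yield a block-triangular form, not block-diagonal. It is the simultaneous invariance of the complement $\range{\upermat}$ that eliminates both off-diagonal blocks, and this complement-invariance is where the symmetry of $\shiftmat$ is essential. Everything else is a direct consequence of the commutation $\projmat\shiftmat=\shiftmat\projmat$, so I do not anticipate any real obstacle beyond presenting this step cleanly.
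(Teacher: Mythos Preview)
Your proof is correct and takes a genuinely different route from the paper. The paper works through the eigendecomposition $\shiftmat=\topologyconstraintmat\bm{\Lambda}\topologyconstraintmat^{\top}$: since $\sum_\itind\filtercoef_\itind\shiftmat^\itind=\projmat$ has eigenvalues $1$ (multiplicity $\subspacedim$) and $0$ (multiplicity $\nodenum-\subspacedim$), the first $\subspacedim$ eigenvectors of $\shiftmat$ span $\range{\uparmat}$ and the remaining ones span $\range{\upermat}$, which then yields \eqref{eq:def:cpref} with $\fparmat$ and $\fpermat$ built explicitly from the eigenvalues and the change-of-basis matrices. Your argument instead bypasses the eigendecomposition entirely by extracting the commutation relation $\projmat\shiftmat=\shiftmat\projmat$ and using it to show invariance of $\range{\uparmat}$ and $\range{\upermat}$ directly. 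This is more elementary and arguably cleaner; the paper's route has the minor advantage of making the eigenstructure of $\fparmat$ and $\fpermat$ explicit, which feeds directly into the subsequent Vandermonde analysis, but your version reaches the stated lemma with less machinery.

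One small correction to your closing remark: the invariance of $\range{\upermat}$ does \emph{not} rely on the symmetry of $\shiftmat$. Your own argument already shows this: for $\bm w\in\range{\upermat}$ you have $\projmat\shiftmat\bm w=\shiftmat\projmat\bm w=\bm 0$, so $\shiftmat\bm w\in\ker\projmat=\range{\upermat}$, using only commutation. Commutation with a projection automatically gives invariance of both the range and the kernel, hence the block-diagonal (not merely block-triangular) form. Symmetry of $\shiftmat$ is needed only where you actually use it, namely to conclude that $\fparmat=\uparmat^\top\shiftmat\uparmat$ and $\fpermat=\upermat^\top\shiftmat\upermat$ are symmetric.
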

\begin{proof}
See Appendix~\ref{proof:prefeasible}.
\end{proof}

\myitem\cmt{indep. choice basis}Note that matrices  $\fparmat$
and $\fpermat$ satisfying  \eqref{eq:def:cpref}
exist regardless\footnote{Note that the algorithms in this paper
produce the same filters for all matrices $\uparmat$ that span a given
signal subspace $\range{\uparmat}$; likewise for $\upermat$. Thus, the obtained
filters only depend on the signal subspace and not on the specific
choice of the basis. This property is what bypasses the difficulty
encountered in~\cite[eq. (20)]{segarra2017graphfilterdesign}, which
will typically be infeasible for graphs with more than $\nodenum$
missing edges.  } of the
choice of $\uparmat$ and $\upermat$ as long as the columns of $\uparmat$ and $\upermat$ 
respectively constitute an orthonormal basis for the signal subspace  $\range{\uparmat}$  and its
orthogonal complement $\colspan^{\perp}(\uparmat)$. 
\myitem\cmt{converse}As seen later, the converse of \thref{prop:prefeasible} does not hold. 

\myitem\cmt{implications: eigenstructure}To understand the implications of \thref{prop:prefeasible},
\begin{myitemize}
\myitem\cmt{orth decomposition}rewrite  \eqref{eq:def:cpref} as
\begin{align}\label{eq:def:prefparperp}
\shiftmat = \uparmat\fparmat\uparmat^{\top}
+ \upermat\fpermat\upermat^{\top} = \shiftparmat + \shiftpermat, 
\end{align}
where $\shiftparmat\define \uparmat\fparmat\uparmat\transpose$ and
$\shiftpermat \define \upermat\fpermat\upermat\transpose$ are 
symmetric matrices whose column spans are respectively contained in the signal
subspace and its orthogonal complement. Thus, they clearly satisfy
$\shiftparmat\transpose\shiftpermat=\shiftpermat\transpose\shiftparmat=\bm 0$.
\myitem\cmt{not nec. shifts}Note that even if $\shiftmat$ is a shift
matrix, i.e. $\shiftmat\in\shiftmatset$, matrices
$\shiftparmat$ and $\shiftpermat$ may not be shift matrices. 
\myitem\cmt{eigendecomposition}
Consider now the eigendecompositions
$\fparmat=\fparevecmat\evalparmat\fparevecmat\transpose,\fpermat=\fperevecmat\evalpermat\fperevecmat\transpose$ 
for orthogonal $ \fparevecmat \in \mathbb{R}^{\subspacedim \times
\subspacedim}, \fperevecmat\in \mathbb{R}^{\nodenum-\subspacedim \times \nodenum-\subspacedim}$ and diagonal $\evalparmat \in \mathbb{R}^{\subspacedim \times \subspacedim},\evalpermat \in \mathbb{R}^{\nodenum-\subspacedim \times\nodenum-\subspacedim}$.
Then, \eqref{eq:def:cpref} can be rewritten as:
\begin{align}\label{eq:def:prefevd}
\shiftmat=\begin
{bmatrix}\uparmat\fparevecmat& \upermat\fperevecmat\end
{bmatrix}
\left[	\begin{matrix} \evalparmat& \bm{0}\\ \bm{0}&\evalpermat\end{matrix} \right]
\begin {bmatrix}\fparevecmat\transpose\uparmat^{\top}\\ \fperevecmat\transpose\upermat^{\top}\end {bmatrix}.
\end{align} 
This is clearly an eigenvalue decomposition of $\shiftmat$. It further
shows that
$\mathrm{evals}(\shiftmat)=\mathrm{evals}(\fparmat)\cup\mathrm{evals}(\fpermat)$. In
view of \eqref{eq:def:prefevd},
\thref{prop:prefeasible} establishes that  any 
$\shiftmat\in \polfeasmatset{\projmat}$
has exactly $\subspacedim$ orthogonal eigenvectors (the
columns of $\uparmat\fparevecmat$) in the
signal subspace and $\nodenum-\subspacedim$ (the
columns of $\upermat\fperevecmat$) in its orthogonal complement.

\end{myitemize}%

\myitem\cmt{definition}The following definition builds
upon  \thref{prop:prefeasible} to  introduce 
a necessary condition  for feasibility of a shift matrix 
that will prove instrumental in subsequent sections. 
\begin{mydef}
\thlabel{def:prefeas}
Let $\uparmat$ and $\upermat$ be given. 
If  $\shiftmat\in\rfield^{\nodenum\times{\nodenum}}$ is such that it
satisfies \eqref{eq:def:cpref} for some symmetric
$\fparmat\in{\mathbb{R}}^{{\subspacedim}\times{\subspacedim}}$ and
$\fpermat\in{\mathbb{R}}^{\nodenum-\subspacedim\times{\nodenum-\subspacedim}}$, then $\shiftmat$
is said to be pre-feasible.
\end{mydef}
\cmt{indep choice bases}Note again that this definition is independent
of the choice of  $\uparmat$ and $\upermat$ so long as their columns
respectively form a basis for the signal subspace and its
orthogonal complement.
\myitem\cmt{notation}Given
$\projmat\in \rfield^{\nodenum\times\nodenum}$, the set of all
pre-feasible matrices in $\rfield^{\nodenum\times\nodenum}$ will be
denoted as $\prefeasmatset{\projmat}$.
\end{myitemize}%

\cmt{Feasible shifts}%
\begin{myitemize}
\myitem\cmt{feas $\subset$ prefeas}Observe that all
matrices in  $\polfeasmatset{\projmat}$ are also 
in $\prefeasmatset{\projmat}$.
\myitem\cmt{prefeas $\subsetneq$ feas}However, 
not all matrices in $\prefeasmatset{\projmat}$ are in
$\polfeasmatset{\projmat}$. Trivial examples include $\shiftmat
= \identity_\nodenum$ (recall that $\subspacedim<\nodenum$) and
$\shiftmat=\bm{0}$. The rest of this section will characterize the
matrices in $\prefeasmatset{\projmat}$ that are in
$\polfeasmatset{\projmat}$.
\myitem\cmt{which prefeasible mats are feasible}In particular, it will be seen that any pre-feasible
matrix $\shiftmat$ where $\evalparmat$ and $\evalpermat$ share at
least an
eigenvalue is not polynomially feasible.  To this end, note
from \thref{def:polfeas} and \eqref{eq:def:prefevd} that any
pre-feasible matrix must satisfy
\begin{align}
\label{eq:polfeasprefeas}
\projmat= \uparmat\fparevecmat \Big[\sum_{\itind=0}^{\itnum}\filtercoef_\itind \evalparmat^{l}\Big]\fparevecmat^{\top} \uparmat^{\top} +\upermat \fperevecmat \Big[\sum_{\itind=0}^{\itnum}\filtercoef_\itind \evalpermat^{l}\Big]\fperevecmat^{\top} \upermat^{\top}
\end{align}
for some  $\{\filtercoef_\itind\}_{\itind=0}^{\itnum}$  to be
polynomially feasible.  Multiplying both sides of \eqref{eq:polfeasprefeas} on the left by 
$\uparmat^\top$ and on the right by
$\uparmat$, it follows that $\fparevecmat\big[{\sum_{\itind=0}^{\itnum}\filtercoef_\itind\ \evalparmat^{\itind}}\big]\fparevecmat^\top = \identity_\subspacedim$ or, equivalently,
$\sum_{\itind=0}^{\itnum}\filtercoef_\itind\ \evalparmat^{\itind}=\identity_\subspacedim$. Likewise, multiplying both sides of  \eqref{eq:polfeasprefeas} on the left by $\upermat^\top$ and on the right by $\upermat$, it follows that
${\sum_{\itind=0}^{\itnum}\filtercoef_\itind\ \evalpermat^{\itind}=\bm
0}$. Arranging these two conditions in matrix form yields:
 \begin{equation} \label{eq:vander}
  \begin{bmatrix}\one_\subspacedim \\  \bm 0_{\nodenum-\subspacedim}\end{bmatrix} = 
\begin{bmatrix}
    1       & \lambda_{1}  &\dots & \lambda_{1}^{\itnum} \\
    \vdots  & \vdots      &\ddots & \vdots \\
    1       & \lambda_{\subspacedim} &\dots
    & \lambda_{\subspacedim}^{\itnum} \\
    1       & \lambda_{\subspacedim+1} &\dots & \lambda_{\subspacedim+1}^{\itnum} \\ 
    \vdots & \vdots  & \ddots & \vdots \\
    1       & \lambda_{\nodenum}  &\dots & \lambda_{\nodenum}^{\itnum}
\end{bmatrix}
 \begin{bmatrix}
  \filtercoef_0 \\ \filtercoef_1\\ \vdots\\ c_{\itnum}
 \end{bmatrix},
 \end{equation}
\noindent where $\lambda_1,\ldots,\lambda_\nodenum $ are such that
  $\evalparmat\triangleq \text{diag}\{\lambda_1,\ldots,\lambda_\subspacedim \}$
  and
  $\evalpermat\hspace{-2mm}\triangleq \text{diag}\{\lambda_{\subspacedim+1},\ldots,\lambda_\nodenum \}.$
  Vandermonde systems such as  \eqref{eq:vander}  frequently arise when designing graph
  filters; see
  e.g.~\cite{segarra2017graphfilterdesign,sandryhaila2014finitetime}. With
  the appropriate definitions, it can be expressed in matrix
  form as:
\begin{equation} \label{eq:vander2}
\boldsymbol \lambda_\projmat = \vandermondemat \filtercoefvec ,
\end{equation}
\noindent which provides a means to obtain the coefficients
  $\{\filtercoef_\itind\}_{\itind=0}^{\itnum}$ when \eqref{eq:vander2} admits a solution. 
  
\begin{myitemize}%
\myitem\cmt{}To understand when the latter is the case, assume w.l.o.g. that $\itnum=\nodenum-1$ since the existence
of a solution to \eqref{eq:vander} for some $\itnum$ implies its
existence for $\itnum=\nodenum-1$.
\myitem\cmt{}Since $\vandermondemat$ is a square Vandermonde matrix, any two rows
corresponding to distinct eigenvalues are linearly independent. 
\myitem\cmt{}Looking at the left-hand side of \eqref{eq:vander}, it is easy to
see that the system \eqref{eq:vander} admits a solution iff
$\evalparmat$ and $\evalpermat$ do not share eigenvalues.
\end{myitemize}%

\myitem\cmt{statement}This conclusion can be combined
with \thref{prop:prefeasible} as follows:
\begin{mytheorem}
\thlabel{prop:disjointevals}
Let $\shiftmat\in\rfield^{\nodenum\times{\nodenum}}$ be
symmetric. Then, $\shiftmat\in\polfeasmatset{\projmat}$ iff both
the following two conditions hold:
\leqnomode
\begin{align}
\tag{C1}&\shiftmat\in\prefeasmatset{\projmat}\text{, i.e., it
satisfies \eqref{eq:def:cpref}} \\&\text{ for some symmetric  }\fpermat\text{ and
  }\fparmat,\nonumber\\ \tag{C2} & \mathrm{evals}(\fparmat)\cap\mathrm{evals}(\fpermat)=\emptyset.
\end{align}
\reqnomode

\end{mytheorem}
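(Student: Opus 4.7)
The plan is to establish both implications separately, using \thref{prop:prefeasible} together with the Vandermonde-consistency analysis carried out in the paragraphs just above the theorem statement. Necessity of (C1) is immediate from \thref{prop:prefeasible}; the rest of the work is to relate (C2) to the solvability of the linear system in \eqref{eq:vander}.

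For the forward direction, assume $\shiftmat\in\polfeasmatset{\projmat}$. Then \thref{prop:prefeasible} yields (C1) directly, and in particular the eigendecomposition in \eqref{eq:def:prefevd}. Substituting this decomposition into $\projmat=\sum_{\itind=0}^{\itnum}\filtercoef_\itind \shiftmat^\itind$ and multiplying on the left and right by $\uparmat\transpose,\uparmat$ (respectively $\upermat\transpose,\upermat$) decouples the identity into $\sum_\itind \filtercoef_\itind\evalparmat^\itind=\identity_\subspacedim$ and $\sum_\itind \filtercoef_\itind\evalpermat^\itind=\zeromat$, which is precisely the Vandermonde system \eqref{eq:vander}. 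If (C2) failed and some eigenvalue $\lambda$ appeared in both $\evalparmat$ and $\evalpermat$, then two rows of $\vandermondemat$ would be identical while the corresponding entries of $\boldsymbol\lambda_\projmat$ would be $1$ and $0$; this is inconsistent, contradicting the existence of $\filtercoefvec$. Hence (C2) must hold.

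For the converse, assume (C1) and (C2). Take $\itnum=\nodenum-1$ without loss of generality, as already argued in the text. Extract from \eqref{eq:vander} a \emph{reduced} Vandermonde system by keeping one row per distinct eigenvalue of $\shiftmat$; by (C2) the eigenvalues coming from $\evalparmat$ and from $\evalpermat$ are disjoint, so the set of distinct values has size at most $\nodenum$, and the resulting square Vandermonde matrix with distinct nodes is invertible. The reduced right-hand side is well defined because any duplicated row of $\vandermondemat$ can only come from a repeated eigenvalue within $\evalparmat$ or within $\evalpermat$, and in either block all entries of the right-hand side are equal (all $1$s or all $0$s). Solving the reduced system yields $\filtercoefvec$ that also satisfies the full system \eqref{eq:vander}, equivalently $\sum_\itind \filtercoef_\itind\evalparmat^\itind=\identity_\subspacedim$ and $\sum_\itind \filtercoef_\itind\evalpermat^\itind=\zeromat$. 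Reassembling via \eqref{eq:def:prefevd} gives $\sum_\itind \filtercoef_\itind \shiftmat^\itind=\projmat$, so $\shiftmat\in\polfeasmatset{\projmat}$.

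The main subtlety I expect is the sufficiency step: one must be careful that having repeated eigenvalues \emph{inside} $\evalparmat$ or $\evalpermat$ is not an obstruction, even though the full Vandermonde in \eqref{eq:vander} becomes singular in that case. The key observation — which (C2) buys us — is that repeated rows in $\vandermondemat$ always come paired with equal entries on the right-hand side, so collapsing to the reduced system is legitimate. Everything else is bookkeeping.
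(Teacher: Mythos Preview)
Your argument is correct and follows essentially the same route as the paper: \thref{prop:prefeasible} gives (C1), and the solvability analysis of the Vandermonde system \eqref{eq:vander} gives the equivalence with (C2), exactly as developed in the paragraphs preceding the theorem. One minor wording issue: after fixing $\itnum=\nodenum-1$, the \emph{reduced} Vandermonde matrix you describe is $k\times\nodenum$ with $k\leq\nodenum$ distinct nodes, hence full row rank rather than literally ``square invertible''; either note that full row rank already guarantees solvability, or instead take $\itnum=k-1$ to make the reduced system genuinely square.
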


\end{myitemize}%

\subsection{Filter Order Minimization}
\label{sec:filterorderminimization}

\cmt{overview}\thref{prop:disjointevals} implies
  that \eqref{prob:exact} can be reformulated as the minimization of
$\orderfun{\projmat}(\shiftmat)$ subject to
$\shiftmat \in \shiftmatset$, (C1), and (C2). This section and the
next develop a reformulation more amenable to application of a
numerical solver.

\cmt{objective}The first step is to express the objective function
$\orderfun{\projmat}(\shiftmat)$  more explicitly. To that end,
consider the following result:
\begin{myitemize}%
\myitem\cmt{recall def $\orderfun{\projmat}$}
\myitem\cmt{Order = \# distinct evals}
\begin{mylemma}
\thlabel{prop:minimalorder}If $\shiftmat = \uparmat\fparmat\uparmat\transpose
+ \upermat\fpermat\upermat\transpose\in \polfeasmatset{\projmat}$, then
\begin{align}
\label{eq:minimalorder}
\orderfun{\projmat}(\shiftmat) \leq \noevalsfun(\fparmat)
+\noevalsfun(\fpermat) -1,
\end{align}
where $\noevalsfun(\cdot)$ is the number of distinct eigenvalues of
its argument.
\end{mylemma}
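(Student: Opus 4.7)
The plan is to exhibit an explicit filter of order at most $\noevalsfun(\fparmat)+\noevalsfun(\fpermat)-1$ via Lagrange interpolation at the distinct eigenvalues of $\fparmat$ and $\fpermat$.

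First I would leverage the block-orthogonal structure of \eqref{eq:def:cpref} to compute powers of $\shiftmat$. Since $[\uparmat,\upermat]$ is an orthogonal matrix, $\uparmat\transpose\uparmat=\identity_{\subspacedim}$, $\upermat\transpose\upermat=\identity_{\nodenum-\subspacedim}$, and $\uparmat\transpose\upermat=\bm 0$, so iteratively
\begin{align*}
\shiftmat^\itind=\uparmat\fparmat^\itind\uparmat\transpose+\upermat\fpermat^\itind\upermat\transpose,\quad \itind=0,1,\ldots
\end{align*}
Consequently, for any polynomial $p(t)=\sum_{\itind=0}^{\itnum}\filtercoef_\itind t^\itind$,
\begin{align*}
p(\shiftmat)=\uparmat\, p(\fparmat)\,\uparmat\transpose+\upermat\, p(\fpermat)\,\upermat\transpose.
\end{align*}

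Next I would show that requiring $p(\shiftmat)=\projmat=\uparmat\uparmat\transpose$ reduces to two decoupled matrix polynomial conditions. Because the columns of $[\uparmat,\upermat]$ form an orthonormal basis of $\rfield^{\nodenum}$, multiplying the identity $p(\shiftmat)=\uparmat\uparmat\transpose$ on the left by $\uparmat\transpose$ (resp.\ $\upermat\transpose$) and on the right by $\uparmat$ (resp.\ $\upermat$) yields the equivalent system
\begin{align*}
p(\fparmat)=\identity_{\subspacedim},\qquad p(\fpermat)=\bm 0.
\end{align*}
Since $\fparmat$ and $\fpermat$ are symmetric, hence diagonalizable, $p$ evaluated at these matrices depends only on the values $p$ takes on their spectra. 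Thus it suffices to find $p$ such that $p(\lambda)=1$ for every $\lambda\in\mathrm{evals}(\fparmat)$ and $p(\mu)=0$ for every $\mu\in\mathrm{evals}(\fpermat)$.

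Finally, I would invoke \thref{prop:disjointevals}: since $\shiftmat\in\polfeasmatset{\projmat}$, (C2) gives $\mathrm{evals}(\fparmat)\cap\mathrm{evals}(\fpermat)=\emptyset$. The prescribed values therefore occur at $\noevalsfun(\fparmat)+\noevalsfun(\fpermat)$ \emph{distinct} nodes, so standard Lagrange interpolation produces a polynomial $p$ of degree at most $\noevalsfun(\fparmat)+\noevalsfun(\fpermat)-1$ meeting the requirements. By definition of $\orderfun{\projmat}$, this yields \eqref{eq:minimalorder}.

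The only slightly delicate step is the reduction from $p(\fparmat)=\identity_{\subspacedim}$ and $p(\fpermat)=\bm 0$ to the pointwise interpolation conditions; this is where symmetry (and hence diagonalizability) of $\fparmat,\fpermat$ is used, and it must be invoked explicitly to ensure that values on the spectrum are all that matter. Everything else is a direct assembly of pieces already established in \thref{prop:prefeasible} and \thref{prop:disjointevals}.
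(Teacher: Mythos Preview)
Your proof is correct and follows essentially the same approach as the paper: both reduce the condition $p(\shiftmat)=\projmat$ to the scalar interpolation problem $p\equiv 1$ on $\mathrm{evals}(\fparmat)$ and $p\equiv 0$ on $\mathrm{evals}(\fpermat)$, then use the disjointness guaranteed by \thref{prop:disjointevals} to bound the degree. The only cosmetic difference is that the paper phrases the final step as a rank bound on the Vandermonde system \eqref{eq:vander} already set up in Sec.~\ref{sec:polfeas}, whereas you invoke Lagrange interpolation directly---these are equivalent formulations of the same fact.
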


\begin{IEEEproof}
Computing $\orderfun{\projmat}(\shiftmat)$ for a given $\shiftmat$
amounts to determining the minimum $\itnum$ for
which \eqref{eq:vander}, or equivalently its compact
version \eqref{eq:vander2}, admits a solution.  Since
$\shiftmat \in \polfeasmatset{\projmat}$ by hypothesis, one has
that \eqref{eq:vander2} is satisfied for at least one value of
$\itnum$. Let $\itnum_0$ denote the smallest value for
which \eqref{eq:vander2} holds. Since $\vandermondemat$ is
Vandermonde, it has at most $\noevalsfun(\shiftmat)$ linearly
independent rows, which implies that 
$\itnum_0+1\leq \noevalsfun(\shiftmat)$. The proof is completed by noting
from \thref{prop:disjointevals} that $\noevalsfun(\shiftmat)
= \noevalsfun(\fparmat)
+\noevalsfun(\fpermat)$.
\end{IEEEproof}
\myitem\cmt{tightness}In practice, the bound in \eqref{eq:minimalorder}
will hold with equality unless in degenerate cases. For example, when
$\nodenum=4$, $\subspacedim=2$, $\lambda_1=-\lambda_2$ and
$\lambda_3=-\lambda_4\neq |\lambda_1|$, it can be seen that
$\orderfun{\projmat}(\shiftmat)=2$ whereas $ \noevalsfun(\fparmat)
+\noevalsfun(\fpermat)-1=3$. However, in general, such an $\shiftmat$
will only be in $\shiftmatset \cap \polfeasmatset{\projmat}$ if
$\projmat$ and $\graph$ are jointly selected to achieve this end,
which will not occur in a practical application.
\myitem\cmt{interpretation}In words, 
\thref{prop:minimalorder} implies that one may seek the shift matrix
of an approximately minimal-order projection filter as the matrix with
the smallest number of distinct eigenvalues among all matrices in the
feasible set of \eqref{prob:exact}.
\end{myitemize}%

\cmt{Feasible set}Next, this feasible set is
rewritten more explicitly.
\begin{myitemize}%
\myitem\cmt{split into two constraints}To this end, split the constraint
$\shiftmat \in \shiftmatset \cap \polfeasmatset{\projmat}$ 
into the two constraints
$\shiftmat \in \shiftmatset$ and
$\shiftmat \in   \polfeasmatset{\projmat}$.
\begin{myitemize}%
\myitem\cmt{topology constraint}Regarding $\shiftmat \in \shiftmatset$,
\begin{myitemize}%
\myitem\cmt{def shift}recall from the definition of $\shiftmatset$
in Sec.~\ref{sec:graphfilters} that
$\shiftmat \in \shiftmatset$ iff  $(\shiftmat)_{\nodeind,\nodeind'}=0$
for all $(\nodeind,\nodeind') $ such that
 $(\nodeind,\nodeind')\notin \edgeset$.
 \myitem\cmt{symmetry}Because any 
 $\shiftmat$ that is feasible for  \eqref{prob:exact} has to be 
 in $\polfeasmatset{\projmat}$ and all the 
 matrices in this set are symmetric, any feasible $\shiftmat$ is necessarily
 symmetric. Thus, one can just require that  $(\shiftmat)_{\nodeind,\nodeind'}=0$
 only  for those $(\nodeind,\nodeind') $ such that
 $(\nodeind,\nodeind')\notin \edgeset$  and 
 $\nodeind<\nodeind'$.
 \myitem\cmt{canonicalbasis}With $\canbasisvec_\nodeind$ the
 $\nodeind$-th column of the identity matrix $\identity_\nodenum$, it
 follows that   $(\shiftmat)_{\nodeind,\nodeind'}=
\canbasisvec_\nodeind\transpose\shiftmat\canbasisvec_{\nodeind'}=
(\canbasisvec_{\nodeind'}\otimes\canbasisvec_\nodeind)\transpose\vect(\shiftmat)
$. Thus, the constraint $\shiftmat \in \shiftmatset$ can be expressed
as $\topologyconstraintmat\vect(\shiftmat) = \bm 0$, where
$\topologyconstraintmat$ is a matrix whose rows are given by the
vectors
$\{(\canbasisvec_{\nodeind'}\otimes\canbasisvec_\nodeind)\transpose, 
\forall(\nodeind,\nodeind')$ such that
$(\nodeind,\nodeind')\notin \edgeset$ and
 $\nodeind<\nodeind'\}$. As expected, the fewer edges in the graph,
 the more rows $\topologyconstraintmat$ has and, consequently, the
 smaller the feasible set. In the extreme case of a fully disconnected
 graph, only the diagonal matrices satisfy
 $\shiftmat \in \shiftmatset$ (recall that  $\edgeset$ contains all self-loops;
 cf. Sec.~\ref{sec:projformulation}). 
 
\end{myitemize}%

\myitem\cmt{polfeas constraint}On the other hand, the constraint
$\shiftmat \in   \polfeasmatset{\projmat}$ can be easily expressed invoking
\thref{prop:disjointevals} and introducing two auxiliary optimization
variables $\fparmat$ and $\fpermat$. 

\end{myitemize}%

\myitem\cmt{resulting problem}In view of these observations and \thref{prop:minimalorder},
problem \eqref{prob:exact} becomes 
\leqnomode
\begin{align}
\tag{P1'}\label{prob:exactwithconstraints}
\begin{aligned}
~~\minimize_{\shiftmat,\fparmat,\fpermat}~~&
 \noevalsfun(\fparmat)
+\noevalsfun(\fpermat)\\
 \st~~&  \topologyconstraintmat\vect(\shiftmat) = \bm 0\\
& \shiftmat = \uparmat\fparmat\uparmat\transpose
+ \upermat\fpermat\upermat\transpose\\
&\fparmat=\fparmat\transpose,~\fpermat=\fpermat\transpose\\
&\lambda_\nodeind(\fparmat)\neq\lambda_{\nodeind'}(\fpermat)~\forall\nodeind,\nodeind'
\end{aligned}
\end{align}
\reqnomode 
for an arbitrary choice of
$\upermat\in \rfield^{\nodenum\times \nodenum-\subspacedim}$ with
orthonormal columns spanning $\colspan^\perp(\uparmat)$.
\end{myitemize}%

\cmt{two modifications}Two further modifications are in order.
\begin{myitemize}%
\myitem\cmt{invariance to scalings}First, note that
\eqref{prob:exactwithconstraints} is invariant to scalings in the
sense that if $(\shiftmat,\fparmat,\fpermat)$ is feasible, then
 $(\auxconst\shiftmat,\auxconst\fparmat,\auxconst\fpermat)$ is also
 feasible and attains the same objective value $\forall \auxconst\neq
 0$. For this reason, the constraint $\trnb(\fparmat) = \subspacedim$ will be
 introduced w.l.o.g.  to eliminate this ambiguity.

\myitem\cmt{feasible set is not closed}Second, the feasible set
 of \eqref{prob:exactwithconstraints} is not a closed set due to
 the constraint
 $\lambda_\nodeind(\fparmat)\neq\lambda_{\nodeind'}(\fpermat)~\forall\nodeind,\nodeind'$,
 implying that the optimum may not be attained through an iterative algorithm. In practice, this
 constraint holds so long as the eigenvalues of $\fparmat$ differ from
 those of $\fpermat$, even if there is an eigenvalue of $\fparmat$
 arbitrarily close to an eigenvalue of $\fpermat$. But in the latter case,
 the numerical conditioning of \eqref{eq:vander} would be poor, implying
 that the target projection cannot be implemented as a graph filter
 using finite-precision arithmetic. Thus, the aforementioned
 constraint should be replaced with another one that ensures that (i)
 the feasible set is closed, and (ii) the eigenvalues of $\fparmat$
 are sufficiently different from those of $\fpermat$. One natural possibility
 is
 $|\lambda_\nodeind(\fparmat)-\lambda_{\nodeind'}(\fpermat)|\geq \eigenreg~\forall\nodeind,\nodeind'$,
 where $\eigenreg>0$ is a user-selected parameter. Unfortunately, this
 constraint is not convex, but an effective  relaxation will be
 presented in the next section.

\end{myitemize}%

\cmt{resulting problem} To sum up, the optimization problem 
 to be solved is:
\leqnomode
\begin{align}
\tag{P1''}\label{prob:exactclosedfs}
\begin{aligned}
~~\minimize_{\shiftmat,\fparmat,\fpermat}~~&
 \noevalsfun(\fparmat)
+\noevalsfun(\fpermat)\\
 \st~~&  \topologyconstraintmat\vect(\shiftmat) = \bm 0,~\trnb(\fparmat) = \subspacedim\\
& \shiftmat = \uparmat\fparmat\uparmat\transpose
+ \upermat\fpermat\upermat\transpose\\
&\fparmat=\fparmat\transpose,~\fpermat=\fpermat\transpose\\
&|\lambda_\nodeind(\fparmat)-\lambda_{\nodeind'}(\fpermat)|\geq \eigenreg~\forall\nodeind,\nodeind'.
\end{aligned}
\end{align}
\reqnomode

\cmt{relation to barbarossa's feasible set}Remarkably, the set of
 topologies for which \eqref{prob:exactclosedfs} is feasible is
strictly larger than the set of topologies for which the
method proposed in~\cite{barbarossa2009projection} (and reviewed in
Sec.~\ref{sec:barbarossa}) can be applied. It can be easily seen that
any feasible matrix for the problem in~\cite{barbarossa2009projection}
must be in $ \shiftmatset \cap \polfeasmatset{\projmat}$ and,
additionally, $\nodenum-\subspacedim$ of its eigenvalues must be equal
to 1 whereas the rest must be less than 1. It follows that the
feasible set therein is strictly contained in the feasible set of \eqref{prob:exactwithconstraints}.

\subsection{Exact Projection Filters via Convex Relaxation}
\label{sec:convex}

\cmt{overview}Problem \eqref{prob:exactclosedfs}  is non-convex because
\begin{myitemize}%
\myitem\cmt{}(i) the objective function is not convex and
\myitem\cmt{}(ii) the last constraint, which does not define a convex set.
\end{myitemize}%
This section proposes a convex problem to approximate the solution
to \eqref{prob:exactclosedfs}.

\cmt{objective}To address (i), recall  that $ \noevalsfun(\fparmat)$
equals the number of distinct eigenvalues of $\fparmat$ and 
\begin{myitemize}%
\myitem\cmt{transforming objective}note that the larger  $ \noevalsfun(\fparmat)$, the
larger the number of non-zero elements of the vector
$[\lambda_1-\lambda_2,\lambda_1-\lambda_3,\ldots,\lambda_1-\lambda_r,\lambda_2-\lambda_3,\ldots,\lambda_{r-1}-\lambda_r]\transpose$. A
similar observation applies to $ \noevalsfun(\fpermat)$.  This
suggests replacing the objective in \eqref{prob:exactclosedfs} with
\begin{align}
\label{eq:surrobj}
 \sum_{\nodeind=1}^\subspacedim
\sum_{\nodeind'=1}^\subspacedim||\lambda_\nodeind(\fparmat)-\lambda_{\nodeind'}(\fparmat)||_0
+\\
\sum_{\nodeind=1}^{\nodenum-\subspacedim}
\sum_{\nodeind'=1}^{\nodenum-\subspacedim}||\lambda_\nodeind(\fpermat)-\lambda_{\nodeind'}(\fpermat)||_0\nonumber,
\end{align}
where $\|\auxvec\|_0$ is the so-called \emph{zero norm} or number of
non-zero elements of vector $\auxvec$.
\myitem\cmt{obj. surrogate}A
typical convex surrogate of the zero norm is the
$\itind_1$-norm \cite{candes2008introduction}. However, just replacing
$||\cdot ||_0$ in \eqref{eq:surrobj} with an $l_1$-norm would still
give rise to a non-convex objective since it involves the functions
$\lambda_\nodeind(\cdot)$. 
\myitem\cmt{Kronecker}One of the key ideas in this paper is to apply
the following result:
\begin{mylemma}
\thlabel{prop:evalsnucleardif}
Let $\linfunmat$ be an $\nodenum\times \nodenum$ matrix with eigenvalues $\lambda_1, \lambda_2, \ldots, \lambda_\nodenum$. Then,
 \begin{align}
 {{\left\| \linfunmat\otimes {\identity_{\nodenum}}-{\identity_{\nodenum}}\otimes \linfunmat\right\|}_{*}}=\sum\limits_{\nodeind=1}^{\nodenum}{\sum\limits_{\nodeind'=1}^{\nodenum}{\left| {{\lambda }_{\nodeind}}-{{\lambda }_{\nodeind'}} \right|}}.
 \end{align}
\end{mylemma}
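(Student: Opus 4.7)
The plan is to exploit the fact that in the intended application $\linfunmat$ is symmetric (the lemma is used on $\fparmat$ and $\fpermat$, which are symmetric by \thref{def:prefeas}), so that the singular values of $\linfunmat\otimes\identity_\nodenum - \identity_\nodenum\otimes\linfunmat$ coincide with the absolute values of its eigenvalues. Once this is established, the nuclear norm—being the sum of singular values—collapses into the desired sum $\sum_{\nodeind,\nodeind'} |\lambda_\nodeind - \lambda_{\nodeind'}|$.

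First, I would introduce the spectral decomposition $\linfunmat = \umat \auxevalmat \umat^\top$ with $\umat$ orthogonal and $\auxevalmat = \diagnb(\lambda_1,\ldots,\lambda_\nodenum)$. Applying the mixed-product rule for Kronecker products on each of the two terms and factoring gives
\begin{align*}
\linfunmat\otimes\identity_\nodenum - \identity_\nodenum\otimes\linfunmat
= (\umat\otimes\umat)\bigl(\auxevalmat\otimes\identity_\nodenum - \identity_\nodenum\otimes\auxevalmat\bigr)(\umat\otimes\umat)^\top.
\end{align*}
The factor $\umat\otimes\umat$ is itself orthogonal, since $(\umat\otimes\umat)^\top(\umat\otimes\umat) = (\umat^\top\umat)\otimes(\umat^\top\umat) = \identity_{\nodenum^2}$.

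Second, I would observe that $\auxevalmat\otimes\identity_\nodenum - \identity_\nodenum\otimes\auxevalmat$ is a diagonal matrix whose diagonal entries are exactly $\lambda_\nodeind - \lambda_{\nodeind'}$ as $\nodeind,\nodeind'$ range over $\{1,\ldots,\nodenum\}$. Hence the displayed factorization is a symmetric eigendecomposition of $\linfunmat\otimes\identity_\nodenum - \identity_\nodenum\otimes\linfunmat$. Taking absolute values of its eigenvalues (or equivalently, prefixing the diagonal sign pattern to one of the orthogonal factors) yields an SVD in which the singular values are $|\lambda_\nodeind - \lambda_{\nodeind'}|$.

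Finally, summing the singular values gives the nuclear norm, which proves the claim. The only subtle point—and the main obstacle if one tried to extend the result beyond the symmetric setting—is the passage from eigenvalues to singular values: this step relies critically on $\linfunmat$ being normal so that $\umat\otimes\umat$ is orthogonal; for a general diagonalizable $\linfunmat$, the eigenvalues of the Kronecker difference remain $\lambda_\nodeind - \lambda_{\nodeind'}$, but the identification with singular values, and hence with the nuclear norm, would fail.
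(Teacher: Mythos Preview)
Your proof is correct and follows essentially the same approach as the paper: both start from the spectral decomposition $\linfunmat=\umat\auxevalmat\umat^\top$, reduce $\linfunmat\otimes\identity_\nodenum-\identity_\nodenum\otimes\linfunmat$ to the diagonal matrix $\auxevalmat\otimes\identity_\nodenum-\identity_\nodenum\otimes\auxevalmat$ via an orthogonal conjugation, and then read off the nuclear norm as the $\ell_1$-norm of the diagonal entries $\lambda_\nodeind-\lambda_{\nodeind'}$. Your factorization through $(\umat\otimes\umat)$ on both terms simultaneously is slightly more direct than the paper's intermediate expression with $(\umat\otimes\identity_\nodenum)$ and $(\identity_\nodenum\otimes\umat)$ separately, but the underlying idea and the reliance on symmetry of $\linfunmat$ are identical.
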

\begin{IEEEproof}
See Appendix~\ref{proof:evalsnucleardif}.
\end{IEEEproof}
Applying \thref{prop:evalsnucleardif} to \eqref{eq:surrobj} suggests
replacing the objective in \eqref{prob:exactclosedfs} with $ ||\fparmat\otimes \identity_{\subspacedim}
-\identity_{\subspacedim} \otimes \fparmat ||_\star + 
||\fpermat\otimes \identity_{\nodenum-\subspacedim}-\identity_{\nodenum-\subspacedim} \otimes
\fpermat||_\star$. \arev{The implications of this relaxation are
further discussed in Appendix~\ref{sec:tightness}.}


\end{myitemize}%

\cmt{feasible set}Regarding (ii), note that 
\begin{myitemize}%
\myitem\cmt{relax feasible set}constraint
$|\lambda_\nodeind(\fparmat)-\lambda_{\nodeind'}(\fpermat)|\geq \eigenreg~\forall\nodeind,\nodeind'$
renders the feasible set non-convex due to (ii-1) the functions
$\lambda_\nodeind(\cdot)$ and (ii-2) the absolute value.
\begin{myitemize}%
\myitem\cmt{relax lambdas}To deal with (ii-1), a sensible approach is to relax the constraints $|\lambda_\nodeind(\fparmat)-\lambda_{\nodeind'}(\fpermat)|\geq \eigenreg~\forall\nodeind,\nodeind'$
into a single constraint which requires only that the average of the
eigenvalues of $\fparmat$ must differ sufficiently from the average of
the eigenvalues of $\fpermat$, that is
\begin{align}
\left|\frac{1}{\subspacedim}\sum_{\nodeind=1}^\subspacedim\lambda_\nodeind(\fparmat)-
\frac{1}{\nodenum-\subspacedim}\sum_{\nodeind=1}^{\nodenum-\subspacedim}
\lambda_{\nodeind}(\fpermat)\right|\geq \eigenreg\nonumber
\end{align}
for some user-selected $\eigenreg>0$. Clearly, this constraint is
equivalent to    $\left|\trnb(\fparmat)/{\subspacedim}-
\trnb(\fpermat)/({\nodenum-\subspacedim})\right|\geq \eigenreg$
and, using the constraint $\trnb(\fpermat)=\subspacedim$ introduced
in \eqref{prob:exactclosedfs},  becomes equivalent to 
  $\left|1-
\trnb(\fpermat)/({\nodenum-\subspacedim})\right|\geq \eigenreg$.
\myitem\cmt{absolute value}To deal with (ii-2), one can now use the definition of absolute
value to conclude that this constraint is satisfied if either
$\trnb(\fpermat) \leq (1-\eigenreg)({\nodenum-\subspacedim})$ or
$
\trnb(\fpermat)\geq (1+\eigenreg)({\nodenum-\subspacedim})$. Both of
these inequalities are affine and therefore convex, however the OR
condition renders the feasible set 
non-convex. However, a solution can be readily found by 
first solving the problem with the constraint $\trnb(\fpermat) \leq
(1-\eigenreg)({\nodenum-\subspacedim})$, then solving it 
with the constraint
$\trnb(\fpermat)\geq (1+\eigenreg)({\nodenum-\subspacedim})$ instead, and
finally comparing the objective values achieved at the optimum of both these
sub-problems.
\myitem\cmt{ineq \ra eq}To simplify the task of solving the resulting
problem, note that the solution in both subproblems will
satisfy the corresponding constraint with equality. To see this, note
that if the inequality constraint is removed, the optimum of the
resulting convex problem becomes
$(\shiftmat,\fparmat,\fpermat)=(\identity_\nodenum,\identity_\subspacedim,
\identity_{\nodenum-\subspacedim})$. Due to convexity and since this optimum does not satisfy the removed constraint, such a
constraint will become necessarily active when introduced.

\myitem\cmt{plus minus}In view of these observations, the last
constraint in \eqref{prob:exactclosedfs} will be replaced with  $ 
\trnb(\fpermat)= (1\pm\eigenreg)({\nodenum-\subspacedim})$, where the
$\pm$ sign indicates that both subproblems must be solved separately.
\myitem\cmt{Regularizer}Except for degenerate cases, one expects that
these relaxed constraints will suffice to ensure that $\fpermat$ and
$\fparmat$ do not share eigenvalues. However, although this was indeed
observed in our experiments, sometimes an eigenvalue of $\fpermat$ may
become sufficiently close to an eigenvalue of $\fparmat$ in such a way
that the Vandermonde system \eqref{eq:vander} becomes poorly
conditioned. To alleviate this situation, a penalty proportional to
$||\fpermat ||_\text{F}^{2}$ will be added to the objective to
``push'' the eigenvalues of $\fpermat$ towards zero.

\end{myitemize}%

\cmt{Convex problem}With all these considerations, the resulting relaxed problem becomes:
\leqnomode
\begin{align}
\nonumber~~\minimize_{\shiftmat,\fparmat,\fpermat}~~&
 \regnucpar||\fparmat\otimes \identity_\subspacedim
-\identity_\subspacedim \otimes \fparmat ||_\star \\&+ \regnucper
||\fpermat\otimes \identity_{\nodenum-\subspacedim}-\identity_{\nodenum-\subspacedim}\otimes
\fpermat||_\star+||\fpermat ||_\text{F}^{2}\nonumber\\
 \st~~&  \topologyconstraintmat\vect(\shiftmat) = \bm 0,\nonumber\\
\tag{P1-R}\label{prob:exactrelaxed}& \shiftmat = \uparmat\fparmat\uparmat\transpose
+ \upermat\fpermat\upermat\transpose,\\
&\fparmat=\fparmat\transpose,~\fpermat=\fpermat\transpose\nonumber,\\
&\trnb(\fpermat)= (1\pm\eigenreg)({\nodenum-\subspacedim}),
\trnb(\fparmat) = \subspacedim\nonumber,
\end{align}
\reqnomode 
where $\regnucpar>0$ and $\regnucper>0$ are user-specified parameters
to control the relative weight of each term in the objective. Having
two parameters rather than just one that scales the term $||\fpermat
||_\text{F}^{2}$ empowers  the user with more flexibility to alleviate
possible numerical issues, as described earlier.
\cmt{relaxation}\arev{A study of the ability
of \eqref{prob:exactrelaxed} to approximate the solution
of \eqref{prob:exact} is presented in Appendix~\ref{sec:tightness}.
}

\cmt{solver}An iterative solver for \eqref{prob:exactrelaxed} is
proposed in Appendix~\ref{sec:exactsolver} in the supplementary
 material based on ADMM, therefore inheriting its solid convergence
 guarantees.

\end{myitemize}%

\section{Approximate Projection Filters}
\label{sec:approximate}

\cmt{Overview}When the graph is too sparse for \eqref{prob:exact} to
admit a solution, one may  instead seek  a low-order graph filter that
approximates $\projmat$ reasonably well. Before presenting an
optimization criterion to obtain the corresponding shift matrix, the next
section analyzes in which situations there exists an exact projection
filter.

\subsection{Feasibility of Exact Projection Filters}
\label{sec:feasibility}

\cmt{Relate topology an $\subspace$ to existence}The following result provides a necessary condition for the existence
of an exact projection shift matrix. 
\begin{myitemize}%
\myitem\cmt{reduced edge set}
Let $\reducededgeset\define\{
(\nodeind,\nodeind')\in\edgeset:~\nodeind\leq\nodeind'
\}=\{(\nodeind_1,\nodeind'_1),\ldots,(\nodeind_\reducededgenum,\nodeind'_\reducededgenum)\}$
denote the reduced edge set, where each of the $\reducededgenum$
undirected  edges shows up only once.
\myitem\cmt{single edge shift}Consider also the symmetric shift matrices
$\singleedgeshiftmat_i\define \canbasisvec_{n_{i}}\canbasisvec_{n_{i}'}^{\top}+\canbasisvec_{n_{i}'}\canbasisvec_{n_{i}},~i=1,\ldots,\reducededgenum$,  where only a single
edge is used.
\myitem\cmt{basis for $\shiftmatset$}Clearly, all feasible shift matrices are
linear combinations of these matrices.
\end{myitemize}%
\begin{mytheorem}
\thlabel{prop:necessaryfeasibility}
Let
$\singleedgeshiftmat \define[\vect(\singleedgeshiftmat_1),\ldots,\vect(\singleedgeshiftmat_\reducededgenum)]$. It
holds that
\begin{align}
\label{eq:necessaryfeasibility}
\dim(\shiftmatset\cap \prefeasmatset{\projmat})
=\reducededgenum -
 \mathrm{rank}((
\uparmat^{\top}\otimes\upermat^{\top})\singleedgeshiftmat ).
\end{align}

\end{mytheorem}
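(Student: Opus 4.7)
The plan is to reduce the statement to an application of the rank--nullity theorem after rewriting both sets in linear-algebraic coordinates. As a first step, I would rephrase pre-feasibility as a single linear constraint. By \thref{def:prefeas} combined with \thref{prop:prefeasible}, a symmetric $\shiftmat$ is in $\prefeasmatset{\projmat}$ iff $[\uparmat,\upermat]\transpose \shiftmat [\uparmat,\upermat]$ is block-diagonal with respect to the $(\uparmat,\upermat)$ split, which, since $\uparmat\transpose\upermat=\bm 0$, is equivalent to the single cross-block condition $\upermat\transpose \shiftmat \uparmat = \bm 0$. Vectorizing yields the equivalent linear condition $(\uparmat\transpose\otimes \upermat\transpose)\vect(\shiftmat) = \bm 0$.

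The second step is to parametrize the symmetric matrices in $\shiftmatset$. The single-edge shifts $\{\singleedgeshiftmat_i\}_{i=1}^{\reducededgenum}$ have supports on disjoint pairs of matrix entries (one per reduced edge), so they are linearly independent and their span coincides with the space of symmetric matrices in $\shiftmatset$. The coordinate map $\coordvec=[\alpha_1,\ldots,\alpha_{\reducededgenum}]\transpose \mapsto \sum_i \alpha_i\singleedgeshiftmat_i$ is therefore a vector-space isomorphism from $\rfield^{\reducededgenum}$ onto that symmetric subspace, and $\vect(\sum_i \alpha_i \singleedgeshiftmat_i) = \singleedgeshiftmat\,\coordvec$.

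Combining the two steps, and using the fact that every element of $\prefeasmatset{\projmat}$ is symmetric by \thref{def:prefeas}, one obtains
\[
\shiftmat\in \shiftmatset\cap \prefeasmatset{\projmat} \ \iff\ \vect(\shiftmat)=\singleedgeshiftmat\,\coordvec \text{ with }(\uparmat\transpose\otimes\upermat\transpose)\singleedgeshiftmat\,\coordvec=\bm 0.
\]
Hence $\shiftmatset\cap \prefeasmatset{\projmat}$ is linearly isomorphic to the null space of $(\uparmat\transpose\otimes\upermat\transpose)\singleedgeshiftmat$, and the rank--nullity theorem immediately delivers \eqref{eq:necessaryfeasibility}.

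The only mildly delicate point is the isomorphism in the second step: one must verify both that the $\singleedgeshiftmat_i$ are linearly independent (immediate from their disjoint supports, treating the self-loop case $\singleedgeshiftmat_i = 2\canbasisvec_{n_i}\canbasisvec_{n_i}\transpose$ separately) and that every symmetric matrix in $\shiftmatset$ lies in their span (immediate, since such a matrix is determined by its $\reducededgenum$ values on the reduced edge set). Everything else is bookkeeping linear algebra.
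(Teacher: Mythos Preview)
Your proposal is correct and follows essentially the same route as the paper: the paper first records (as an auxiliary lemma) that $\shiftmat\in\shiftmatset\cap\prefeasmatset{\projmat}$ iff $\shiftmat$ is symmetric, satisfies the topology constraint, and $\upermat\transpose\shiftmat\uparmat=\bm 0$, then parametrizes the symmetric shift matrices by $\singleedgeshiftmat\boldsymbol{\auxconst}$, vectorizes the cross-block condition to $(\uparmat\transpose\otimes\upermat\transpose)\singleedgeshiftmat\boldsymbol{\auxconst}=\bm 0$, and finishes with rank--nullity using the linear independence of the columns of $\singleedgeshiftmat$. The only cosmetic difference is that the paper packages the first step as a standalone lemma, whereas you invoke it directly.
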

\begin{IEEEproof}
See Appendix~\ref{proof:necessaryfeasibility}. 
\end{IEEEproof}

\cmt{Implications}
\begin{myitemize}%
\myitem\cmt{rank condition}Due to the presence of the scaled identity matrices
 in $\shiftmatset\cap \prefeasmatset{\projmat}$ (see Sec.~\ref{sec:polfeas}), 
a necessary condition for a feasible projection filter to exist,
i.e. 
$\shiftmatset \cap \polfeasmatset{\projmat}\neq \emptyset$, is that
$\dim(\shiftmatset\cap \prefeasmatset{\projmat})>1$. From \eqref{eq:necessaryfeasibility},
this condition becomes $ \mathrm{rank}((
\uparmat^{\top}\otimes\upermat^{\top})\singleedgeshiftmat )\leq
\reducededgenum-2$.
For future reference, this is summarized as follows:
\begin{mycorollary}
\thlabel{prop:necessaryfullfeas}
If there exists an exact projection filter, i.e. $\shiftmatset \cap \polfeasmatset{\projmat}\neq \emptyset$, 
then
\begin{align}
\label{eq:necessaryfeasibility2}
 \mathrm{rank}((
\uparmat^{\top}\otimes\upermat^{\top})\singleedgeshiftmat )\leq
\reducededgenum-2.
\end{align}
\end{mycorollary}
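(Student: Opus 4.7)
The plan is to combine \thref{prop:necessaryfeasibility} with the observation that scaled identity matrices always lie in $\shiftmatset \cap \prefeasmatset{\projmat}$ but never in $\polfeasmatset{\projmat}$ (under the standing assumption $\subspacedim < \nodenum$). This will force $\shiftmatset \cap \prefeasmatset{\projmat}$ to be at least two-dimensional whenever an exact projection filter exists, and then \thref{prop:necessaryfeasibility} yields the claimed rank bound.

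First I would verify that $\shiftmatset \cap \prefeasmatset{\projmat}$ is a linear subspace. The set $\shiftmatset$ is cut out by linear sparsity constraints. By \thref{def:prefeas} together with \eqref{eq:def:cpref}, $\prefeasmatset{\projmat}$ is equivalent to the set of symmetric matrices $\shiftmat$ satisfying the linear condition $\upermat^\top \shiftmat \uparmat = \bm 0$; hence it is a subspace too, and so is the intersection. Dimensions in \thref{prop:necessaryfeasibility} are therefore meaningful.

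Second, I would observe that $\alpha \identity_\nodenum \in \shiftmatset \cap \prefeasmatset{\projmat}$ for every $\alpha \in \rfield$: the identity is diagonal and so lies in $\shiftmatset$ (recall that self-loops are always in $\edgeset$), and it satisfies \eqref{eq:def:cpref} with $\fparmat = \alpha \identity_\subspacedim$ and $\fpermat = \alpha \identity_{\nodenum - \subspacedim}$. Thus the one-dimensional subspace of scaled identities sits inside $\shiftmatset \cap \prefeasmatset{\projmat}$.

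Third, assume $\shiftmatset \cap \polfeasmatset{\projmat} \neq \emptyset$ and pick any $\shiftmat$ in this set with decomposition $(\fparmat, \fpermat)$. By condition (C2) of \thref{prop:disjointevals}, $\mathrm{evals}(\fparmat) \cap \mathrm{evals}(\fpermat) = \emptyset$. A scaled identity $\alpha \identity_\nodenum$, on the other hand, yields $\fparmat$ and $\fpermat$ whose spectra both equal $\{\alpha\}$, and both blocks are nontrivial since $0 < \subspacedim < \nodenum$. Consequently $\shiftmat$ is linearly independent from $\identity_\nodenum$ inside the subspace $\shiftmatset \cap \prefeasmatset{\projmat}$, forcing $\dim(\shiftmatset \cap \prefeasmatset{\projmat}) \geq 2$. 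Substituting into \eqref{eq:necessaryfeasibility} gives $\reducededgenum - \mathrm{rank}((\uparmat^\top \otimes \upermat^\top)\singleedgeshiftmat) \geq 2$, which rearranges to \eqref{eq:necessaryfeasibility2}.

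There is no real obstacle: the only delicate point is to ensure that $\shiftmatset \cap \prefeasmatset{\projmat}$ is treated as a subspace (not merely a cone, as is the case for $\polfeasmatset{\projmat}$), so that producing two linearly independent elements legitimately implies $\dim \geq 2$.
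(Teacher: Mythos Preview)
Your proposal is correct and follows essentially the same approach as the paper: the paper also argues that the scaled identities lie in $\shiftmatset\cap\prefeasmatset{\projmat}$ but not in $\polfeasmatset{\projmat}$, so any $\shiftmat\in\shiftmatset\cap\polfeasmatset{\projmat}$ together with $\identity_\nodenum$ forces $\dim(\shiftmatset\cap\prefeasmatset{\projmat})\geq 2$, and then \thref{prop:necessaryfeasibility} gives the rank bound. Your write-up is in fact more explicit than the paper's, carefully invoking condition (C2) of \thref{prop:disjointevals} to justify the linear independence and checking that $\shiftmatset\cap\prefeasmatset{\projmat}$ is a genuine subspace.
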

Note that this provides a condition that can be easily checked before
attempting to solve \eqref{prob:exactrelaxed}.
\myitem\cmt{guiding relation}It also provides a guideline
on the minimum number of edges required to exactly implement a
projection with a given $\subspacedim$.  Since
$(\uparmat^{\top}\otimes\upermat^{\top})\singleedgeshiftmat \in{\mathbb{R}}^{\subspacedim(\nodenum-\subspacedim)\times\reducededgenum}$,
it follows that $\mathrm{rank}((
\uparmat^{\top}\otimes\upermat^{\top})\singleedgeshiftmat)\leq
\min(\subspacedim(\nodenum-\subspacedim),\reducededgenum)
$. To obtain a reference for the  number of
edges required for the existence of exact projection filters, suppose that $(
\uparmat^{\top}\otimes\upermat^{\top})\singleedgeshiftmat $ is full rank,
as often is the case. Then, \eqref{eq:necessaryfeasibility2}
becomes
$\min(\subspacedim(\nodenum-\subspacedim),\reducededgenum)
\leq
\reducededgenum-2$, which is equivalent to 
$\subspacedim(\nodenum-\subspacedim)\leq
\reducededgenum-2$. Thus, the number of edges required for an exact
projection filter to exist is in the order of
$\subspacedim(\nodenum-\subspacedim)$. 
This agrees with intuition because the difficulty to implement a
projection must depend equally on $\subspacedim$ and
$\nodenum-\subspacedim$. This follows by noting that (i)
due to the term
$\filtercoef_0\bm I_\nodenum$ in \eqref{filterH},
 implementing $\projmat$
with a graph filter is
equally difficult as implementing $\bm I_{\nodenum}-\projmat$; and
(ii)  $\projmat$ has rank  $\subspacedim$ whereas $\bm
I_{\nodenum}-\projmat$ has rank
$\nodenum-\subspacedim$. \arev{In any case, note that this is just a
guideline on the number of edges. The existence of an exact projection
filter will not only depend on the number of edges but also on their
 locations.}

\cmt{when POLF is empty}Note that \thref{prop:necessaryfullfeas}
provides a necessary condition for the existence of an exact
projection filter. Obtaining a sufficient condition, in turn, is much
more challenging. However, one expects that 
$\shiftmatset\cap\polfeasmatset{\projmat}\neq\emptyset$ whenever
$\dim(\shiftmatset\cap \prefeasmatset{\projmat})>1$. In fact we
conjecture that, given a topology and generating $\uparmat$ by
orthonormalizing $\subspacedim$ vectors in $\rfield^\nodenum$ drawn
i.i.d. from a continuous distribution, the event where
$\dim(\shiftmatset\cap \prefeasmatset{\projmat})>1$ and
$\shiftmatset\cap\polfeasmatset{\projmat}=\emptyset$ has zero
probability. In practice, however, whether
$\shiftmatset\cap\polfeasmatset{\projmat}$ is empty or not is not as
relevant as it may seem.
\cmt{not as important}To see this, remember
from \thref{prop:disjointevals} that
$\shiftmatset\cap\polfeasmatset{\projmat}\neq \emptyset$ when there
exists a pre-feasible shift matrix for which the eigenvalues of $\fparmat$ differ
from those of $\fpermat$. However, even when this is the case, if one
of the eigenvalues of $\fparmat$ lies too close to an eigenvalue of
$\fpermat$, the corresponding filter cannot be  implemented  due to poor conditioning
of \eqref{eq:vander}. In short, even when an exact projection filter
may exist from a theoretical perspective, such a filter may not be
implementable in practice. Instead, having a sufficiently large
$\dim(\shiftmatset\cap \polfeasmatset{\projmat})$ seems more important
since that would allow the user to choose a shift that
leads to a good conditioning of \eqref{eq:vander}. In this sense,
\thref{prop:necessaryfeasibility} suggests that the \emph{margin}
$\reducededgenum-
 \mathrm{rank}((
\uparmat^{\top}\otimes\upermat^{\top})\singleedgeshiftmat )$ would be
 a reasonable indicator of how easy it  is to obtain an exact
 projection filter.

\end{myitemize}%

\subsection{Approximate Projection Criterion}
\label{sec:approximatecriterion}

\cmt{overview}The method proposed in Sec.~\ref{sec:convex} to obtain an exact
 projection filter relies on solving \eqref{prob:exactrelaxed}. Using
 this problem as a starting point, the present section develops an
 optimization criterion that yields low-order graph
 filters that approximate a given projection operator.

\cmt{rewrite \eqref{prob:exactrelaxed} in terms of $\shiftmat$}
\begin{myitemize}%
\myitem\cmt{change vars}To this end,  note from the second
 constraint in \eqref{prob:exactrelaxed} and the orthogonality of
 $\uparmat$ and $\upermat$ that 
$\fparmat= \uparmat^{\top}\shiftmat\uparmat$ and
$\fpermat= \upermat^{\top}\shiftmat\upermat$.
\myitem\cmt{symmetry constraints}It is also easy to see that 
  $\shiftmat$ is symmetric iff $\fparmat$ and $\fpermat$ are
  symmetric\arev{; cf. Appendix~\ref{proof:prefeasible}}.
\myitem\cmt{separate eigenspaces}Additionally, one can also easily
  prove that any  $\shiftmat$ can be expressed as $\shiftmat = \uparmat\fparmat\uparmat\transpose
+ \upermat\fpermat\upermat\transpose$ for some $\fparmat$ and
$\fpermat$ iff $\upermat\transpose\shiftmat\uparmat = \bm 0$. In
words, the latter condition states that each eigenvector of
$\shiftmat$ must be either in the signal subspace or in its orthogonal
complement.

\myitem\cmt{problem}Then, problem \eqref{prob:exactrelaxed} can be equivalently
expressed in terms of $\shiftmat$ as
\leqnomode
\begin{align}
~~\minimize_{\shiftmat}~~&
 \regnucpar||\uparmat\transpose\shiftmat\uparmat\otimes \identity_\subspacedim
-\identity_\subspacedim \otimes \uparmat\transpose\shiftmat\uparmat ||_\star +\nonumber\\& \regnucper
||\upermat\transpose\shiftmat\upermat\otimes \identity_{\nodenum-\subspacedim}-\identity_{\nodenum-\subspacedim} \otimes
\upermat\transpose\shiftmat\upermat||_\star+\nonumber\\&||\upermat\transpose\shiftmat\upermat ||_\text{F}^{2}
\tag{P1-R'}\label{prob:exactrelaxedS}
\\\nonumber
 \st~~
&\topologyconstraintmat\vect(\shiftmat) = \bm 0,~\shiftmat = \shiftmat\transpose,~\upermat\transpose\shiftmat\uparmat = \bm 0,\\\nonumber
&\trnb(\uparmat\transpose\shiftmat\uparmat) = \subspacedim, \\&\trnb(\upermat\transpose\shiftmat\upermat)= (1\pm\eigenreg)({\nodenum-\subspacedim}).\nonumber
\end{align}
\reqnomode
\end{myitemize}%

\cmt{now relax}The constraint that renders \eqref{prob:exactrelaxedS}
infeasible when an exact projection filter does not exist is precisely
$\upermat\transpose\shiftmat\uparmat = \bm 0$. This suggests finding
a shift for an approximate projection filter by solving
\leqnomode
\begin{align}
~~\minimize_{\shiftmat}~~&
 \regnucpar||\uparmat\transpose\shiftmat\uparmat\otimes \identity_\subspacedim
-\identity_\subspacedim \otimes \uparmat\transpose\shiftmat\uparmat ||_\star +\nonumber\\& \regnucper
||\upermat\transpose\shiftmat\upermat\otimes \identity_{\nodenum-\subspacedim}-\identity_{\nodenum-\subspacedim} \otimes
\upermat\transpose\shiftmat\upermat||_\star+\nonumber\\&||\upermat\transpose\shiftmat\upermat ||_\text{F}^{2}+\regpar|| \upermat\transpose\shiftmat\uparmat||_\text{F}^2
\tag{P2-R}\label{prob:approxrelaxed}
\\\nonumber
 \st~~
&\topologyconstraintmat\vect(\shiftmat) = \bm 0,~\shiftmat = \shiftmat\transpose,\\\nonumber
&\trnb(\uparmat\transpose\shiftmat\uparmat) = \subspacedim, \\&\trnb(\upermat\transpose\shiftmat\upermat)= (1\pm\eigenreg)({\nodenum-\subspacedim}),
\nonumber
\end{align}
\reqnomode
where the separation of eigenspaces \emph{enforced} by the constraint
$\upermat\transpose\shiftmat\uparmat = \bm 0$
in \eqref{prob:exactrelaxedS} is now simply \emph{promoted} through
the last term in the objective. The parameter $\regpar>0$ is selected by
the user to balance the trade off between approximation error and
filter order. 
\cmt{solver}\arev{An iterative solver for \eqref{prob:approxrelaxed} is
proposed in Appendix~\ref{sec:approxsolver} of the supplementary
 material based on ADMM, therefore inheriting its solid convergence
 guarantees.}

\section{Numerical Experiments}
\label{sec:sim}

     \begin{figure}[t]
     \centering
         \includegraphics[width=0.4\textwidth]{./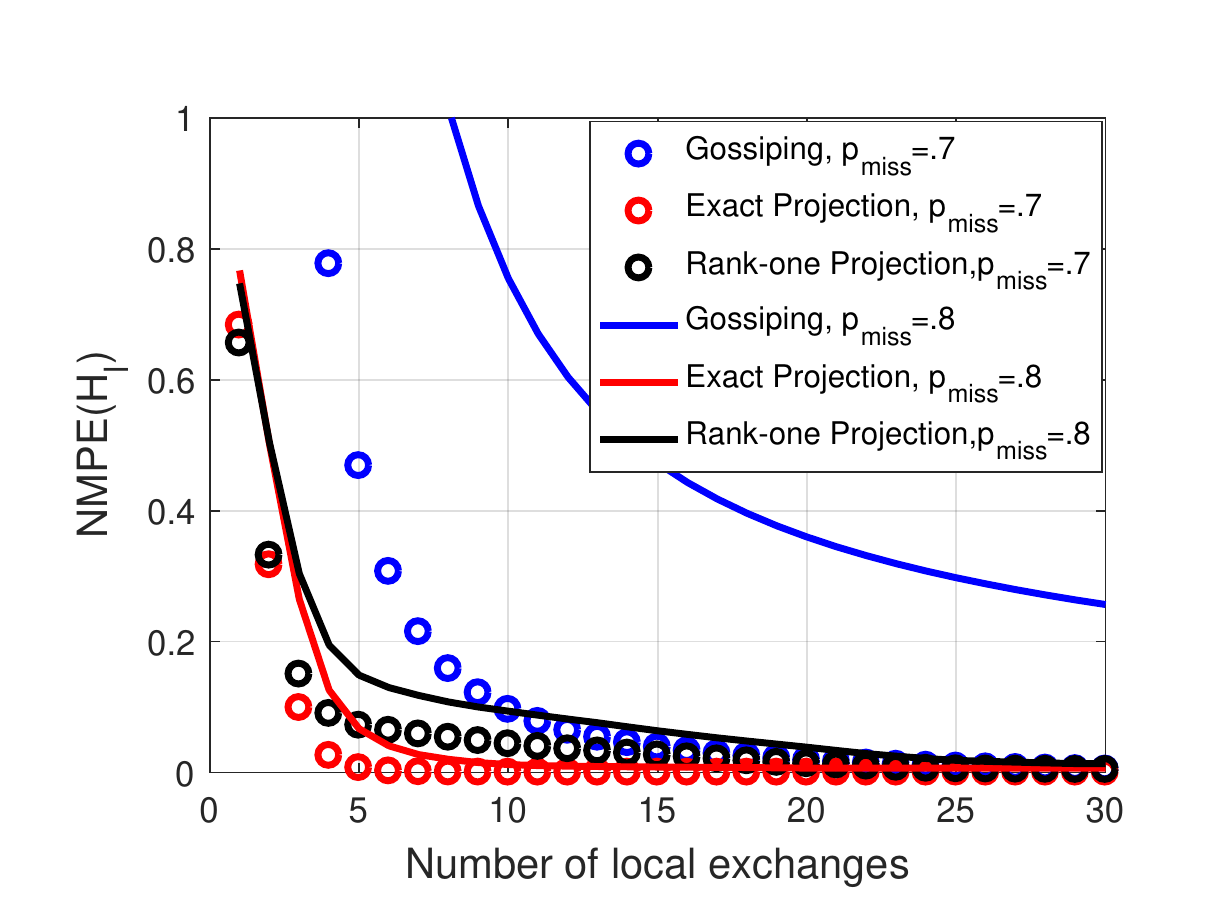}
   \caption{\small{NMPE as a function of the number of
  communications performed per node  for the Erdős–Rényi
         networks ($ \nodenum=30, \subspacedim=1$, $\rho=0.1$, $I_{\text{max}}=1000$,
$\regnucper=0.9, \regnucpar=0.1, \eigenreg=0.1$).}}
         \label{rank-one}
     \end{figure}

     \begin{figure}[t]
         \centering
         \includegraphics[width=0.4\textwidth]{./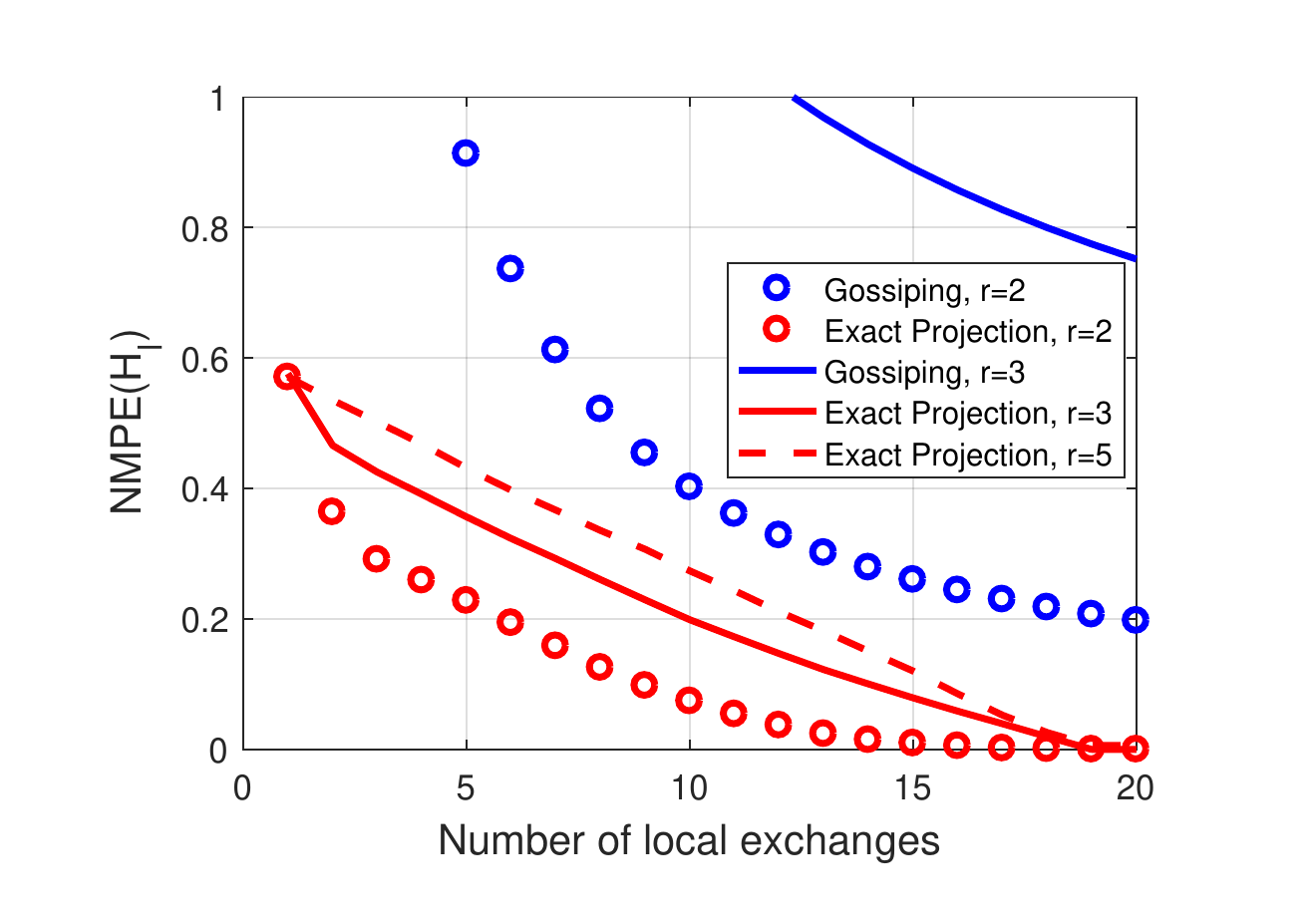}
         \caption{\small{NMPE as a function of the number of
  communications performed per node  for the Erdős–Rényi
         networks ($ \nodenum=20$, $p_{\text{miss}}=0.6$, $\rho=0.1$, $I_{\text{max}}=1000$,
$\regnucper=0.9, \regnucpar=0.1, \eigenreg=0.1$).}}
\label{NMPE-2}
     \end{figure}

     \begin{figure}[t]
         \centering
         \includegraphics[width=0.4\textwidth]{./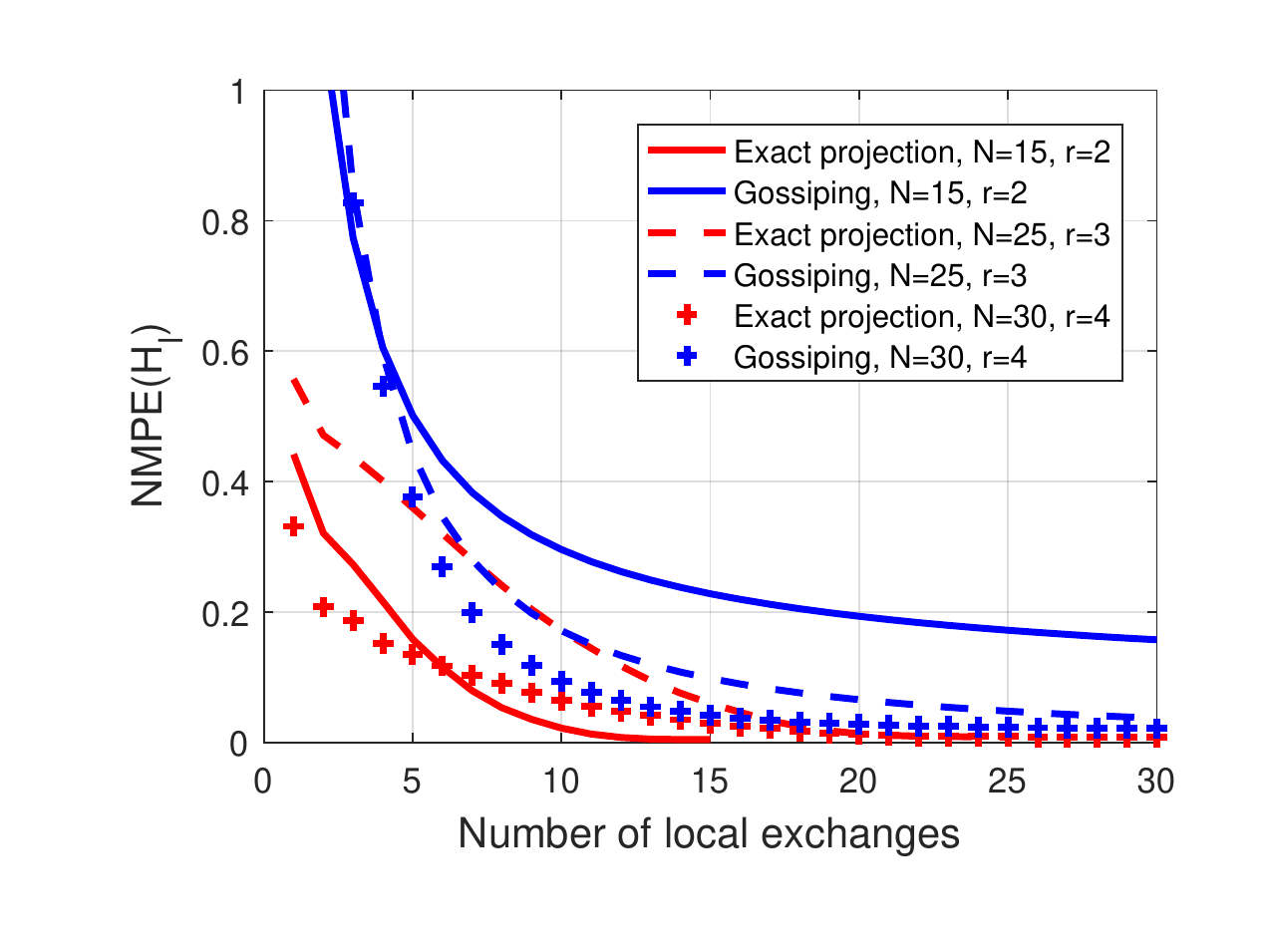}
         \caption{\small{NMPE as a function of the number of
  communications performed per node for WSN ($ d_{max}=.6, \rho=0.1$, $I_{\text{max}}=1000$,
$\regnucper=0.9, \regnucpar=0.1, \eigenreg=0.1$).}}
\label{NMPE-3}
     \end{figure}

     \begin{figure}[t]
         \centering
         \includegraphics[width=0.4\textwidth]{./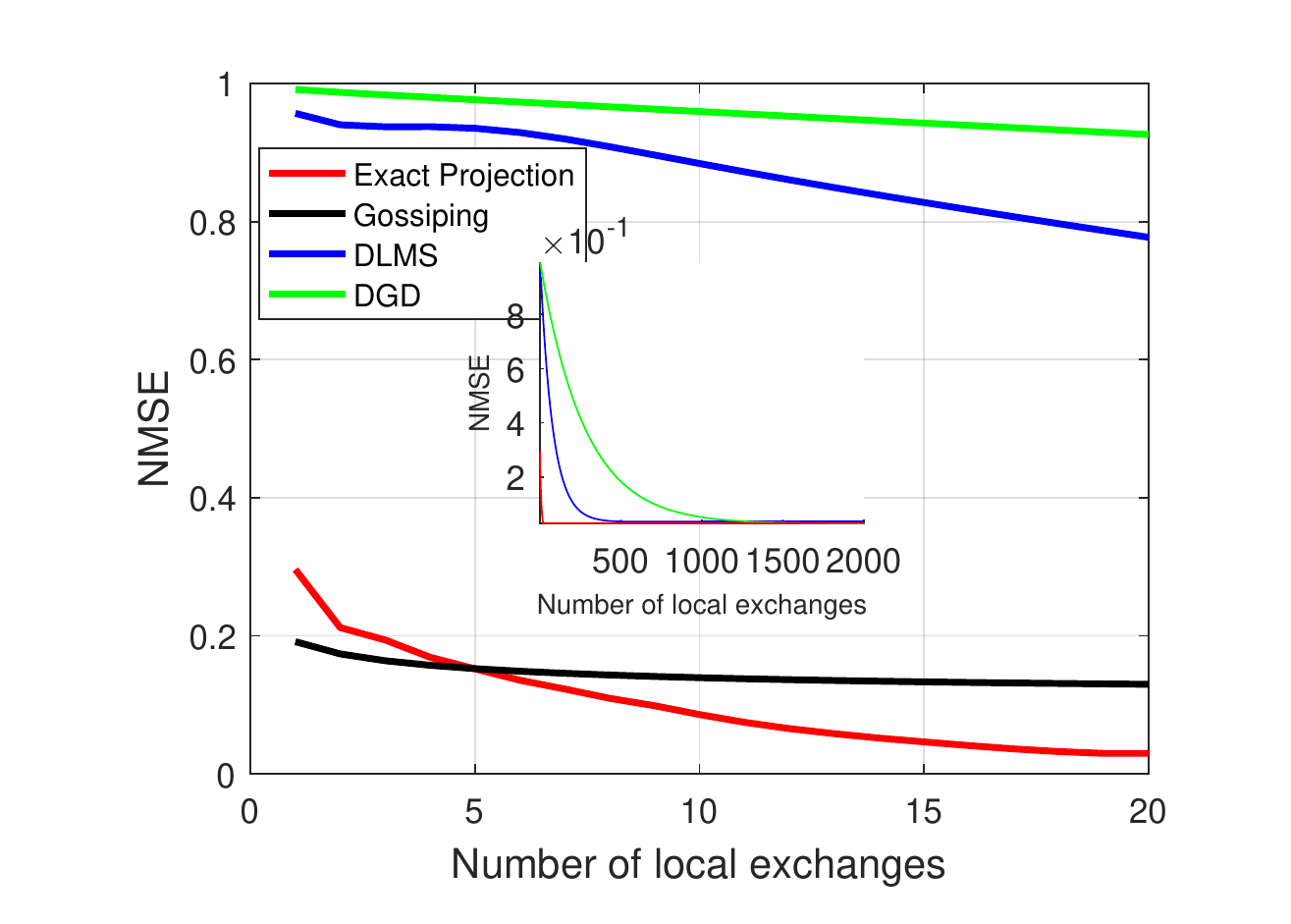}
         \caption{\small{NMSE as a function of the number of
  communications performed per node ($ \nodenum=20$, $\subspacedim=3, \beta=5,$ Erdős–Rényi graph with $p_{\text{miss}}=0.6$, $\rho=0.1$, $I_{\text{max}}=1000$,
$\regnucper=0.9, \regnucpar=0.1, \eigenreg=0.1$).}}
         \label{NMSE-4}
     \end{figure}

     \begin{figure}[t]
         \centering
         \includegraphics[width=0.4\textwidth]{./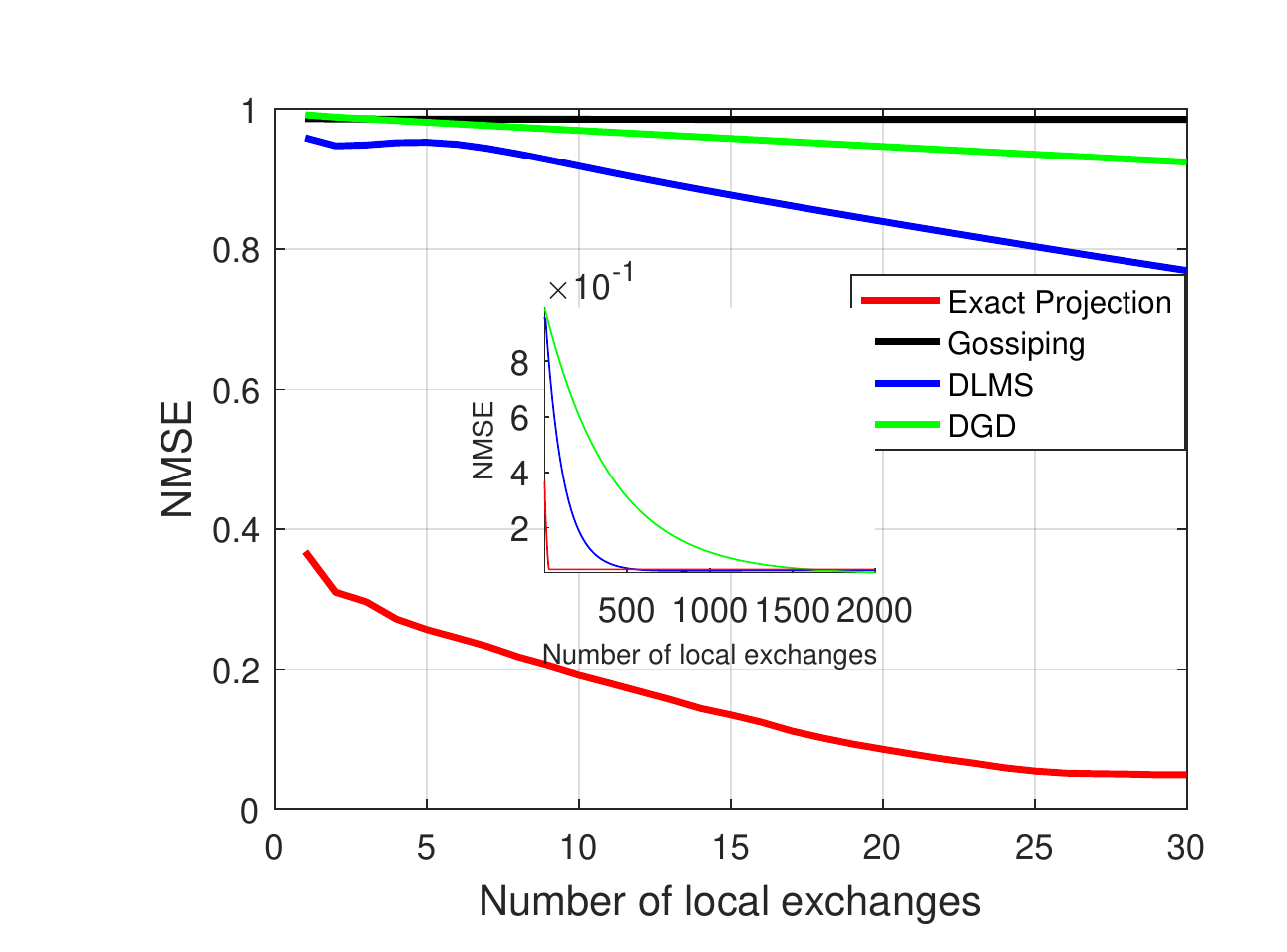}
         \caption{\small{NMSE as a function of the number of
  communications performed per node ($ \nodenum=30, \subspacedim=5$, $\beta=5,$ Erdős–Rényi graph with $p_{\text{miss}}=0.7$, $\rho=0.1$, $I_{\text{max}}=1000$,
$\regnucper=0.9, \regnucpar=0.1, \eigenreg=0.1$).}}
         \label{NMSE-5}
     \end{figure}

     \begin{figure}[t]
         \centering
         \includegraphics[width=0.4\textwidth]{./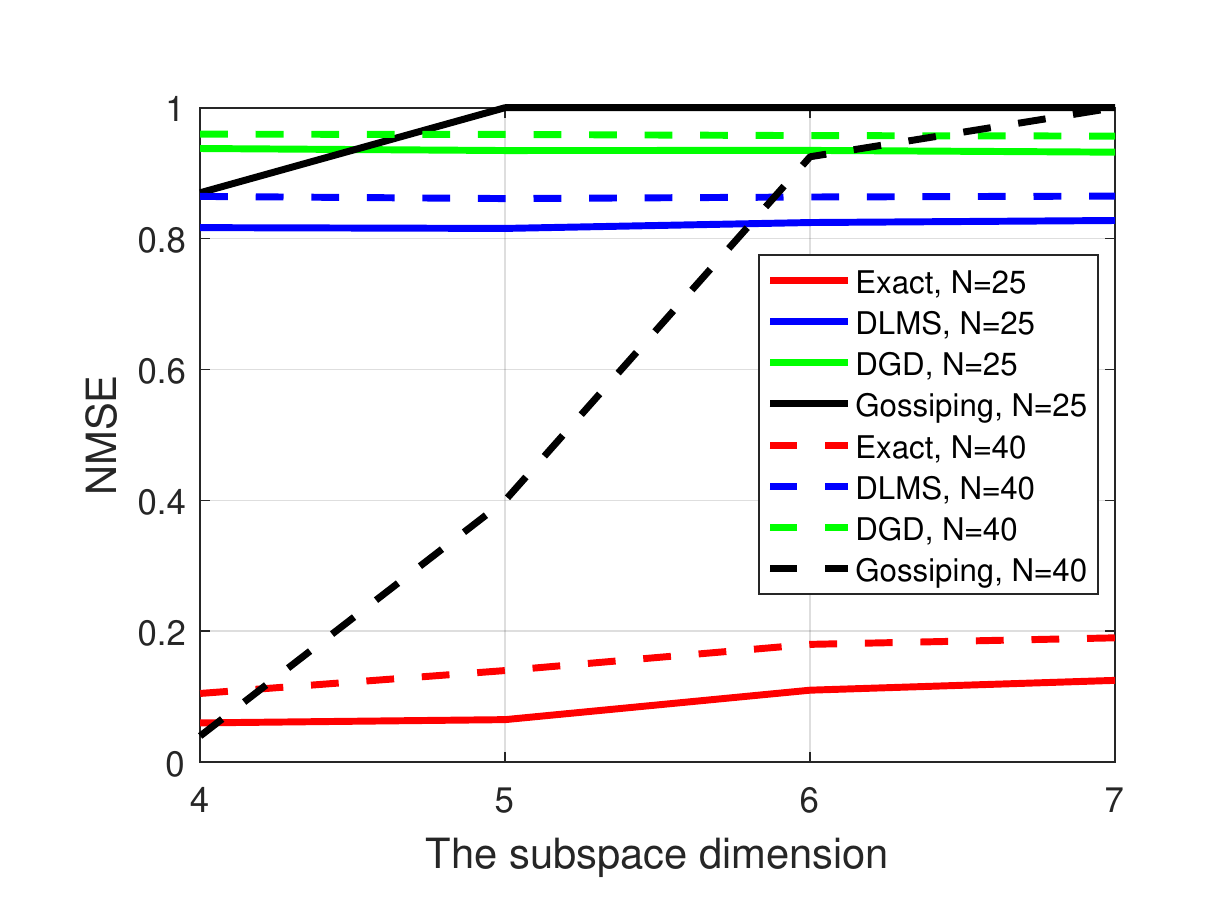}
     \caption{\small{NMSE as a function of the subspace dimension (Erdős–Rényi network
     with $p_{\text{miss}}=.7$, $\beta=5,  L=20$, $\rho=0.1$, $I_{\text{max}}=1000$,
$\regnucper=0.9, \regnucpar=0.1, \eigenreg=0.1$).}}
        \label{NMSE-6}
     \end{figure}

\begin{figure}[t]
         \centering
         \includegraphics[width=0.4\textwidth]{./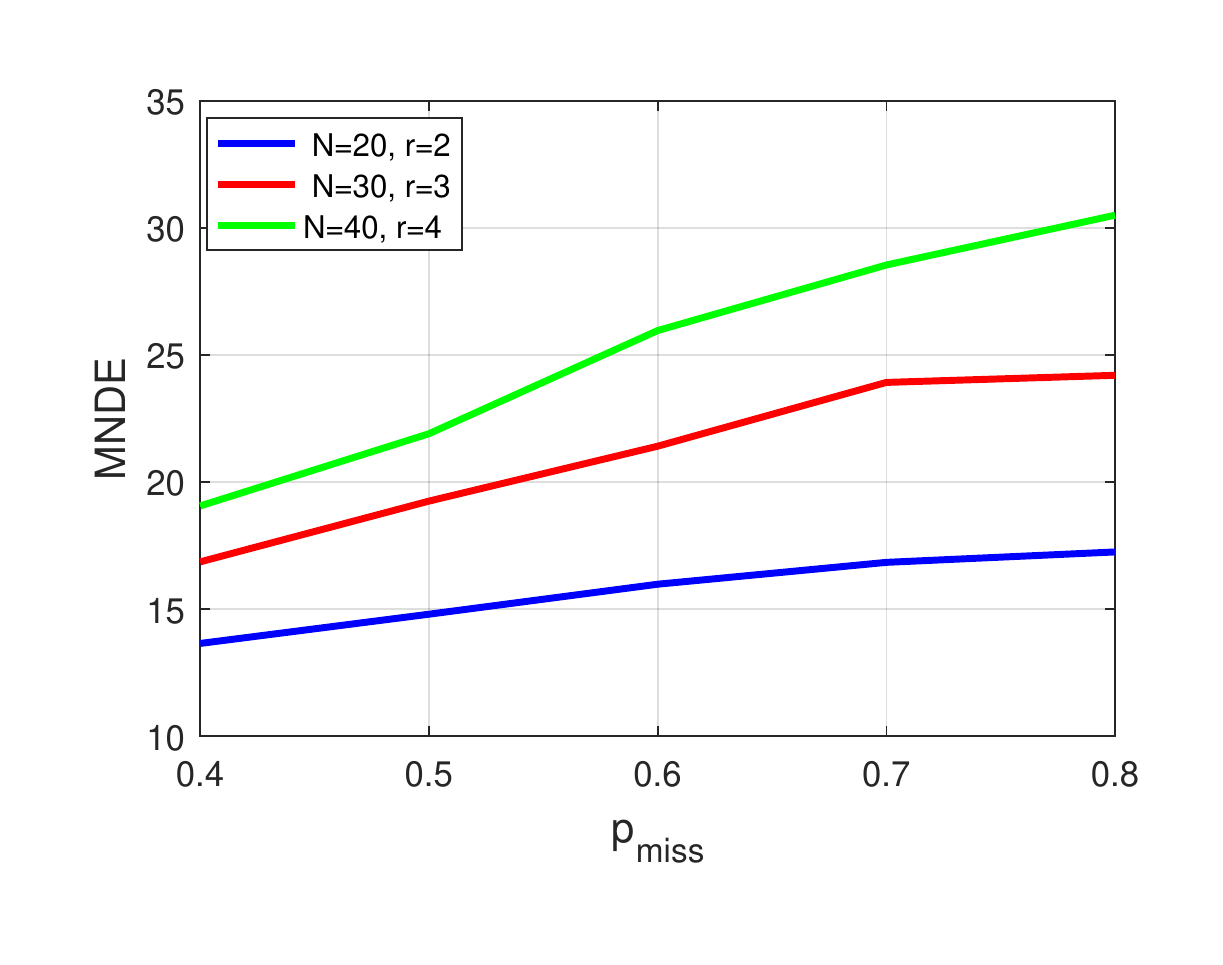}
    \caption{\small{MNDE for the exact projection method as a function of $p_{\text{miss}}$ ($\rho=0.1$, $I_{\text{max}}=1000$,
$\regnucper=0.9, \regnucpar=0.1, \eigenreg=0.1$, $\tau=0.005$, Erdős–Rényi network).}}
         \label{MNDE}
     \end{figure}

\begin{figure}[t]
\centering
\includegraphics[width=0.4\textwidth]{./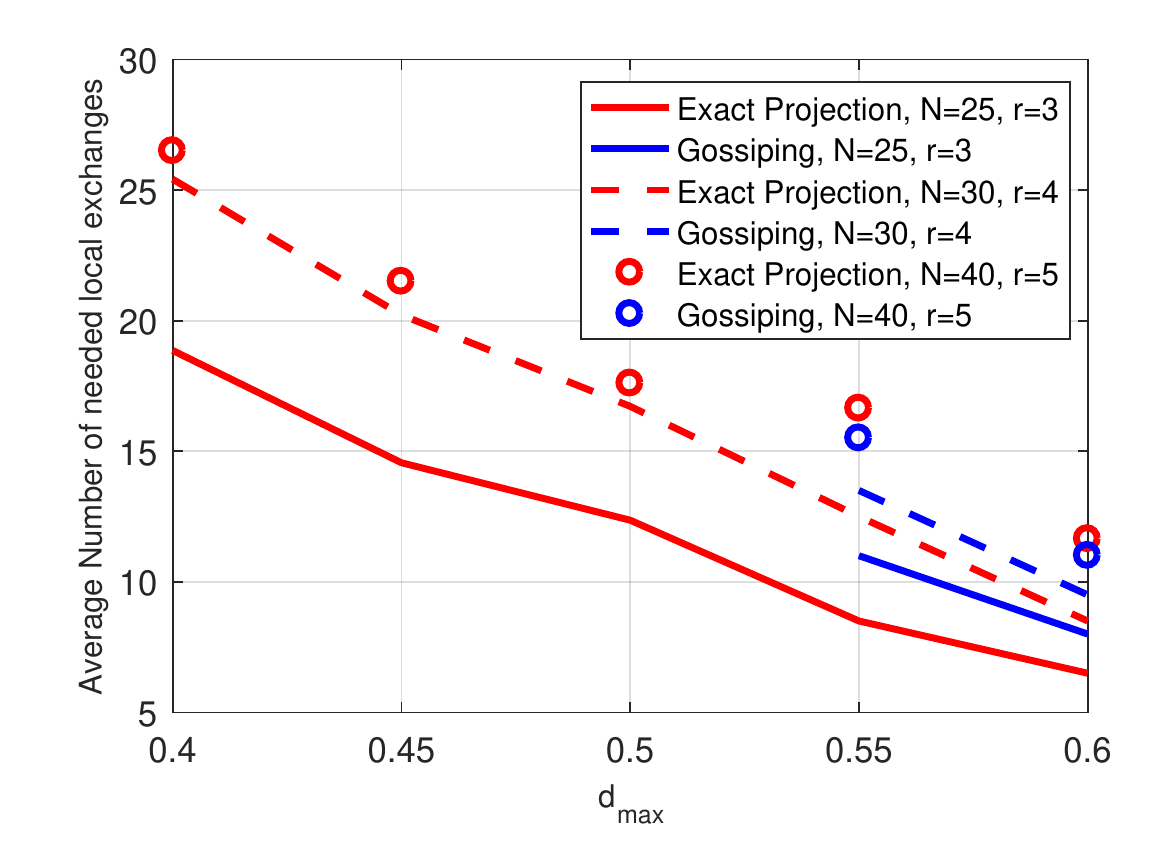}
       \caption{\small{Average number of  local exchanges required to satisfy
       $\text{NMPE}<\gamma_\text{target}=.1$ vs. the WSN graph  parameter
       $d_{\text{max}}$  ($\rho=0.1$, $I_{\text{max}}=1000$, $\regnucper=0.9, \regnucpar=0.1, \eigenreg=0.1$.)}}
\label{WSN-1}
\end{figure}

\begin{figure}[t]
         \centering
         \includegraphics[width=0.4\textwidth]{./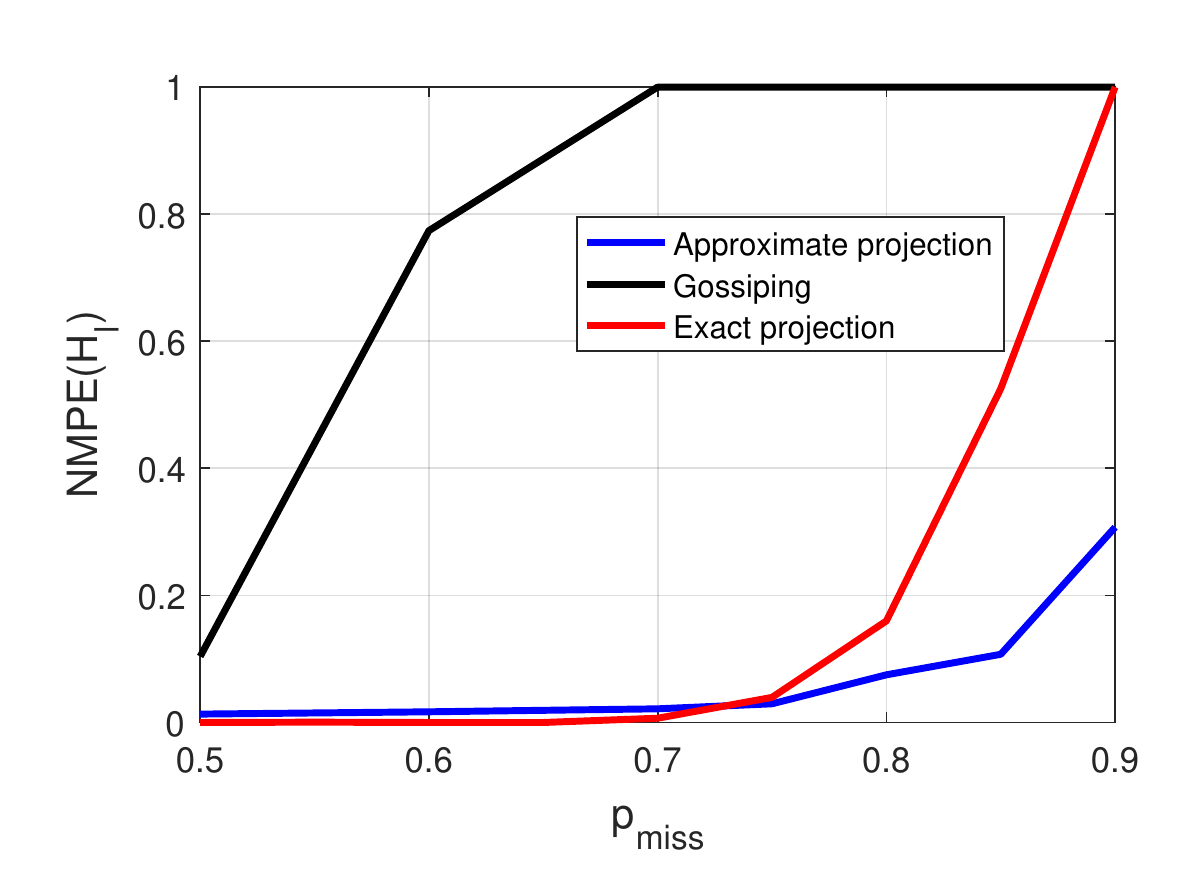}
         \caption{\small{NMPE as a function of $p_{\text{miss}}$ for the
         Erdős–Rényi networks ($\itnum=\nodenum-1$, $\nodenum=20, \subspacedim=3$, $\rho=0.1$, $I_{\text{max}}=1000$,
$\regnucper=0.9, \regnucpar=0.1, \eigenreg=0.2, \regpar=10$).}}
         \label{approx-new1}
\end{figure}

     \begin{figure}
         \centering
         \includegraphics[width=0.4\textwidth]{./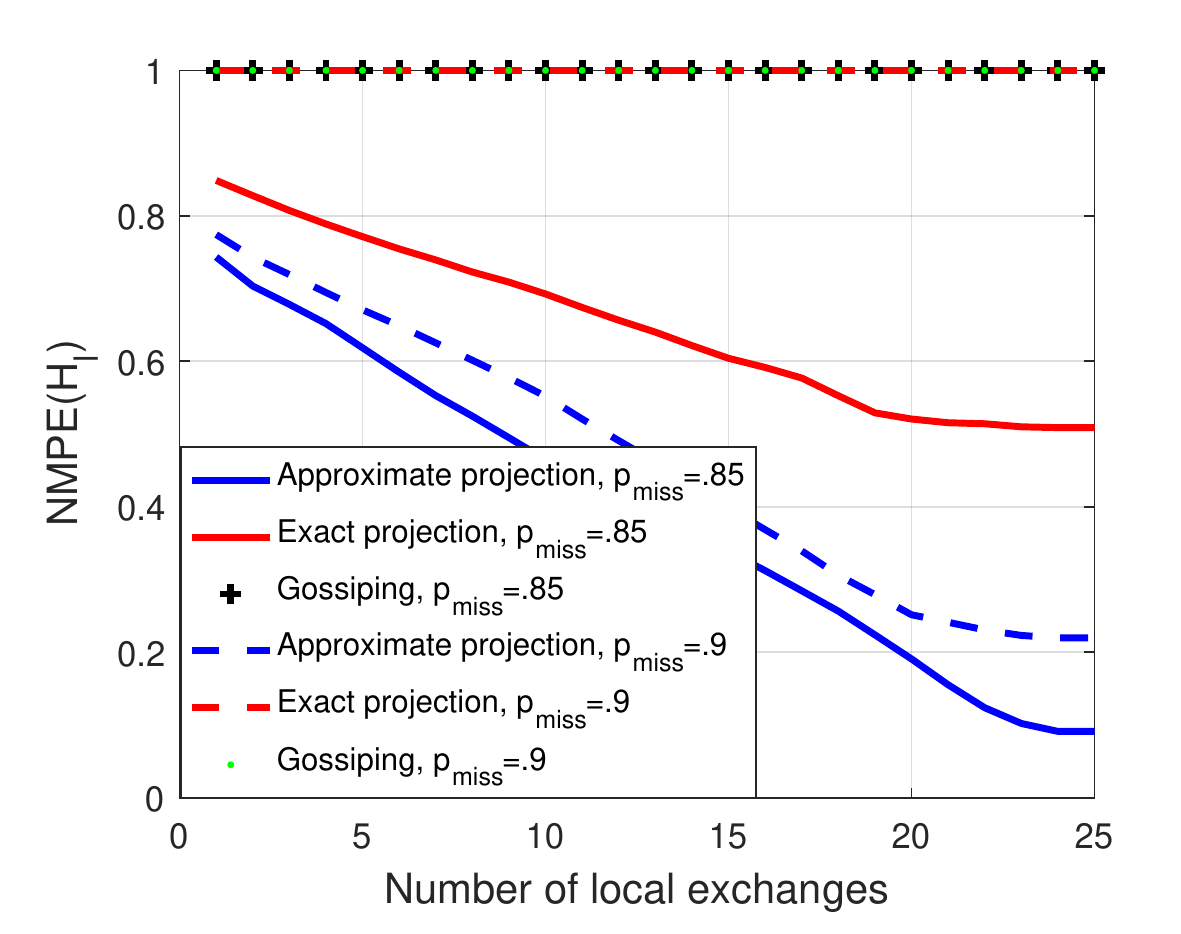}
     \caption{\small{NMPE as a function of the number of
 communications performed per node  for the Erdős–Rényi networks ($\nodenum=25, \subspacedim=3$, $\rho=0.1$, $I_{\text{max}}=1000$,
$\regnucper=0.9, \regnucpar=0.1, \eigenreg=0.2, \regpar=10$).}}
        \label{NMPE-approx-er}
     \end{figure}

     \begin{figure}[t]
         \centering
         \includegraphics[width=0.4\textwidth]{./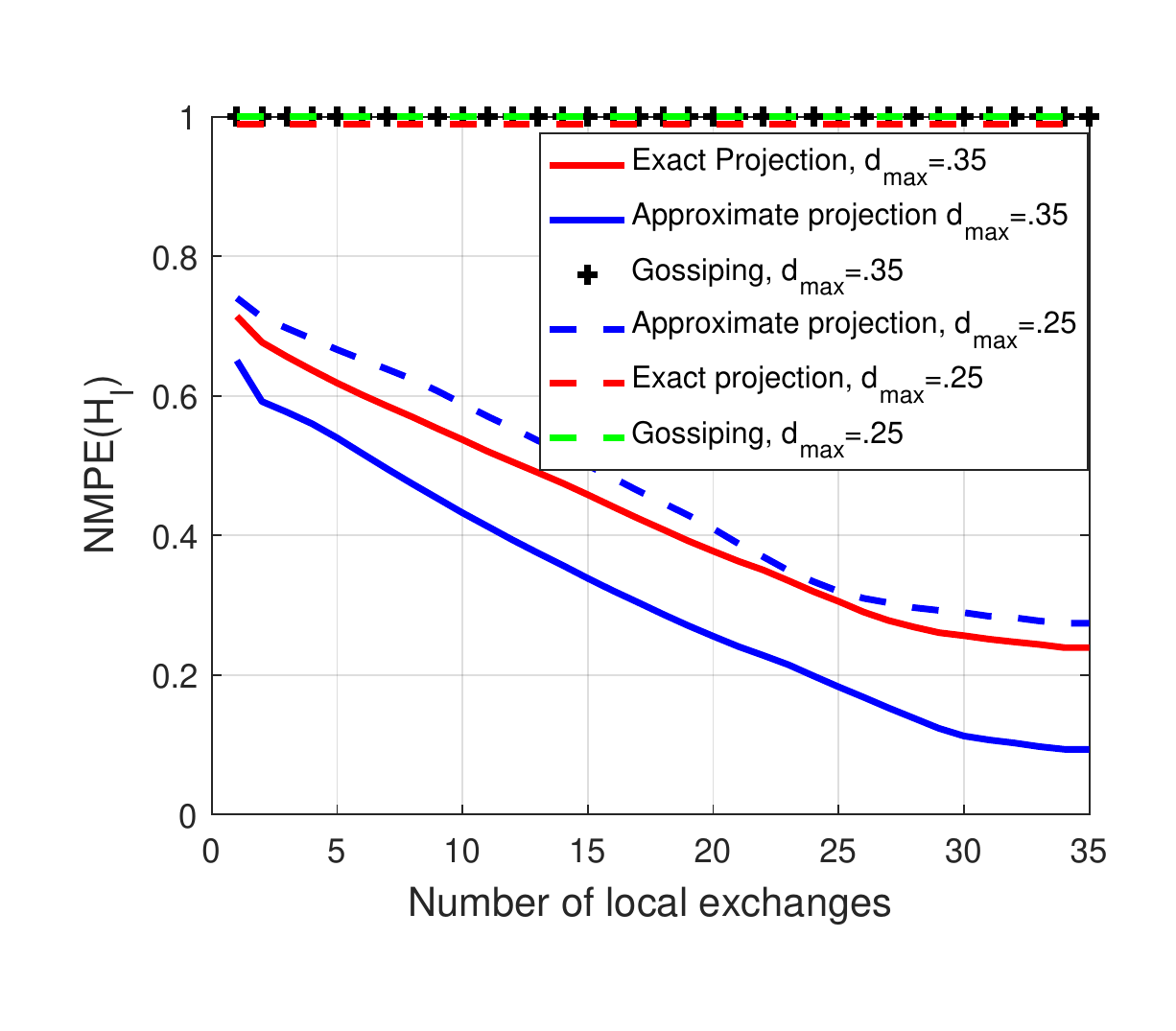}
    \caption{\small{NMPE as a function of the number of
  communications performed per node for the WSN networks ($\nodenum=35, \subspacedim=5$, $\rho=0.1$, $I_{\text{max}}=1000$,
$\regnucper=0.9, \regnucpar=0.1, \eigenreg=0.2, \regpar=10$).}}
         \label{NMPE-approx-wsn}
     \end{figure}

     \begin{figure}[t]
         \centering
         \includegraphics[width=0.4\textwidth]{./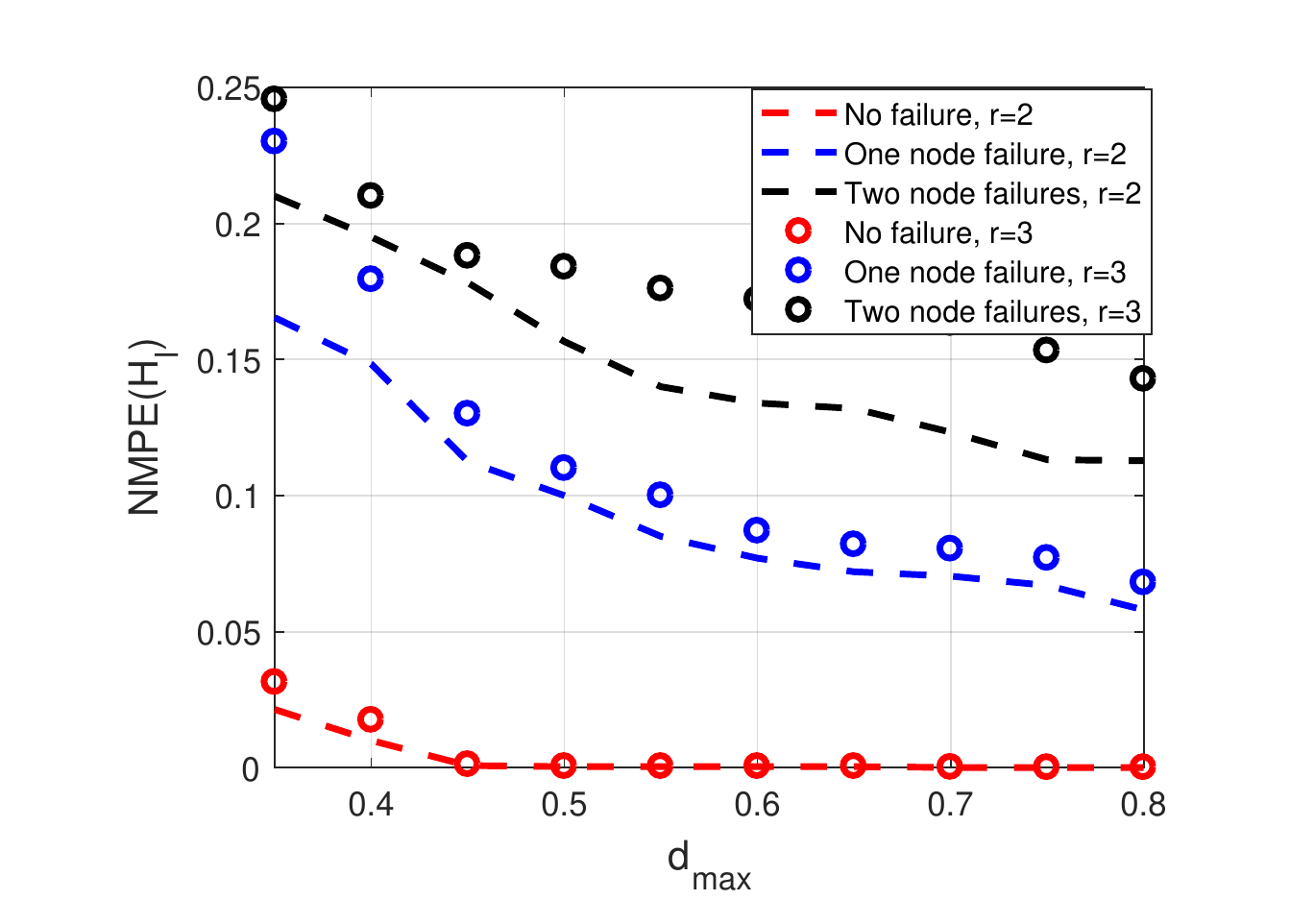}
    \caption{\small{NMPE as a function of $d_{\text{max}}$ for the WSN networks ($\nodenum=20$, $\rho=0.1$, $I_{\text{max}}=1000$,
$\regnucper=0.9, \regnucpar=0.1, \eigenreg=0.1, \regpar=10$).}}
         \label{extra}
     \end{figure}

\cmt{overview}This section validates the performance of the proposed algorithms by
means of numerical experiments.\footnote{The code necessary to reproduce the
experiments is available at \texttt{github.com/uiano/fast\_decentralized\_projections}.}

\cmt{Data Generation}The data generation process is as follows.
\begin{myitemize}%
\myitem\cmt{subspace}Matrix $\uparmat$ is obtained by orthonormalizing an
$\nodenum \times {\subspacedim}$ matrix with i.i.d. standard Gaussian
entries. \arev{This basis is assumed exactly known, meaning that the
error due to the selection of the basis is disregarded; see
Sec.~\ref{sec:basis}.}
\myitem\cmt{observations}To generate the observations
$\obsvec=\desiredvec+\noise$, the noise is drawn as
$\noise\sim\mathcal{N}(\bm 0_\nodenum,\bm I_\nodenum)$ and the signal
as
$\desiredvec=\uparmat\coordvec$, where  $\coordvec$ is obtained as 
$\coordvec=\sqrt{\beta\nodenum/\subspacedim}~\coordvec_0$
with $\coordvec_0\sim\mathcal{N}(\bm 0_\subspacedim,\bm
I_\subspacedim)$ and with  $\beta\define \expectednb{\|\desiredvec\|_2^2}/\expectednb{\|\noise\|_2^2}$ the \emph{signal-to-noise ratio} (SNR).

\end{myitemize}%

\cmt{Graph Models}The following kinds of networks are generated:
\begin{myitemize}
\myitem\cmt{Erdos-Renyi}(i) Erdős–Rényi  graphs, where the presence of each undirected edge is an
i.i.d. Bernoulli($1-p_{\text{miss}}$) random variable with
$p_{\text{miss}}$ the missing edge probability. 
\myitem\cmt{WSN}(ii) wireless sensor network (WSN)  graphs,  generated  by deploying the nodes uniformly at random
over a square area of unit side length and connecting them with an
edge if the internode distance is smaller than $d_{\text{max}}$.
\end{myitemize}%
\cmt{ensure connectivity}If the generated graph in (i) and (ii) is not connected, then additional edges are 
introduced to ensure connectivity. More specifically, if the graph has $c$
components, $c-1$ eges are added between nodes chosen at random from
each component.

\cmt{The compared methods}%
\begin{myitemize}
\myitem\cmt{least squares}Among the compared methods, those for
decentralized optimization 
obtain the projection by solving the least squares problem
$\argmin_{\coordvec}\|\obsvec-\uparmat\coordvec\|^2$. This includes
    \begin{myitemize}
    \myitem\cmt{DLMS}(ii) the \emph{distributed least mean squares} (DLMS)
    method in~\cite{mateos2012drls} with augmented Lagrangian
    parameter $\dlmsauglagpar$ and step size
    $ \dlmsstepsize$, which builds upon ADMM,
    \myitem\cmt{DGD}and (ii) the \emph{decentralized gradient descent} (DGD)
    method in~\cite{shi2017distributed} with step size
    $\mu_{\text{DGD}}$, which is based on gradient descent.
    \end{myitemize}
    \myitem\cmt{Shifts}Other methods iteratively apply a shift matrix:
    \begin{myitemize}
    \myitem\cmt{barbarossa}(iii) The \emph{gossiping} scheme
    in~\cite{barbarossa2009projection} obtains the shift matrix
    $\shiftmat$ that provides fastest convergence of
    $\shiftmat^\itind$ to $\projmat$ as $l\rightarrow \infty$
    according to a certain criterion. Then, the nodes collaboratively
    obtain $\obsvec\itnot{\itind}=\filtermat_\itind \obsvec$ with
    $\filtermat_\itind=\shiftmat^\itind$ at the $\itind$-th exchange
    round; see also Sec.~\ref{sec:graphfilters}. Signal
    $\obsvec\itnot{\itind}$ asymptotically converges to the desired
    projection.
    \myitem\cmt{Segarra}(iv) The \emph{rank-1} method
    in~\cite{segarra2017graphfilterdesign} obtains a shift matrix
    $\shiftmat$ when $\projmat$ is of rank 1 and then applies a graph
    filter. The resulting graph filter is generally of
    maximum order 
    $\nodenum-1$.
    \end{myitemize}%
    \end{myitemize}%

    \cmt{Implementation details}Some implementation details follow.
    \begin{myitemize}
    \myitem\cmt{filter coefficients}To alleviate numerical issues, the method
    in~\cite{segarra2017graphfilterdesign} and the proposed algorithms
    use node-dependent
    coefficients~\cite[Sec.~II-B]{segarra2017graphfilterdesign}. For
    comparison purposes, an order $\itind$ filter $\filtermat_\itind$
    is obtained for each number $\itind$ of local exchanges by fitting
    the node-dependent coefficients to minimize
    $\|\projmat-\filtermat_\itind\|_\text{F}^2$. 
    \myitem\cmt{feasibility}Regarding feasibility (see Sec.~\ref{sec:approximate}), the solver
     for \eqref{prob:exactrelaxed} in Appendix~\ref{sec:exactsolver}
    declares the problem as infeasible if the convergence criterion is
    not met after $I_{\text{max}}$ updates. 
    \myitem\cmt{Parameters}Both  proposed  methods use the
    same parameters in most experiments. This illustrates that a
     single set of parameters works reasonably well in a wide range of
     scenarios, which facilitates parameter tuning. For fairness,
     the competing methods use the same parameters in all experiments:
     $\dlmsauglagpar=0.001, \dlmsstepsize=1$,
     $\mu_{\text{DGD}}=0.1$. These values were adjusted to
     approximately yield the best performance in these scenarios.
    \end{myitemize}%

    \cmt{Metrics}Two main performance metrics will be adopted.
    \begin{myitemize}
    \myitem\cmt{NMSE}To quantify estimation error, some experiments
    obtain the \emph{normalized mean square error} (NMSE), defined as
    $\text{NMSE}(\filtermat_\itind)\define \mathbb{E}\left[||\boldsymbol
    {\xi}-\filtermat_\itind \obsvec ||_2^2 \right]
    / \mathbb{E}\left[||\desiredvec ||_2^2\right] $, where the
    expectation is taken over $\graph$,
    $\uparmat, \coordvec,$ and $\noise$.
    \myitem\cmt{NMPE}For the methods based on graph shift operators, the
     error between the obtained $\filtermat_\itind$ and the target
     $\projmat$ is measured through the \emph{normalized mean
     projection error}
     $\text{NMPE}(\filtermat_{\itind})\triangleq 
    \mathbb{E}\|\projmat -\filtermat_\itind\|_\text{F}^2/
       \subspacedim$, where $\mathbb{E}$ averages over $\graph$,
     and $\uparmat$. The normalization factor $\subspacedim$ was chosen so
     that $\text{NMPE}(\filtermat_{\itind})$ equals the NMSE when
     $\bm \xi=\projmat \bm z'$ with $\bm z'\sim\mathcal{N}(\bm
     0_\nodenum,\bm I_\nodenum)$.
     \myitem\cmt{MC}\arev{Both the NMSE and NMPE will be estimated by
    averaging across 500 Monte Carlo realizations.}
    \myitem\cmt{infeasibility}The optimization problems solved by some of these methods, namely the exact
     projection and the gossiping method, may be infeasible for
     certain realizations of $\graph$ and $\uparmat$; see
     Sec.~\ref{sec:approximate}. In that case, $\filtermat_\itind$ is
 set to $\bm 0$, which would penalize the tested algorithm by setting its  NMSE or NMPE closer to 1.

\end{myitemize}%

\subsection{Exact Projection Filters}

\cmt{Shift-based approaches}Figs.~\ref{rank-one}-\ref{NMPE-3} 
 depict the NMPE for those algorithms that rely on shift
 matrices. Whereas Figs.~\ref{rank-one}-\ref{NMPE-2} adopt an
 Erdős–Rényi random graph, a WSN is used in Fig.~\ref{NMPE-3}. The
 method in~\cite{segarra2017graphfilterdesign} is not displayed in
 Figs.~\ref{NMPE-2}-\ref{NMPE-3} because it cannot be applied when
 $\subspacedim>1$.
\begin{myitemize}%
\myitem\cmt{overall performance}Relative to the competing
 alternatives, the proposed exact projection method is seen to
 generally require a significantly smaller number of local exchange
 rounds to obtain an NMPE close to 0.
\myitem\cmt{finite iterations}%
\begin{myitemize}
\myitem\cmt{proposed}The aforementioned figures also reveal that the proposed
 method provides an exact projection in a finite number of
 iterations. As predicted by the Cayley-Hamilton Theorem, the order is
 never greater than $\nodenum-1$. However, most of the times, the
 actual order is much smaller than this upper bound. This phenomenon
 may not be clear at first glance from Figs.~\ref{rank-one}-\ref{NMPE-3} because they reflect \emph{average} behavior across
 a large number of Monte Carlo iterations. For this reason,
     $\text{NMPE}(\filtermat_{\itind})$ is generally positive for all
     $\itind<\nodenum-1$ when one or more realizations yield a
     filter with maximum order $\itind=\nodenum-1$.
\myitem\cmt{rank-1}On the other hand, the rank-1 method generally yields exact
 projection filters with order $\nodenum-1$ since it is not designed
 to minimize the order.  \myitem\cmt{gossiping}The gossiping method,
 however, does not necessarily produce an exact projection in a finite
 number of iterations. In exchange
 its implementation is simpler; cf. Sec.~\ref{sec:barbarossa}.
\end{myitemize}%

\end{myitemize}%

\cmt{Non-shift-based approaches}To compare with the competing
algorithms that are not based on graph filtering,
\begin{myitemize}
\myitem\cmt{vs. num exchanges}Figs.~\ref{NMSE-4}-\ref{NMSE-5} depict     $\text{NMSE}(\filtermat_\itind)$
vs.  the number $\itind$  of local exchanges for different
scenarios. 
\begin{myitemize}%
\myitem\cmt{NMSE}Observe that the NMSE of the exact projection  method does not converge to 0 due to
the observation noise $\noise$.
\myitem\cmt{scale}The miniatures demonstrate the different convergence time scales of
 DGD, DLMS, and the exact projection method. The cause for the slower
 convergence of DGD and DLMS is twofold: first, DGD and DLMS are
 general-purpose methods whereas the proposed method is tailored to
 the subspace projection problem. Second, DGD and DLMS require the
 selection of step size parameters, which in the simulations here were
 set to ensure convergence in virtually all Monte Carlo
 realizations. For certain specific realizations, though, one could
 find step sizes that yield a faster convergence.
\end{myitemize}%
\myitem\cmt{vs. rank}An alternative perspective to the comparison in
 Figs.~\ref{NMSE-4}-\ref{NMSE-5} is offered by Fig.~\ref{NMSE-6},
which depicts the NMSE vs. the subspace dimension $\subspacedim$ when
the number of local exchange rounds is fixed to $\itnum$. This
analysis would be necessary e.g. in real-time applications. It is
interesting to observe that the sensitivity to $\subspacedim$ is higher for the
gossiping algorithm as compared to the other methods.

\end{myitemize}

\cmt{graph sparsity}
\begin{myitemize}
\myitem\cmt{MNDE}To analyze the impact of the graph sparsity in the
required number of local exchanges to implement a projection exactly
with the proposed method, Fig.~\ref{MNDE} shows the \emph{mean number
of distinct eigenvalues} (MNDE) yielded by the exact projection method
vs.  $p_{\text{miss}}$. The NMDE is defined as the mean of the number
of distinct eigenvalues of $\fparmat$ plus the number of distinct
eigenvalues of $\fpermat$, which equals the order of the filter;
cf. Sec.~\ref{sec:filterorderminimization}. A threshold $\tau$ is used
to determine whether two eigenvalues are different. As expected, the
filter order increases as the graph becomes sparser; see
Sec.~\ref{sec:feasibility}. Remarkably, the increase is more
pronounced for larger networks and higher $\subspacedim$.

\myitem\cmt{required local exchanges}In the case of WSN graphs, the
impact of sparsity is studied in Fig.~\ref{WSN-1}, which displays the
number of local exchange rounds required to reach an NMPE below a
target value $\gamma_\text{target}$.  The gossiping method is also
shown for comparison purposes, yet it is not capable of attaining this
target for certain values of $d_\text{max}$ due to the infeasibility
of the optimization problem that it solves in a significant fraction
of the Monte Carlo realizations; see explanation after the definition
of NMSE earlier in Sec.~\ref{sec:sim}. As $d_{\text{max}}$ increases,
the network becomes more densely connected and therefore an exact
projection can be obtained in a smaller number of local
exchanges. This agrees with intuition and with Fig.~\ref{MNDE}.

\end{myitemize}

\subsection{Approximate Projection Filters}
\cmt{fixed num. local exchanges}As explained in
Sec.~\ref{sec:approximate}, a graph filter capable of exactly
 implementing a projection may not exist when the graph is highly
 sparsely connected. The approximate projection method in that section provides a
 graph filter capable of approximating such a projection. To
 illustrate how this algorithm complements the exact projection
 method from Sec.~\ref{sec:exactproj}, Fig.~\ref{approx-new1} shows
 NMPE after $\itnum=\nodenum-1$ local exchanges for both methods along
 with the gossiping algorithm.
 \begin{myitemize}
\myitem\cmt{when exact}The exact projection method is seen to provide a filter that
exactly implements the target projection when $p_\text{miss}$ is below
approximately 0.7. For larger $p_\text{miss}$, the NMPE becomes
strictly positive. It is important to emphasize that the reason
is \emph{not} that this method provides filters that do not exactly
implement the target projection. When the problem is feasible, the
filters yield an exact projection. However, as $p_\text{miss}$
increases, a smaller fraction of Monte Carlo realizations give rise to
a feasible problem and this is penalized in the NMSE computation; see
explanation after the definition of NMSE earlier in
Sec.~\ref{sec:sim}.
\myitem\cmt{relative to approx}As expected, 
the exact projection method performs better than its approximate
counterpart when exact projection filters exist. However, for
sufficiently sparse graphs, the NMPE of the exact projection method
explodes, whereas the approximate projection method remains low. 
\end{myitemize}

\cmt{vs. local exchanges}Figs.~\ref{NMPE-approx-er}
and~\ref{NMPE-approx-wsn} present the evolution of the NMPE vs. the
number of local exchange rounds for Erdős–Rényi and WSN graphs
respectively with several degrees of sparsity. These figures showcase
that the proposed approximate projection method can reasonably
approximate a projection with a small number of local exchanges even
if the graph is highly sparse. Note that for some of the sparsity
levels used in these figures, the problems solved by the gossiping
algorithm and the exact projection method become infeasible.

\cmt{Robustness}\arev{To analyze the robustness of the approximate projection
method, Fig.~\ref{extra} depicts its NMPE for different numbers of
node failures. Note that the exact projection method cannot be applied
since the removal of one or more sensors renders the exact projection
problem infeasible except for trivial cases where their measurements
bear no information about the estimated signal. Whenever the network
detects that a node has failed, the graph filter is recomputed with
the updated topology.  As expected, the degradation is more pronounced
when the network is sparser (smaller $d_{\max}$). Due to the
distribution used to generate $\uparmat$, each measurement contains,
on average, a fraction $1/\nodenum$ of the signal energy. This informally
indicates that $n/\nodenum$ lower bounds the error attainable in the
presence of $n$ node failures. Observe that the NMPE is indeed close
to such fundamental limits, thereby establishing that the approximate
projection method is reasonably robust to node failures.  }

\section{\arev{Conclusions and Discussion}}
\label{sec:conclusions}

This paper develops methods to obtain graph filters that can be used
 to compute subspace projections in a decentralized fashion. The
 approach relies on transforming the filter design task into a
 shift-matrix design problem. The first method addresses the latter by
 exploiting the eigenstructure of feasible shift matrices. The second
 method builds upon the first to approximate projections when no
 feasible projection filter exists. An exhaustive simulation analysis
 demonstrates the ability of the designed filters to effectively
 produce subspace projections in a small number of iterations. This
 contrasts with existing methods, whose convergence is generally
 asymptotic.

 \arev{The main strength of graph filters lies in their simplicity. As
 described before, the inference operations considered in this paper can be
 implemented with $\itnum\nodenum$ transmissions. Lower and
 higher values can be attained by alternative communication
 strategies, including centralized processing. The approach adopted in
 practice will depend on factors such as energy and hardware
 constraints.  }

\arev{The filters obtained through the proposed methods also inherit
 the general numerical limitations of graph filters. In turn, these
 limitations stem from the well-known  conditioning issues of
 Vandermonde systems, e.g. \eqref{eq:vander}. This limits the
 application of graph filters to networks whose number of nodes is
 comparable to the ones in Sec.~\ref{sec:sim}. Further work is
 required by the graph signal processing community to enable the
 implementation of graph filters in significantly larger networks.}

\arev{Another direction along which the proposed schemes can be
 extended is by lifting the symmetry constraint on $\shiftmat$. Since
 the target matrix $\projmat$ is symmetric, the gain may not be
 significant. Future work will investigate this possibility.}

\balance
\if\editmode1 
\onecolumn
\printbibliography
\twocolumn
\else
\bibliography{\bibfilenames}
\fi

\appendices

\section{Proof of \thref{prop:prefeasible}}
\label{proof:prefeasible}
Let
$\filtermat(\filtercoefvec,\shiftmat)\define \sum_{l=0}^{L}c_{l}\shiftmat^{l}$
and note from the definition of $\polfeasmatset{\projmat}$ that
$\shiftmat\in\polfeasmatset{\projmat}$ iff there exists
$\filtercoefvec$ such that
\begin{align}
\label{eq:hequaluiu}
\filtermat(\filtercoefvec,\shiftmat)=\uparmat\uparmat^{\top}=\begin
{bmatrix}\uparmat& \upermat\end
{bmatrix}\left[	\begin{matrix} \identity_{\subspacedim}& \bm{0}\\ \bm{0}&\bm{0}\end{matrix} \right]\begin
{bmatrix}\uparmat^{\top}\\ \upermat^{\top}\end {bmatrix}.
\end{align}
Since $\shiftmat$ is symmetric, it admits an  eigenvalue
decomposition
$\shiftmat=\topologyconstraintmat\bm{\Lambda}\topologyconstraintmat^{\top}$. Thus,
\begin{subequations}
\label{decomposition}
\begin{align}
\filtermat(\filtercoefvec,\shiftmat)=&\topologyconstraintmat\left(\sum_{l=0}^{L}c_l\bm{\Lambda}^{l}\right)\topologyconstraintmat^{\top}
\\=&
\label{eq:decompositionb}
\topologyconstraintmat \left[ \begin{matrix} \identity_{r}& \bm{0}\\ \bm{0}&\bm{0}\end{matrix} \right]
\topologyconstraintmat\transpose
\\=&
\label{eq:decompositionc}
\topologyconstraintmat_{\parallel}\topologyconstraintmat_{\parallel}^{\top},
\end{align}
\end{subequations}
where $ \topologyconstraintmat_{\parallel}$ comprises the first
$\subspacedim$ columns of
$\topologyconstraintmat\define[ \topologyconstraintmat_{\parallel}, \topologyconstraintmat_{\perp}]$. The
second and third equalities in  \eqref{decomposition} follow from the
fact that
\eqref{eq:hequaluiu} implies that 
$\filtermat(\filtercoefvec,\shiftmat)$ has an eigenvalue 1 with
multiplicity $\subspacedim$ and an eigenvalue 0 with multiplicity
$\nodenum-\subspacedim$ and,
therefore, \eqref{eq:decompositionb} and \eqref{eq:decompositionc}
must hold for some ordering of the eigenvectors in the
eigendecomposition $\shiftmat=\topologyconstraintmat\bm{\Lambda}\topologyconstraintmat^{\top}$.

From \eqref{eq:hequaluiu} and \eqref{eq:decompositionc}, it follows
that
$\topologyconstraintmat_{\parallel}\topologyconstraintmat_{\parallel}^{\top}=\uparmat\uparmat^{\top}$
and, therefore,
$\range{\topologyconstraintmat_{\parallel}\topologyconstraintmat_{\parallel}^{\top}}=\range{\uparmat\uparmat^{\top}}$,
which in turn is equivalent to
$\range{\topologyconstraintmat_{\parallel}}=\range{\uparmat}$. Consequently,
$\topologyconstraintmat_{\parallel}=\uparmat\dummymat_{\parallel}$ for
some $\dummymat_{\parallel}$. Moreover,
since
$\identity_{r}=\topologyconstraintmat_{\parallel}^{\top}\topologyconstraintmat_{\parallel}=\dummymat_{\parallel}^{\top}\uparmat^{\top}\uparmat\dummymat_{\parallel}=\dummymat_{\parallel}^{\top}\dummymat_{\parallel}$,
one can conclude that $\dummymat_{\parallel}$ is orthonormal. Similarly,
$\range{\topologyconstraintmat_{\perp}}=\rangeperp{\topologyconstraintmat_{\parallel}}=\rangeperp{\uparmat}=\range{\upermat}$,
which implies that 
$\topologyconstraintmat_{\perp}=\upermat\dummymat_{\perp}$ for some orthogonal
$\dummymat_{\perp}$.

Upon letting $\bm{\Lambda}_{\parallel}$ and  $\bm{\Lambda}_{\perp}$ be such that
\begin{align}\label{shift operator}
\shiftmat=\topologyconstraintmat\bm{\Lambda}\topologyconstraintmat^{\top}\define\begin {bmatrix}\topologyconstraintmat_{\parallel}& \topologyconstraintmat_{\perp}\end {bmatrix}\left[	\begin{matrix} \bm{\Lambda}_{\parallel}& \bm{0}\\ \bm{0}&\bm{\Lambda}_{\perp}\end{matrix} \right]\begin {bmatrix}\topologyconstraintmat_{\parallel}^{\top}\\ \topologyconstraintmat_{\perp}^{\top}\end {bmatrix}
\end{align}
and applying the above relations to \eqref{shift operator}, it follows
that 
\begin{align}\label{relation}
\shiftmat=\begin {bmatrix}\uparmat\dummymat_{\parallel}& \upermat\dummymat_{\perp}\end {bmatrix}\left[	\begin{matrix} \bm{\Lambda}_{\parallel}& \bm{0}\\ \bm{0}&\bm{\Lambda}_{\perp}\end{matrix} \right]\begin {bmatrix}\dummymat_{\parallel}^{\top}\uparmat^{\top}\\ \dummymat_{\perp}^{\top}\upermat^{\top}\end {bmatrix}=\nonumber\\\begin {bmatrix}\uparmat& \upermat\end {bmatrix}\left[	\begin{matrix} \fparmat& \bm{0}\\ \bm{0}&\fpermat\end{matrix} \right]\begin {bmatrix}\uparmat^{\top}\\ \upermat^{T}\end {bmatrix},
\end{align}
where $\fparmat\define\dummymat_{\parallel}\bm{\Lambda}_\parallel\dummymat_{\parallel}^{\top}$ and $\fpermat\define\dummymat_{\perp}\bm{\Lambda}_\perp\dummymat_{\perp}^{\top}$.

\section{Proof of \thref{prop:evalsnucleardif}}
\label{proof:evalsnucleardif}
Let $\bm A=\auxevecmat\auxevalmat\auxevecmat\transpose$ be an
 eigenvalue decomposition. Then,
 \begin{align*}
 {{\left\| \bm{A}\otimes {\identity_{\nodenum}}-{\identity_{\nodenum}}\otimes \bm{A} \right\|}_{*}}={{\| \auxevecmat\auxevalmat {{\auxevecmat}^{\top}}\otimes {\identity_{\nodenum}}-{\identity_{\nodenum}}\otimes \auxevecmat\auxevalmat {\auxevecmat^{\top}} \|}_{*}},
\end{align*}
Applying the properties of the Kronecker product and the invariance of
the nuclear norm to orthogonal transformations, 
\begin{align*}
 {{\left\| \bm{A}\otimes {\identity_{\nodenum}}-{\identity_{\nodenum}}\otimes \bm{A} \right\|}_{*}}=
 &{{\|
 (\auxevecmat\otimes \identity_{\nodenum})(\boldsymbol\auxevalmat \otimes
 {\identity_{\nodenum}}){{(\auxevecmat\otimes \identity_{\nodenum})}^{\top}}}}\\&{{-({{\identity_{\nodenum}}^{{}}}\otimes\auxevecmat
 )({\identity_{\nodenum}}\otimes \boldsymbol\auxevalmat
 ){{( \identity_{\nodenum} \otimes \auxevecmat)}^{\top}} \|}_{*}}&\nonumber\\&={{\left\| (\boldsymbol\auxevalmat \otimes {\identity_{\nodenum}})-({\identity_{\nodenum}}\otimes \boldsymbol\auxevalmat ) \right\|}_{*}}.
\end{align*}
From the definition of the nuclear and $\ell_1$ norms,
\begin{subequations}
\begin{align}
\nonumber
{{\|
(\boldsymbol\auxevalmat \otimes
{\identity_{\nodenum}})}}&{{-({\identity_{\nodenum}}\otimes \boldsymbol\auxevalmat
) \|}_{*}}
={{\left\| \mathrm{diag}(\boldsymbol\auxevalmat \otimes
{\identity_{\nodenum}}-{\identity_{\nodenum}}\otimes \boldsymbol\auxevalmat
) \right\|}_{1}}
\\=& {{\left\| \boldsymbol{\lambda} \otimes
{\one_{\nodenum}}-{\one_{\nodenum}}\otimes \boldsymbol{\lambda}  \right\|}_{1}}
\\=&\sum\limits_{i=1}^{\nodenum}{\mid (\boldsymbol{\lambda} \otimes
{\one_{\nodenum}}-{\one_{\nodenum}}\otimes \boldsymbol{\lambda}
{{)}_{i}} \mid}
\\=&\sum\limits_{i=1}^{\nodenum}{\mid (\boldsymbol{\lambda} \otimes
{\one_{\nodenum}}-{\one_{\nodenum}}\otimes \boldsymbol{\lambda}
)\transpose{\boldsymbol{e}_{{{\nodenum}^{2}},i}} \mid},
\label{eq:abssumlambdadif}
\end{align}
\end{subequations}
where
$ \boldsymbol{\lambda}\define \mathrm{diag}(\boldsymbol\auxevalmat)$ and
 $\canbasisvec_{M,i}$ is the $i$-th column of
 $\identity_{M}$. Splitting the summation
 in \eqref{eq:abssumlambdadif} and applying the fact that
 $\boldsymbol{e}_{{{\nodenum}^{2}},\nodenum(j-1)+k}=
 {\boldsymbol{e}_{\nodenum,j}}\otimes
 {\boldsymbol{e}_{\nodenum,k}}~\forall j,k$, it follows that
\begin{align}
 {{\| \bm{A}\otimes
 {\identity_{\nodenum}}}}&{{-{\identity_{\nodenum}}\otimes \bm{A} \|}_{*}}\\
 =&
\sum\limits_{j=1}^{\nodenum}{\sum\limits_{k=1}^{\nodenum}{\mid(\boldsymbol{\lambda} \otimes
 {\one_{\nodenum}}-{\one_{\nodenum}}\otimes \boldsymbol{\lambda}
 )\transpose{\boldsymbol{e}_{{{\nodenum}^{2}},\nodenum(j-1)+k}} \mid}}\nonumber\\
 =&\sum\limits_{j=1}^{\nodenum}{\sum\limits_{k=1}^{\nodenum}{\mid(\boldsymbol{\lambda} \otimes {\one_{N}}-{\one_{\nodenum}}\otimes \boldsymbol{\lambda} )\transpose({\boldsymbol{e}_{\nodenum,j}}\otimes {\boldsymbol{e}_{\nodenum,k}}) \mid}}\nonumber
\end{align}
Finally, from the properties of the Kronecker product,
\begin{subequations}
\begin{align}
 {{\| \bm{A}\otimes
 {\identity_{\nodenum}}}}&{{-{\identity_{\nodenum}}\otimes \bm{A} \|}_{*}}\\
 =&
\sum\limits_{j=1}^{\nodenum}{\sum\limits_{k=1}^{\nodenum}{\left|
 {{\eval
 }^{\top}}{\boldsymbol{e}_{N,j}}\otimes \one_{\nodenum}^{\top}{\boldsymbol{e}_{N,k}}-\one_{\nodenum}^{\top}{\boldsymbol{e}_{\nodenum,j}}\otimes
 {{\boldsymbol{\lambda}
 }^{\top}}{\boldsymbol{e}_{N,k}} \right|}}\nonumber\\
 =&\sum\limits_{j=1}^{\nodenum}{\sum\limits_{k=1}^{\nodenum}{\left|
 {{{\lambda} }_{j}}\otimes {1}-{1}\otimes {{{\lambda}
 }_{k}} \right|}}
 =\sum\limits_{j=1}^{N}{\sum\limits_{k=1}^{\nodenum}{\left| {{{\lambda} }_{j}}-{{{\lambda} }_{k}} \right|}}. \nonumber
\end{align}
\end{subequations}


\section{Proof of \thref{prop:necessaryfeasibility}}
\label{proof:necessaryfeasibility}

Consider first the following auxiliary result:
\begin{mylemma}
\thlabel{prop:prefshiftconds}
$\shiftmat \in \shiftmatset \cap \prefeasmatset{\projmat}$ iff the following
three conditions simultaneously hold
\begin{subequations}
\label{eq:prefshiftconds}
\begin{align}
&\topologyconstraintmat\vect(\shiftmat) = \bm 0\label{eq:prefshiftcondstopology},\\
&\upermat\transpose\shiftmat\uparmat = \bm
0\label{eq:prefshiftcondsseparate},\\
&\shiftmat = \shiftmat\transpose\label{eq:prefshiftcondssymmetry}.
\end{align}
\end{subequations}
\end{mylemma}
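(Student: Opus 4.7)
The plan is to prove each direction by exploiting the fact that $[\uparmat,\upermat]$ is an orthogonal $\nodenum\times \nodenum$ matrix, which gives a clean block decomposition of any $\shiftmat$ in the basis adapted to the signal subspace and its orthogonal complement.

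For the forward direction ($\Rightarrow$), I would first observe that every element of $\prefeasmatset{\projmat}$ is automatically symmetric: if $\shiftmat = \uparmat\fparmat\uparmat\transpose+\upermat\fpermat\upermat\transpose$ with $\fparmat,\fpermat$ symmetric, then $\shiftmat\transpose=\shiftmat$, yielding \eqref{eq:prefshiftcondssymmetry}. Multiplying this same expression on the left by $\upermat\transpose$ and on the right by $\uparmat$, and using $\upermat\transpose\uparmat=\bm 0$, gives $\upermat\transpose\shiftmat\uparmat=\bm 0$, yielding \eqref{eq:prefshiftcondsseparate}. Finally, $\shiftmat\in\shiftmatset$ means $(\shiftmat)_{\nodeind,\nodeind'}=0$ for every $(\nodeind,\nodeind')\notin\edgeset$; the rows of $\topologyconstraintmat$ were defined in Sec.~\ref{sec:filterorderminimization} to encode precisely these constraints for $\nodeind<\nodeind'$, so \eqref{eq:prefshiftcondstopology} holds.

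For the backward direction ($\Leftarrow$), the key step is the block expansion
\begin{align*}
\shiftmat =
\begin{bmatrix}\uparmat & \upermat\end{bmatrix}
\begin{bmatrix}
\uparmat\transpose\shiftmat\uparmat & \uparmat\transpose\shiftmat\upermat\\
\upermat\transpose\shiftmat\uparmat & \upermat\transpose\shiftmat\upermat
\end{bmatrix}
\begin{bmatrix}\uparmat\transpose\\ \upermat\transpose\end{bmatrix},
\end{align*}
which is valid because $[\uparmat,\upermat]$ is orthogonal. Using \eqref{eq:prefshiftcondsseparate} the lower-left block vanishes, and combining \eqref{eq:prefshiftcondsseparate} with \eqref{eq:prefshiftcondssymmetry} gives $\uparmat\transpose\shiftmat\upermat=(\upermat\transpose\shiftmat\uparmat)\transpose=\bm 0$ for the upper-right block. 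Setting $\fparmat\define\uparmat\transpose\shiftmat\uparmat$ and $\fpermat\define\upermat\transpose\shiftmat\upermat$, both are symmetric by \eqref{eq:prefshiftcondssymmetry}, so $\shiftmat$ has the pre-feasible form of \eqref{eq:def:cpref} and hence $\shiftmat\in\prefeasmatset{\projmat}$. To conclude $\shiftmat\in\shiftmatset$, I would note that \eqref{eq:prefshiftcondstopology} forces $(\shiftmat)_{\nodeind,\nodeind'}=0$ for all $(\nodeind,\nodeind')\notin\edgeset$ with $\nodeind<\nodeind'$, and \eqref{eq:prefshiftcondssymmetry} propagates this to the symmetric entries with $\nodeind>\nodeind'$; the self-loops are always in $\edgeset$ by the convention of Sec.~\ref{sec:projformulation}, so no diagonal constraints are imposed.

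The argument is essentially a direct manipulation, so there is no real obstacle beyond carefully keeping track of which conditions are needed where. The only subtlety worth flagging is the interplay between symmetry and the topology constraint: $\topologyconstraintmat$ alone is insufficient to certify $\shiftmat\in\shiftmatset$ precisely because it only encodes the strictly upper-triangular missing edges, and it is \eqref{eq:prefshiftcondssymmetry} that fills in the lower-triangular half.
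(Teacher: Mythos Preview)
Your proof is correct and follows essentially the same approach as the paper: both directions hinge on the block decomposition of $\shiftmat$ in the orthogonal basis $[\uparmat,\upermat]$, with the off-diagonal blocks killed by \eqref{eq:prefshiftcondsseparate} and \eqref{eq:prefshiftcondssymmetry}, and the topology membership recovered by combining \eqref{eq:prefshiftcondstopology} with symmetry. Your explicit remark that $\topologyconstraintmat$ alone only constrains the strictly upper-triangular entries, so that \eqref{eq:prefshiftcondssymmetry} is genuinely needed to conclude $\shiftmat\in\shiftmatset$, matches exactly the subtlety the paper flags in a footnote.
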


\begin{IEEEproof}
\begin{myitemize}
\myitem\cmt{$\Rightarrow$}The first step is proving that
$\shiftmat \in \shiftmatset \cap \prefeasmatset{\projmat}$
implies \eqref{eq:prefshiftconds}.
\begin{myitemize}%
\myitem\cmt{1,3}Conditions \eqref{eq:prefshiftcondstopology}
and \eqref{eq:prefshiftcondssymmetry}  follow from the
definition of $\shiftmatset$ and \thref{def:prefeas}.
\myitem\cmt{2}To verify \eqref{eq:prefshiftcondsseparate}, note
from  \thref{def:prefeas} that
$\shiftmat\in\prefeasmatset{\projmat}$ iff  $\shiftmat$ satisfies
 \eqref{eq:def:cpref} for some symmetric $\fparmat$ and
$\fpermat$. Multiplying \eqref{eq:def:cpref}  on the right by
$\umat\define [\uparmat,\upermat]$ and on the left by
$\umat\transpose$ yields
\begin{align}\label{eq:blockequations}
\left[	\begin{matrix}\uparmat^{\top}\shiftmat\uparmat& \uparmat^{\top}\shiftmat\upermat\\ \upermat^{\top}\shiftmat\uparmat&\upermat^{\top}\shiftmat\upermat\end{matrix} \right]=\left[	\begin{matrix} \fparmat& \bm{0}\\ \bm{0}&\fpermat\end{matrix} \right]
\end{align}
Condition \eqref{eq:prefshiftcondsseparate} corresponds to the block
$(2,1)$ in this equality.
\end{myitemize}%

\myitem\cmt{$\Leftarrow$}Conversely,
\begin{myitemize}%
\myitem\cmt{shift}if $\shiftmat$
satisfies \eqref{eq:prefshiftconds}, then it follows
from \eqref{eq:prefshiftcondstopology}
and\footnote{Please keep in mind that $\shiftmat\in \shiftmatset$ 
does not imply that $\shiftmat$ is symmetric and
\eqref{eq:prefshiftcondstopology} alone only imposes support constraints on
the upper-triangular entries of $\shiftmat$; cf. Sec.~\ref{sec:filterorderminimization}.} \eqref{eq:prefshiftcondssymmetry} that
$\shiftmat\in \shiftmatset$.
\myitem\cmt{prefeas}To show that
$\shiftmatset\in \prefeasmatset{\projmat}$, one needs to find
symmetric $\fparmat$ and $\fpermat$ such that \eqref{eq:def:cpref} or,
equivalently, \eqref{eq:blockequations} holds. This can be easily
accomplished just by setting $\fparmat=\uparmat^{\top}\shiftmat\uparmat$
and  $\fpermat = \upermat^{\top}\shiftmat\upermat$ since the (1,2) and
(2,1) blocks of  equality \eqref{eq:blockequations} will automatically
hold due to \eqref{eq:prefshiftcondsseparate} and \eqref{eq:prefshiftcondssymmetry}.
\end{myitemize}%


\end{myitemize}%

\end{IEEEproof}

\begin{myitemize}%
\myitem\cmt{matrices satisfying \eqref{eq:prefshiftcondssymmetry}
and \eqref{eq:prefshiftcondstopology}}Let
$\symmatset\define \{\shiftmat\in \rfield^{\nodenum\times \nodenum}:\shiftmat=\shiftmat\transpose\}$
and note
from   \thref{def:prefeas} that
$ \prefeasmatset{\projmat}\subset \symmatset$. Then,
$\shiftmatset \cap \prefeasmatset{\projmat}=
\shiftmatset \cap (\symmatset\cap \prefeasmatset{\projmat})=
(\shiftmatset \cap \symmatset)\cap \prefeasmatset{\projmat}$.

Now, consider the following
parameterization of $\shiftmatset \cap \symmatset$:
\begin{align}\label{topf}
\shiftmatset \cap \symmatset=\left\{\shiftmat=\sum_{i=1}^{\reducededgenum}\auxconst_{i}(\overbrace{\canbasisvec_{n_{i}}\canbasisvec_{n_{i}'}^{\top}+\canbasisvec_{n_{i}'}\canbasisvec_{n_{i}}\transpose}^{\singleedgeshiftmat_i}),\auxconst_i\in{\mathbb{R}}\right\},
\end{align}
which  can be expressed in vector form as
\begin{align}
\label{eq:symmetricshiftspar}
\vect(\shiftmatset \cap \symmatset)&=\left\{\vect(\shiftmat)=\sum_{i=1}^{\reducededgenum}\auxconst_{i}\vect(\singleedgeshiftmat
_{i})\right\}
\\&=\left\{[\vect(\singleedgeshiftmat
_{1}),\cdots,\vect(\singleedgeshiftmat
_{\reducededgenum})]\boldsymbol{\auxconst},{\boldsymbol{\auxconst}}\in{\mathbb{R}}^{\reducededgenum}\right\}.
\nonumber
\end{align}
\myitem\cmt{eqs in \eqref{eq:prefshiftconds}
combined}Combining \eqref{eq:prefshiftconds}
and \eqref{eq:symmetricshiftspar} yields
\begin{align}
\shiftmatset \cap \prefeasmatset{\projmat}=
\left\{\mathrm{vec^{-1}}(\singleedgeshiftmat \boldsymbol{\auxconst})~ \forall
\boldsymbol{ \auxconst}:
\upermat\transpose\mathrm{vec^{-1}}(\singleedgeshiftmat \boldsymbol{\auxconst}) \uparmat
= \bm 0
\right\}.
\end{align}

\myitem\cmt{dimension}Since the columns of $\singleedgeshiftmat $ are linearly independent, 
\begin{align}
\dim(\shiftmatset \cap \prefeasmatset{\projmat})
=& \mathrm{dim}\{\boldsymbol{\auxconst}:\upermat^{\top}\mathrm{vec^{-1}}(\singleedgeshiftmat \boldsymbol{\auxconst})\uparmat=\bm{0}\}\nonumber\\=&\mathrm{dim}\{\boldsymbol{\auxconst}:(\uparmat^{\top}\otimes\upermat^{\top})\singleedgeshiftmat \boldsymbol{\auxconst}=\bm{0}\} \nonumber.
\end{align}
Since
$(\uparmat^{\top}\otimes\upermat^{\top})\singleedgeshiftmat \in{\mathbb{R}}^{\subspacedim(\nodenum-\subspacedim)\times\reducededgenum}$,
it follows that
$\mathrm{dim}(\shiftmatset\cap\prefeasmatset{\projmat})=\reducededgenum-\mathrm{rank}((
\uparmat^{\top}\otimes\upermat^{\top})\singleedgeshiftmat )$.

\end{myitemize}%

\section{The Tightness  of the Relaxed Solution}
\label{sec:tightness}
{

\cmt{overview}This appendix further justifies why the objective of the
relaxed problem \eqref{prob:exactrelaxed} is a reasonable surrogate
for the objective of the original problem \eqref{prob:exact}.
\cmt{Notation}To simplify notation, some of the symbols used earlier will be reused here.

     \begin{figure}[t]
     \centering
         \includegraphics[width=0.45\textwidth]{./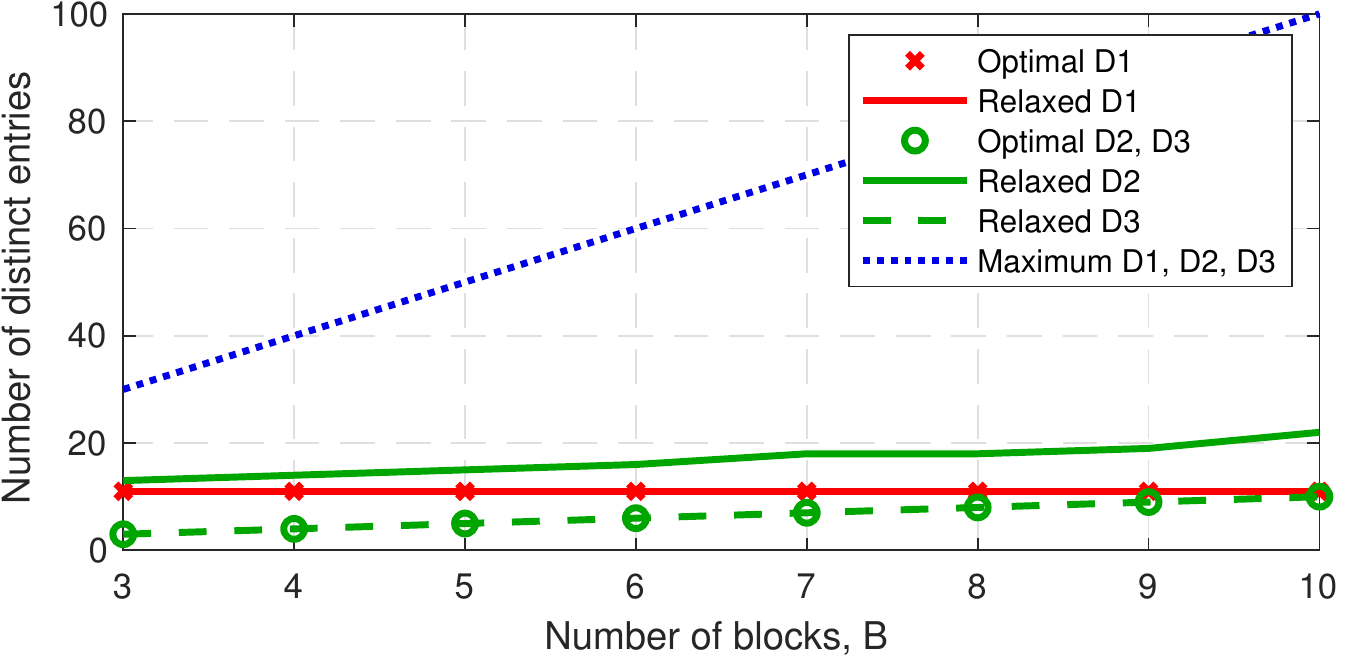}
   \caption{\arev{Comparison between the solution of \eqref{eq:kronl1} and
     that of $ \inf \{
\noevalsfun(\bm x):\bm A \bm x = \bm b \}$  for $M=N=10$. }}
         \label{fig:tightness}
     \end{figure}

\cmt{problem}To focus on the fundamental aspects, consider 
the optimization in \eqref{prob:exactrelaxed} for a fixed
$\fpermat$. The resulting problem is a special case of
\begin{subequations}
\label{prob:frelaxed}
\begin{align}
\minimize_{\fparmat}~~&||\fparmat\otimes \identity_\subspacedim
-\identity_\subspacedim \otimes \fparmat ||_\star\\
\st~~&\bm A_\parallel\vect({\fparmat}) = \bm b_\parallel
\end{align}
\end{subequations}
for some $\bm A_\parallel$ and $\bm b_\parallel$. Ideally, one would like to see how well
the solution to \eqref{prob:frelaxed} approximates the solution to 
\begin{subequations}
\label{prob:fnonrelaxed}
\begin{align}
\minimize_{\fparmat}~~&\noevalsfun(\fparmat)\\
\st~~&\bm A_\parallel\vect({\fparmat}) = \bm b_\parallel.
\end{align}
\end{subequations}
To this end, apply the eigendecomposition
$\fparmat=\fparevecmat\evalparmat\fparevecmat\transpose$ to rewrite
the solution of \eqref{prob:frelaxed} as
\begin{subequations}
\label{prob:frelaxedeigen2}
\begin{align}
\inf_{\text{orthogonal }\fparevecmat} \left[
\begin{array}{cc}
\displaystyle\inf_{\text{diagonal }\evalparmat}||\evalparmat\otimes \identity_\subspacedim
-\identity_\subspacedim \otimes \evalparmat ||_\star\\
\st ~\bm
A_\parallel\vect(\fparevecmat\evalparmat\fparevecmat\transpose) = \bm
b_\parallel\\
\end{array}
\right].
\end{align}
\end{subequations}
The inner problem, which captures the essence of the question to be addressed
here, is a special case of
\begin{subequations}
\label{eq:kronl1}
\begin{align}
\minimize_{\bm x}~~&\|\bm x \otimes \bm 1 - \bm 1 \otimes \bm x\|_1\\
\st~~&\bm A \bm x = \bm b
\end{align}
\end{subequations}
for some $\bm A$ and $\bm b$.  Thus, the ability
of \eqref{prob:frelaxed} to promote solutions with a reduced number of
distinct eigenvalues can be understood by analyzing how
well \eqref{eq:kronl1} promotes solutions with a reduced number of
distinct entries. With $\bm x^*$ denoting a minimizer
of \eqref{eq:kronl1} and with $\noevalsfun(\bm x)$ denoting the number
of distinct entries of $\bm x$, one is therefore interested in
analyzing how far $\noevalsfun(\bm x^*)$ is from 
$\noevalsfun^*\define \inf \{
\noevalsfun(\bm x):\bm A \bm x = \bm b \}$ for given $\bm A$ and $\bm
b$. Unfortunately, this is a challenging comparison since obtaining
$\noevalsfun^*$ generally entails  combinatorial complexity. However,
$\noevalsfun^*$ can be obtained exactly for certain families of $(\bm
A, \bm b)$. Specifically, the rest of this section will compare
$\noevalsfun(\bm x^*)$ and $\noevalsfun^*$ when 
$\bm A$  is drawn from 3 probability distributions.

Let $\bm A$ comprise $B$ block columns:
\begin{align}
\bm A \define
\left[
\begin{array}{c c c}
\bm A_1&\ldots&\bm A_B\\
\bm E_1&\ldots&\bm E_B\\
\bm 1\transpose &\ldots&\bm 1\transpose
\end{array}
\right].
\end{align}
Let also $\bm A_i\define[\bm a_{i,1},\ldots, \bm
  a_{i,N}]\in \rfield^{M \times N}$ and consider the following distributions:
  \begin{itemize}
  \item Distribution \textbf{$D_1$}:
 $\{\bm E_i\}_{i=1}^B$ are empty  and
  $\{\bm a_{i,j}\}_{i,j}$ are drawn independently from a
  continuous distribution.\footnote{Formally, a continuous
  distribution in this context is a distribution that is absolutely
  continuous with respect to Lebesgue
  measure~\cite{billingsley}.}
  \item Distribution \textbf{$D_2$}:
  $\{\bm a_{i,j}\}_{i}$ are drawn independently from a
  continuous distribution for $j=1,\ldots,N-1$ and $\bm
  a_{i,N}\define-\sum_{j=1}^{N-1} \bm a_{i,j}$.
  On the other hand,
  $\bm E_i=(-1/i)\bm e_i \bm e_1\transpose \in \rfield^{B-1 \times
  N}$ for $i=1,\ldots, B-1$, where $\bm e_i$ is, with an abuse of notation, the $i$-th
  column of the identity matrix with appropriate dimension. For $i=B$,
  $\bm E_B= (1/i)\bm 1\bm
  e_1\transpose$.

\item Distribution \textbf{$D_3$}:
 $\{\bm a_{i,j}\}_{i,j}$ are as in $D_2$. Matrix $\bm E_i$ is $\bm
   E_i=(-1/i)\bm e_i \bm 1\transpose \in \rfield^{B-1 \times N}$ for
   $i=1,\ldots, B-1$ and $\bm E_B= (1/i)\bm 1\bm
   1\transpose$ for $i=B$. 
  \end{itemize}

With these distributions, we have the following:
\begin{mytheorem}
\thlabel{th:tightness}
Let $\noevalsfun^*\define \inf \{
\noevalsfun(\bm x):\bm A \bm x = \bm b \}$ and let
$\bm b$ be a vector of the appropriate dimensions with all zeros
  except for the last entry, which equals $N\sum_{i=1}^B i =
  NB(B+1)/2$.
  \begin{enumerate}[label=(\alph*)]
  \item If $\bm A\sim D_1$ with $NB > M$, then
  $\noevalsfun^*=M+1$ with probability 1.
  \item If $\bm A\sim D_i$, $i\in {2,3}$,  with $M \geq B$, then
  $\noevalsfun^*=B$ with probability 1. 
  \end{enumerate}
\end{mytheorem}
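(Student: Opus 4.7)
My plan is to parameterize every candidate solution as $\bm x = \bm P \bm c$, where $\bm P \in \{0,1\}^{NB\times K}$ is a partition matrix with $K$ nonempty classes and $\bm c \in \rfield^K$ has pairwise distinct entries; then $\bm A \bm x = \bm b$ becomes $\bm A \bm P \bm c = \bm b$, and $\noevalsfun^*$ is the smallest $K$ for which some pair $(\bm P,\bm c)$ of this form exists.

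For part (a), the $\bm E$-block is absent, so $\bm A$ has $M+1$ rows. For the upper bound I fix any partition with $K = M+1$ nonempty classes (possible since $NB>M$); then $\bm A \bm P$ is $(M+1)\times(M+1)$, with its top $M$ rows being continuous random sums over the classes and its bottom row consisting of the positive class sizes. Exhibiting a single realization (e.g., one for which the top block equals $[\bm I_M,\bm 0]$) shows that its determinant is a polynomial in the random entries that does not vanish identically, so $\bm A \bm P$ is invertible w.p.~$1$. A similar polynomial-nonvanishing argument shows that the entries of $\bm c=(\bm A \bm P)^{-1}\bm b$ are pairwise distinct w.p.~$1$, giving $\noevalsfun^*\leq M+1$. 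For the lower bound, if $K\leq M$ the top $M$ equations read $\tilde{\bm A}\bm P\bm c=\bm 0$ with $\tilde{\bm A}\in\rfield^{M\times NB}$ having i.i.d.\ continuous entries; since $\bm P$ has linearly independent indicator columns, $\tilde{\bm A}\bm P$ has full column rank $K$ w.p.~$1$, forcing $\bm c=\bm 0$, which contradicts $\bm 1\transpose \bm P\bm c = NB(B+1)/2 \neq 0$.

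For part (b), the upper bound $\noevalsfun^*\leq B$ follows by checking that the explicit choice $\bm x_i=i\bm 1$ for $i=1,\ldots,B$ satisfies every row of $\bm A \bm x = \bm b$: $\bm A_i \bm x_i = i\bm A_i \bm 1 = \bm 0$ (columns of $\bm A_i$ sum to zero in both $D_2$ and $D_3$), the $\bm E$-block collapses by direct computation, and $\sum_i iN = NB(B+1)/2$. For the lower bound, assume $K<B$ and write $\bm x=\bm P\bm c$. Using $\bm A_i\bm 1=\bm 0$, the top block rewrites as $\tilde{\bm A}\bm P'\bm c=\bm 0$, where $\tilde{\bm A}\in\rfield^{M\times B(N-1)}$ has i.i.d.\ continuous entries and $\bm P'$ is the ``difference'' matrix whose row indexed by $(i,j)$, with $i\in\{1,\ldots,B\}$ and $j\in\{1,\ldots,N-1\}$, equals $\bm e_{k(i,j)}\transpose-\bm e_{k(i,N)}\transpose$, where $k(i,j)$ denotes the partition class of entry $j$ in block $i$. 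I then show that $\nullspace{\bm P'}$ equals the span of the indicator vectors of the equivalence classes of the relation ``two classes appear together in some block''; letting $E$ denote the number of such equivalence classes, this gives $\rank(\bm P')=K-E$. Since $M\geq B>K>K-E$, a generic-position argument---selecting a basis of $\range{\bm P'}$ and observing that the determinant of the corresponding $(K-E)\times(K-E)$ minor of $\tilde{\bm A}$ evaluated on that basis is a polynomial in the random entries that does not vanish identically---yields $\nullspace{\tilde{\bm A}\bm P'}=\nullspace{\bm P'}$ w.p.~$1$. Hence $\bm c$ must be constant on each equivalence class, and because $\bm c$ has $K$ distinct entries this forces $E=K$, i.e., every block of $\bm x$ is monochromatic: $\bm x_i=v_{k(i)}\bm 1$.

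Substituting $\bm x_i=v_{k(i)}\bm 1$ into the $\bm E$-block reduces it to the scalar equations $v_{k(\ell)}=\ell\alpha$ for $\ell=1,\ldots,B$, where $\alpha=v_{k(B)}/B$ in $D_2$ and $\alpha$ is pinned to $1$ by the bottom equation in $D_3$. In the $D_3$ case this directly produces $B$ distinct values $1,2,\ldots,B$ among the entries of $\bm x$, contradicting $K<B$; in the $D_2$ case the same conclusion holds whenever $\alpha\neq 0$, and the remaining possibility $\alpha=0$ forces $\bm x=\bm 0$, violating $\bm 1\transpose\bm x=NB(B+1)/2$. Hence $\noevalsfun^*\geq B$, completing the proof. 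The main obstacle will be the kernel characterization of $\bm P'$ via block-equivalence classes together with the generic-position step that promotes $\nullspace{\bm P'}$ to $\nullspace{\tilde{\bm A}\bm P'}$ w.p.~$1$; this is precisely what converts the continuous randomness of $\tilde{\bm A}$ into the purely combinatorial monochromaticity condition that drives the rest of the argument.
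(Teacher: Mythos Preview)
Your argument is correct and follows the same overall strategy as the paper: parameterize $\bm x$ by a partition matrix, exploit that disjoint sums of continuously distributed columns are again continuously distributed, and take a union bound over finitely many partitions via a generic-rank argument. The only noteworthy organizational difference is in the lower bound of part~(b). The paper first records that the \emph{unique} $\bm x$ satisfying $\bm T\bm x=\bm 0$ together with the $\bm E$- and $\bm 1^\top$-constraints is $\bm x_i=i\bm 1$; hence for $L<B$ every admissible $\bbm z$ has $\bm T\bm S_\ell\bbm z\neq\bm 0$, and a single rank bound on $\cbm A$ restricted to $\range{\bm T\bm S_\ell}$ (a subspace of dimension at most $L<B\leq M$) finishes. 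You instead apply the rank argument first to force $\bm c\in\nullspace{\bm P'}$, then characterize that kernel via the block-equivalence relation, and only afterward invoke the $\bm E$-block. Your detour buys an explicit structural description of $\nullspace{\bm T\bm S_\ell}$ (dimension $E$, spanned by equivalence-class indicators), which the paper never needs; the paper's route is shorter because it bypasses that computation with the uniqueness observation. Your split between $D_2$ and $D_3$ at the end is harmless but unnecessary: in both cases the bottom row $\bm 1^\top\bm x=NB(B+1)/2$ already pins $\alpha=1$ once monochromaticity is established. For part~(a), your invertibility argument for the full $(M{+}1)\times(M{+}1)$ matrix $\bm A\bm P$ is marginally cleaner than the paper's normalization of a kernel vector of the top $M$ rows, and your explicit distinctness check on $\bm c$ is redundant given the lower bound but certainly not wrong.
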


\begin{IEEEproof}
See Appendix~\ref{sec:proof:tightness} in the supplementary material.
\end{IEEEproof}

Fig.~\ref{fig:tightness} compares  $\noevalsfun(\bm x^*)$ and
$\noevalsfun^*\define \inf \{
\noevalsfun(\bm x):\bm A \bm x = \bm b \}$, where $\bm x^*$ is
found by applying a convex solver to \eqref{eq:kronl1}. Each point is
obtained for a realization of $\bm A$. Although we have not proved it
formally, it seems that the relaxed solution $\bm x^*$ coincides with
the exact one with probability 1 when $\bm A$ is drawn from $D_1$ or
$D_3$. For $D_2$, the number of distinct entries of $\bm x^*$ is
higher than optimal, but in any case it is considerably lower than the
number $NB$ of entries of $\bm x$,  shown as ``Maximum''.

To sum up, the solution to the relaxed problem \eqref{eq:kronl1} is
optimal in some cases and close to the optimal in other tested
cases. This supports the choice of  $||\fparmat\otimes \identity_\subspacedim
-\identity_\subspacedim \otimes \fparmat ||_\star$ as a convex
surrogate for the number of distinct eigenvalues of~$\fparmat$. 

}

\clearpage

\textbf{SUPPLEMENTARY MATERIAL}

\section{Iterative Solver for \eqref{prob:exactrelaxed}}
\label{sec:exactsolver}

This appendix develops an iterative method to
solve \eqref{prob:exactrelaxed} based on ADMM.  To this end, the first
step is to rewrite the objective and constraints in a form that is amenable to the application of this method. 

\cmt{Objective}To rewrite the objective, let $\canbasisvec_i$ and $\canbasisperp_j$ be respectively the $i$-th and $j$-th columns of $\bm I_\subspacedim$ and $\bm I_{\nodenum-\subspacedim}$. 
The first and  second norms in the objective of \eqref{prob:exactrelaxed} can be expressed as:
\begin{align}
{{\left\| \fparmat\otimes\bm{I}_{\subspacedim}-\bm{I}_\subspacedim\otimes\fparmat \right\|}_{*}}\nonumber &= {{\left\| \mathrm{vec}^{-1}(\linfunmat\mathrm{vec} (\fparmat)) \right\|}_{*}}\\
{{\left\| \fpermat\otimes\bm{I}_{\nodenum-\subspacedim}-\bm{I}_{\nodenum-\subspacedim}\otimes\fpermat \right\|}_{*}}\nonumber & =  {{\left\| \mathrm{vec}^{-1}(\linearmat\mathrm{vec} (\fpermat) )\right\|}_{*}},
\end{align}
  where $\linfunmat\define [\linfunvec_{11}, \linfunvec_{21},...,\linfunvec_{\subspacedim\subspacedim}]$, $\linearmat\define[\linearvec_{11}, \linearvec_{21},...,\linearvec_{\nodenum-\subspacedim,\nodenum-\subspacedim}]$, $\linfunvec_{ij}\define\mathrm{vec}({\canbasisvec_{i}\canbasisvec_{j}^{\top}\otimes\bm{I}_{\subspacedim}-\bm{I}_{\subspacedim}\otimes\canbasisvec_{i}\canbasisvec_{j}^{\top}})$, $\linearvec_{ij}\define\mathrm{vec}({\canbasisperp_{i}\canbasisperp_{j}^{\top}\otimes\bm{I}_{\nodenum-\subspacedim}-\bm{I}_{\nodenum-\subspacedim}\otimes\canbasisperp_{i}\canbasisperp_{j}^{\top}})$.

\cmt{constraints}To rewrite the  constraints of \eqref{prob:exactrelaxed},
\begin{myitemize}%
\myitem\cmt{1-2}invoke  the properties of the Kronecker product to
combine  the  first and  second constraints as $\topologyconstraintmat((\uparmat\otimes\uparmat)\mathrm{vec}(\fparmat)+(\upermat\otimes\upermat)\mathrm{vec}(\fpermat))=\bm{0}$.
\myitem\cmt{3-4}The third and fourth constraints can be rewritten as
$\symmetrypar\mathrm{vec}(\fparmat)=\bm{0}$,
$\symmetryperp\mathrm{vec}(\fpermat)=\bm{0}$ where the rows of
$\symmetrypar\in\mathbb{R}^{{(\subspacedim^{2}-\subspacedim)/2\times{\subspacedim}^{2}}}$
and $\symmetryperp\in\mathbb{R}^{{((\nodenum-\subspacedim)^{2}-(\nodenum-\subspacedim))/2\times{(\nodenum-\subspacedim)}^{2}}}$  are respectively given by $(\canbasisvec_{j}^{\top}\otimes\canbasisvec_{i}^{\top}-\canbasisvec_{i}^{\top}\otimes\canbasisvec_{j}^{\top})$ and $(\canbasisperp_{j}^{\top}\otimes\canbasisperp_{i}^{\top}-\canbasisperp_{i}^{\top}\otimes\canbasisperp_{j}^{\top})$ for $i<j$.
\myitem\cmt{trace}Regarding the trace constraints, rewrite $\mathrm{tr}(\fparmat)=\subspacedim$ as $\mathrm{tr}(\bm{I}_\subspacedim\fparmat)=\subspacedim$,  which in turn can be expressed as $\mathrm{vec}^{\top}(\bm{I}_\subspacedim)\mathrm{vec}(\fparmat)=\subspacedim$. Similarly, one can rewrite $\mathrm{tr}(\fpermat)=(1\pm\epsilon)(\nodenum-\subspacedim)$ as $\mathrm{vec}^{\top}(\bm{I}_{\nodenum-\subspacedim})\mathrm{vec}(\fpermat)=(1\pm\epsilon)(\nodenum-\subspacedim)$.
\end{myitemize}%

\cmt{rewritten problem}Therefore, \eqref{prob:exactrelaxed} can be written as:
 \begin{subequations} \label{eq:app} 
\begin {align}
\nonumber \underset{\nucpermat,\nucparmat,\fparmat,\fpermat}{\text{min}} ~~&
          \regnucpar {{\left\| \nucparmat \right\|}_{*}}
           +\regnucper{{\left\|\nucpermat\right\|}_{*}} +{{\left\|\fpermat \right\|}_\text{F}^{2}}\label{optt1}\\
\text{s. t.}   \;\;\;\;    	&  \consparmat\mathrm{vec}(\fparmat)+\conspermat\mathrm{vec}(\fpermat)=\linearvec\\
&\mathrm{vec}(\nucparmat)=\linfunmat\mathrm{vec} (\fparmat)\\
&\mathrm{vec}(\nucpermat)=\linearmat\mathrm{vec} (\fpermat),
\end{align}
\end{subequations}
where
\begin{subequations}
\begin{align}
\consparmat&\overset{\Delta}{=}[ \topologyconstraintmat(\uparmat\otimes\uparmat);
									\symmetrypar;
                                                           \bm{0};
	                                                    \mathrm{vec^{\top}}(\bm{I}_\subspacedim);
                                                           \bm{0}]
                                                           \\
\conspermat&\overset{\Delta}{=}[
                                                        \topologyconstraintmat(\upermat\otimes\upermat);
                                                         \bm{0};
									\symmetryperp;
	                                                    \bm{0};
                                                           \mathrm{vec^{\top}}(\bm{I}_{\nodenum-\subspacedim})]\\
\linearvec&\overset{\Delta}{=}[
                                                          \bm{0};
                                                         \bm{0};
									\bm{0};
	                                                    \subspacedim;
                                                             (1\pm\epsilon)(\nodenum-\subspacedim)].
\end{align}
\end{subequations}

\cmt{ADMM}The ADMM method
in  scaled form \cite[Sec. 3.1.1]{boyd2011distributed} applied to \eqref{eq:app} obtains the $k$-th iterate as follows:
 \begin{subequations} \label{ADMM1}
\begin{align}
&(\nucparmat^{k+1},\nucpermat^{k+1})\label{ADMM1a}\\& \quad\quad :=\underset{\nucparmat,\nucpermat}{\text{argmin}}~~L_{\rho}(\nucparmat,\nucpermat,\fparmat^{k},\fpermat^{k},\lagfirst^{k},\lagsec^{k},\lagtre^{k})\nonumber\\%
\label{ADMM1ab}&(\fparmat^{k+1},\fpermat^{k+1})\\& \quad\quad :=  \underset{\fparmat,\fpermat}{\text{argmin}}~~ L_{\rho}(\nucparmat^{k+1},\nucpermat^{k+1},\fparmat, \fpermat,\lagfirst^{k},\lagsec^k, \lagtre^{k}) \nonumber\\
&\lagfirst^{k+1}:=\lagfirst^{k}+\consparmat\mathrm{vec}(\fparmat^{k+1})+\conspermat\mathrm{vec}(\fpermat^{k+1})-\linearvec\\
&\lagsec^{k+1}:=\lagsec^{k}+\mathrm{vec}(\nucparmat^{k+1})-\linfunmat\mathrm{vec} (\fparmat^{k+1}) \\
&\lagtre^{k+1}:=\lagtre^{k}+\mathrm{vec}(\nucpermat^{k+1})-\linearmat\mathrm{vec} (\fpermat^{k+1}), 
\end{align}
\end{subequations}
where
\begin{align}
 L_{\rho}&(\nucparmat, \nucpermat,\fparmat,\fpermat,\lagfirst,\lagsec,\lagtre) \nonumber\\\define&
\regnucpar{{\left\| \nucparmat \right\|}_{*}}+\regnucper{{\left\| \nucpermat \right\|}_{*}}+{{\left\| \fpermat \right\|}_\text{F}^{2}}\nonumber\\
 &+(\rho/2){{\left\| \consparmat\mathrm{vec}(\fparmat)+\conspermat\mathrm{vec}(\fpermat)-\linearvec+\lagfirst \right\|}_{2}^{2}}\nonumber\\
&+(\rho/2){{\left\| \mathrm{vec}(\nucparmat)-\linfunmat\mathrm{vec}
 (\fparmat)+\lagsec\right\|}_{2}^{2}}
 \nonumber\\
 & +(\rho/2){{\left\| \mathrm{vec}(\nucpermat)-\linearmat\mathrm{vec}
 (\fpermat)+\lagtre\right\|}_{2}^{2}}\nonumber
\end{align}
is the so-called augmented Lagrangian with user-defined \emph{penalty parameter} $\rho>0$. Variables $\lagfirst$, $\lagsec$ and $\lagtre$ correspond to the Lagrange multipliers of \eqref{eq:app}. 

To evaluate \eqref{ADMM1a}, one can leverage the proximal
operator of the nuclear norm~\cite[Th. 2.1]{cai2010singular}
\begin{align*}
\mathrm{prox}_\tau(\bm Z) \define \argmin_{\bm Y}~||\bm Y||_* + \frac{1}{2\tau}||\bm Y-\bm Z||_\text{F}^2
= D_\tau(\bm Z),
\end{align*}
where the \emph{singular value shrinkage} operator $D_\tau$ is defined for $\bm Z$ with SVD $\bm Z = \bm V_{\bm Z} \bm \Sigma_{\bm Z}\bm V_{\bm Z}\transpose$ as $D_\tau(\bm Z) \define
\bm V_{\bm Z} D_\tau(\bm \Sigma_{\bm Z})\bm V_{\bm Z}\transpose$ with
$ D_\tau ({\bm \Sigma}_{\bm Z})$ a diagonal matrix whose $(i,i)$-th entry
equals $\max(0, (\bm \Sigma_{\bm Z})_{i,i}-\tau )$.  
 With this operator,  \eqref{ADMM1a} becomes
\begin{subequations}
\begin{align}
\nucparmat^{k+1}=\mathrm{prox}_{\regnucpar/\rho}(\mathrm{vec}^{-1}(\linfunmat\mathrm{vec} (\fparmat^{k})-\lagsec^{k})) \label{eq:Y_par}\\
\nucpermat^{k+1}=\mathrm{prox}_{\regnucper/\rho}(\mathrm{vec}^{-1}(\linearmat\mathrm{vec} (\fpermat^{k})-\lagtre^{k})) \label{eq:Y_per}.
\end{align}
\end{subequations}
On the other hand,  \eqref{ADMM1ab} can be obtained in closed-form as
\begin{subequations}
\begin{align}
\label{eq:F_par}
&\mathrm{vec}(\fparmat^{k+1})=[\consparmat^{\top}\consparmat+\linfunmat^{\top}\linfunmat]^{-1}\\&[\consparmat^{\top}(\linearvec-\lagfirst^{k}-\conspermat\mathrm{vec}(\fpermat^{k}))
+\linfunmat^{\top}(\lagsec^{k}+\mathrm{vec}(\nucparmat^{k+1}))],\nonumber\\
\label{eq:F_per}
&\mathrm{vec}(\fpermat^{k+1})=[2\bm{I}+\rho\conspermat^{\top}\conspermat+\rho\linearmat^{\top}\linearmat]^{-1}\\&[\rho\conspermat^{\top}(\linearvec-\lagfirst^{k}-\consparmat\mathrm{vec}(\fparmat^{k}))
+\rho\linearmat^{\top}(\lagtre^{k}+\mathrm{vec}(\nucpermat^{k+1}))]\nonumber.
\end{align}
\end{subequations}
The overall ADMM algorithm is summarized as
Algorithm~\ref{exact-algorithm} and drastically improves the converegence rate of
our previous algorithm, reported in~\cite{mollaebrahim2018large}, since the latter
is based on subgradient descent.

\cmt{complexity}\arev{The computational complexity is dominated either  by \eqref{eq:Y_par} or by \eqref{eq:Y_per}, whose SVDs respectively require $O(\subspacedim^6)$  and  $O((\nodenum-\subspacedim)^6)$  arithmetic operations, depending on which quantity is larger. This complexity is much lower than the one of  general-purpose convex solvers and does not limit the values of $\nodenum$ and $\subspacedim$ to be used in practice given the intrinsic limitations of graph filters; see Sec.~\ref{sec:conclusions}.
}

\begin{algorithm}[t]
\caption{Iterative solver for \eqref{prob:exactrelaxed}.}
\label{exact-algorithm}
\begin{algorithmic}[1]
\REQUIRE $\edgeset$, $\uparmat$, 
 $\ADMMreg$, $\regnucpar$, $\regnucper$, $\eigenreg$.
\STATE{Obtain $\upermat$ s.t. $\upermat\transpose\upermat=\bm
I_{\nodenum-\subspacedim}$ and $\uparmat\transpose\upermat=\bm 0_{\subspacedim\times\nodenum}$.}
\STATE{Initialize
$\fparmat^0,~\fpermat^0,~ \lagfirst^{0},~\lagsec^{0},~\lagtre^{0}, k=0
$.}

\WHILE{stopping\_criterion\_not\_met()}
\STATE{obtain $\nucparmat^{{k+1}}$ via \eqref{eq:Y_par}.}	
\STATE{obtain $\nucpermat^{{k+1}}$ via \eqref{eq:Y_per}.}	
\STATE{obtain $\fparmat^{{k+1}}$ via \eqref{eq:F_par}.}	
\STATE{obtain $\fpermat^{{k+1}}$ via \eqref{eq:F_per}.}	
\STATE{ obtain $\lagfirst^{k+1},\lagsec^{k+1},\lagtre^{k+1}$ via \eqref{ADMM1}.}	
\STATE{ $k\leftarrow k+1$.}	
	\ENDWHILE
\RETURN	$ \shiftmat = \uparmat\fparmat^{k}\uparmat\transpose
+ \upermat\fpermat^{k}\upermat\transpose$.
\end{algorithmic}
\end{algorithm}

\section{Iterative Solver for \eqref{prob:approxrelaxed}}
\label{sec:approxsolver}


\cmt{Overview}This section presents a method to solve
\ref{prob:approxrelaxed} via ADMM.
\begin{myitemize}
\myitem\cmt{Objective}To this end, the first step is to express the objective function in a
suitable form. 
\begin{myitemize}
\myitem\cmt{First and second terms}With $\linfunmat$ and $\linearmat$
defined in Appendix~\ref{sec:exactsolver}, the first and second
terms of the objective of \eqref{prob:approxrelaxed} are respectively
proportional to
\begin{align}
&{{\left\| \vect^{-1}(\linfunmat\vect (\uparmat^{\top}\shiftmat\uparmat)) \right\|}_{*}}=\nonumber\\&{{\left\| \vect^{-1}(\linfunmat(\uparmat^{\top}\otimes\uparmat^{\top})\vect (\shiftmat)) \right\|}_{*}}={{\left\| \vect^{-1}(\linfunmatpar\vect (\shiftmat)) \right\|}_{*}}\nonumber
\end{align}
and
\begin{align}
& {{\left\| \vect^{-1}(\linearmat\vect (\upermat^{\top}\shiftmat\upermat)) \right\|}_{*}}=\nonumber\\&{{\left\| \vect^{-1}(\linearmat(\upermat^{\top}\otimes\upermat^{\top})\vect (\shiftmat)) \right\|}_{*}}={{\left\| \vect^{-1}(\linearmatperp\vect (\shiftmat)) \right\|}_{*}}\nonumber,
\end{align}
  where
  $\linfunmatpar\define\linfunmat(\uparmat^{\top}\otimes\uparmat^{\top})$
  and
  $\linearmat_{\perp}\define\linearmat(\upermat^{\top}\otimes\upermat^{\top})$. Noting
  that the nuclear norm of a block diagonal matrix equals the sum of
  the nuclear norms of each block enables one to compactly express the first two
  terms in \eqref{prob:approxrelaxed} as
\begin{align}
   \regnucpar{{\left\| \vect^{-1}(\linfunmat_{\parallel}\vect(\shiftmat)) \right\|}_{*}}
           +\regnucper{{\left\|
               \vect^{-1}(\linearmat_{\perp}\vect(\shiftmat))\right\|}_{*}}
           \nonumber
           =   {{\left\| \nucmat \right\|}_{*}}
\end{align}
where
\begin{align}
  \label{eq:nucmatconstr}
\nucmat\define	\begin{bmatrix}
                                                         \regnucpar\vect^{-1}(\linfunmat_{\parallel}\vect(\shiftmat)) &\mathbf{0}_{\subspacedim^2\times{(\nodenum-\subspacedim)^2}}\\
                                                             \mathbf{0}_{(\nodenum-\subspacedim)^2\times{\subspacedim^2}}&\regnucper\vect^{-1}(\linearmat_{\perp}\vect(\shiftmat))             
\end{bmatrix}.
\end{align}                                                                        
\myitem\cmt{Third and fourth terms}On the other hand, due to the properties of
the Kronecker product, the third and fourth terms
in the objective of \eqref{prob:approxrelaxed} can be written as ${{\| (\upermat^{\top}\otimes\upermat^{\top})\vect(\shiftmat) \|}_{2}^{2}}+\lambda{{\| (\uparmat^{\top}\otimes\upermat^{\top})\vect(\shiftmat) \|}_{2}^{2}}$.

\end{myitemize}%

\myitem\cmt{constraints}Regarding the constraints, note that
$\shiftmat=\shiftmat\transpose$ can be expressed as
$\symmat\vect(\shiftmat)=\mathbf{0}$, where the rows of
$\symmat\in\mathbb{R}^{{(\nodenum^{2}-\nodenum)/2\times{\nodenum}^{2}}}$
are given by
$(\canbasisvec_{j}^{\top}\otimes\canbasisvec_{i}^{\top}-\canbasisvec_{i}^{\top}\otimes\canbasisvec_{j}^{\top})$
for all $i,j=1,\ldots,\nodenum$ such that $i<j$. Here,
$\canbasisvec_{i}$ denotes the $i$-th column of $\bm I_\nodenum$. From
these considerations and using the definition of
$\topologyconstraintmat$ in Sec.~\ref{sec:filterorderminimization}, problem
\eqref{prob:approxrelaxed} can be written as:
\leqnomode
\begin {align}
  \tag{P2-R'}\label{eq:app1} 
  \begin{aligned}
\minimize_{\shiftmat,\nucmat}~~ &  {{\left\| \nucmat \right\|}_{*}}{{+\left\| (\upermat^{\top}\otimes\upermat^{\top})\vect(\shiftmat) \right\|}_{2}^{2}}\\&+\lambda{{\left\| (\uparmat^{\top}\otimes\upermat^{\top})\vect(\shiftmat) \right\|}_{2}^{2}}\\
\st   \;\;\;\;    	&
 \eqref{eq:nucmatconstr},\\
&\consmat\vect(\shiftmat)=\exactpfeasvec,
  \end{aligned}
\end{align}
\reqnomode
where $\consmat\overset{\Delta}{=}[                           \topologyconstraintmat;
									\symmat;
	                                                    \mathrm{vec^{\top}}(\uparmat\uparmat^{\top});
                                                             \mathrm{vec^{\top}}(\upermat\upermat^{\top})]$
and $\exactpfeasvec\overset{\Delta}{=}[
                                                         \mathbf{0};
									\mathbf{0};
	                                                    \subspacedim;
                                                             (1\pm \eigenreg)(\nodenum-\subspacedim)]$.
\myitem\cmt{$\nucmat$ constraint}
By considering each block separately, \eqref{eq:nucmatconstr} holds iff
\begin{align}
\nonumber
 & \regnucpar\linfunmat_{\parallel}\vect(\shiftmat)= \vect(\consmat_{1}\nucmat\consmat_{1}^{\top})=(\consmat_{1}\otimes\consmat_{1})\vect(\nucmat) ,\\
 & \mathbf{0}= \vect(\consmat_{2}\nucmat\consmat_{1}^{\top})=(\consmat_{1}\otimes\consmat_{2})\vect(\nucmat) \nonumber,
\\ & \mathbf{0}= \vect(\consmat_{1}\nucmat\consmat_{2}^{\top})=(\consmat_{2}\otimes\consmat_{1})\vect(\nucmat) \nonumber,
\\ & \regnucper\linearmat_{\perp}\vect(\shiftmat)= \vect(\consmat_{2}\nucmat\consmat_{2}^{\top})=(\consmat_{2}\otimes\consmat_{2})\vect(\nucmat) \nonumber,
\end{align}
where
$\consmat_{1}\define[\identity_{r^2\times{r}^2},\mathbf{0}_{r^2\times({N-r})^{2}}]$ and
$\consmat_{2}\define[\mathbf{0}_{(N-r)^2\times{r}^2},\identity_{(N-r)^2\times{(N-r)}^2}]$. More
compactly, \eqref{eq:nucmatconstr} holds iff 
\begin{align}\label{cast}
&\exactpfeasmat\vect(\nucmat)=\feasmat\vect(\shiftmat)
\end{align}
where  $\exactpfeasmat=[\consmat_{1}\otimes\consmat_{1};
  \consmat_{1}\otimes\consmat_{2};
  \consmat_{2}\otimes\consmat_{1};\consmat_{2}\otimes\consmat_{2}]$ and $\feasmat=[\regnucpar\linfunmat_{\parallel}; \mathbf{0}; \mathbf{0}; \regnucper\linearmat_{\perp}]$. 
Noting that $\exactpfeasmat$ is orthogonal enables one to rewrite
\eqref{eq:app1}  as
\leqnomode
\begin {align}
  \tag{P2-R''}\label{main_2}
  \begin{aligned}
\minimize_{\shiftmat,\nucmat}~~ &  {{\left\| \nucmat \right\|}_{*}}{{+\left\| (\upermat^{\top}\otimes\upermat^{\top})\vect(\shiftmat) \right\|}_{2}^{2}}\\&+\lambda{{\left\| (\uparmat^{\top}\otimes\upermat^{\top})\vect(\shiftmat) \right\|}_{2}^{2}}\\
\st   \;\;\;\;    	&\vect(\nucmat)=\exactpfeasmat\transpose\feasmat~\vect(\shiftmat),\\
&
\consmat\vect(\shiftmat)=\exactpfeasvec.
  \end{aligned}
\end{align}
\reqnomode

\end{myitemize}%

\cmt{ADMM}
\begin{myitemize}%

\myitem\cmt{ADMM algo}The ADMM method
in scaled form  \cite[Sec. 3.1.1]{boyd2011distributed} applied
to \eqref{main_2} reads as:
 \begin{subequations} \label{ADMM}
\begin{align}
&{\shiftmat^{k+1}}:=
  \underset{\shiftmat}{\text{argmin}}~\bar L_{\ADMMreg}(\shiftmat, {\nucmat^{k}},\dummymat_1^k,\dummymat_2^k)\label{ADMMa}\\
&{\nucmat^{k+1}}:=
  \underset{\nucmat}{\text{argmin}}~\bar L_{\ADMMreg}(\shiftmat^{k+1}, {\nucmat},\dummymat_1^k,\dummymat_2^k)\label{eq:app:yupdate}\\
&\lagfirst^{k+1}:=\lagfirst^{k}+\vect(\nucmat^{k+1})-\exactpfeasmat\transpose\feasmat\vect(\shiftmat^{k+1})\label{ADMM_con}\\
&\lagsec^{k+1}:=\lagsec^{k}+\consmat\vect(\shiftmat^{k+1})-\exactpfeasvec\label{q2},
\end{align}
\end{subequations}
where
\begin{align}
\bar L_{\ADMMreg}&(\shiftmat, {\nucmat},\lagfirst,\lagsec)\define    {{\left\| \nucmat   \right\|}_{*}} +{{\left\| (\upermat^{\top}\otimes\upermat^{\top})\vect(\shiftmat) \right\|}_{2}^{2}}\nonumber\\&+\lambda{{\left\| (\uparmat^{\top}\otimes\upermat^{\top})\vect(\shiftmat) \right\|}_{2}^{2}}\nonumber\\&+(\ADMMreg/2){{\left\| \vect(\nucmat)-\exactpfeasmat\transpose\feasmat\vect(\shiftmat)+\lagfirst \right\|}_{2}^{2}}\nonumber\\&+(\ADMMreg/2){{\left\| \consmat\vect(\shiftmat)-\exactpfeasvec+\lagsec \right\|}_{2}^{2}}\nonumber,
\end{align}
where is the augmented Lagrangian  with Lagrange multipliers  $\lagfirst$ and $\lagsec$.

\myitem\cmt{$\dummymat_1$ update}To express \eqref{ADMM_con} in a more
convenient form, observe first from the definitions of
$\exactpfeasmat$ and $\feasmat$ that
$\exactpfeasmat^{\top}\feasmat=\regnucpar(\consmat_{1}^{\top}\otimes\consmat_{1}^{\top})
\linfunmat_{\parallel} +\regnucper
(\consmat_{2}^{\top}\otimes\consmat_{2}^{\top})\linearmat_{\perp}
$. Second, note from the properties of the Kronecker product that
$\linfunmat_{\parallel}\vect(\shiftmat)=\linfunmat(\uparmat^{\top}\otimes\uparmat^{\top})\vect(\shiftmat)=
\linfunmat\vect(\uparmat^{\top}\shiftmat\uparmat)=\vect(\uparmat^{\top}\shiftmat\uparmat\otimes\identity_\subspacedim-\identity_\subspacedim\otimes\uparmat^{\top}\shiftmat\uparmat)$.
Similarly,  $\linearmat_{\perp}\vect(\shiftmat)=
\vect(\upermat^{\top}\shiftmat\upermat\otimes\identity_{\nodenum-\subspacedim}-\identity_{\nodenum-\subspacedim}\otimes\upermat^{\top}\shiftmat\upermat)$.
Consequently, 
\begin{align}\nonumber
&\exactpfeasmat^{\top}\feasmat\vect(\shiftmat)\\=&\regnucpar(\consmat_{1}^{\top}\otimes\consmat_{1}^{\top})\vect(\uparmat^{\top}\shiftmat\uparmat\otimes\identity-\identity\otimes\uparmat^{\top}\shiftmat\uparmat)\nonumber\\&+\regnucper(\consmat_{2}^{\top}\otimes\consmat_{2}^{\top})\vect(\upermat^{\top}\shiftmat\upermat\otimes\identity-\identity\otimes\upermat^{\top}\shiftmat\upermat)\nonumber\\
=&\regnucpar\vect(\consmat_{1}^{\top}(\uparmat^{\top}\shiftmat\uparmat\otimes\identity-\identity\otimes\uparmat^{\top}\shiftmat\uparmat)\consmat_{1})+\nonumber\\&\regnucper\vect(\consmat_{2}^{\top}(\upermat^{\top}\shiftmat\upermat\otimes\identity-\identity\otimes\upermat^{\top}\shiftmat\upermat)\consmat_{2})\nonumber
\\=&\vect[\regnucpar(\uparmat^{\top}\shiftmat\uparmat\otimes\identity-\identity\otimes\uparmat^{\top}\shiftmat\uparmat),\mathbf{0};\label{eq:app:cdvs}
  \\&\quad~~\mathbf{0},\regnucper(\upermat^{\top}\shiftmat\upermat\otimes\identity-\identity\otimes\upermat^{\top}\shiftmat\upermat)]\nonumber.
\end{align}
\balance
The update \eqref{ADMM_con} can therefore be expressed
upon defining $\dummymat_1^k\define\vect\inv(\lagfirst^{k})$
as
\begin{align}
&\dummymat_1^{k+1}:=\dummymat_1^{k}+\nucmat^{k+1}\label{eq:app:qud}\\&-
[\regnucpar(\uparmat^{\top}\shiftmat^{k+1}\uparmat\otimes\identity_\subspacedim-\identity_\subspacedim\otimes\uparmat^{\top}\shiftmat^{k+1}\uparmat),\mathbf{0};\nonumber\\&~~\quad\mathbf{0},\regnucper(\upermat^{\top}\shiftmat^{k+1}\upermat\otimes\identity_{\nodenum-\subspacedim}-\identity_{\nodenum-\subspacedim}\otimes\upermat^{\top}\shiftmat^{k+1}\upermat)]\nonumber.
\end{align}

\myitem\cmt{$\nucmat$ update}Furthermore, by using the proximal
operator of the nuclear norm (see  Appendix~\ref{sec:exactsolver}),
the update in \eqref{eq:app:yupdate} becomes 
\begin{align}
&\nucmat^{k+1}=\mathrm{prox}_{1/\ADMMreg}(\vect^{-1}(\exactpfeasmat\transpose\feasmat\vect(\shiftmat^{k})-\lagfirst^{k})) \nonumber\\
&=\mathrm{prox}_{1/\ADMMreg}([\regnucpar(\uparmat^{\top}\shiftmat^{k+1}\uparmat\otimes\identity_\subspacedim-\identity_\subspacedim\otimes\uparmat^{\top}\shiftmat^{k+1}\uparmat),\mathbf{0};\nonumber\\&\mathbf{0},\regnucper(\upermat^{\top}\shiftmat^{k+1}\upermat\otimes\identity_{\nodenum-\subspacedim}-\identity_{\nodenum-\subspacedim}\otimes\upermat^{\top}\shiftmat^{k+1}\upermat)]-\dummymat_1^{k}),\label{eq:app:yud}
\end{align}
where the second equality follows from  \eqref{eq:app:cdvs}. 

\myitem\cmt{$\nucmat$ and $\dummymat_1$ updates}If $\dummymat_1^0$ is
initialized as a block diagonal matrix with diagonal blocks of size
$\subspacedim \times \subspacedim$ and
$(\nodenum-\subspacedim)\times(\nodenum-\subspacedim)$, then it
follows from \eqref{eq:app:qud} and \eqref{eq:app:yud} that
$\nucmat^{k}$ and $\dummymat_1^{k}$ will remain block diagonal with
blocks of the same size. These blocks can be updated as
\begin{subequations}
\label{eq:updatesyyqq}
\begin{align}
  \nucmat_1^{k+1}=&\mathrm{prox}_{1/\ADMMreg}\Big(\regnucpar(\uparmat^{\top}\shiftmat^{k+1}\uparmat\otimes
  \identity_\subspacedim\\&\quad\quad\quad~-\identity_\subspacedim\otimes\uparmat^{\top}\shiftmat^{k+1}\uparmat)-\dummymat_{1,1}^{k}\Big)
  \nonumber\\
  \nucmat_2^{k+1}=&\mathrm{prox}_{1/\ADMMreg}\Big(\regnucper(\upermat^{\top}\shiftmat^{k+1}\upermat\otimes\identity_{\nodenum-\subspacedim}\\\nonumber
  &\quad\quad\quad~-\identity_{\nodenum-\subspacedim}\otimes\upermat^{\top}\shiftmat^{k+1}\upermat)-\dummymat_{1,2}^{k}\Big)\\
\dummymat_{1,1}^{k+1}=&\dummymat_{1,1}^{k}+\nucmat_1^{k+1}\\&-\nonumber
\regnucpar(\uparmat^{\top}\shiftmat^{k+1}\uparmat\otimes\identity-\identity\otimes\uparmat^{\top}\shiftmat^{k+1}\uparmat)\\
\dummymat_{1,2}^{k+1}=&\dummymat_{1,2}^{k}+\nucmat_2^{k+1}\\&-\nonumber
\regnucper(\upermat^{\top}\shiftmat^{k+1}\upermat\otimes\identity-\identity\otimes\upermat^{\top}\shiftmat^{k+1}\upermat).
\end{align}
\end{subequations}
\myitem\cmt{$\shiftmat$-update}The $\shiftmat$-update \eqref{ADMMa}
  can be obtained in closed form as
\begin{align}\label{closed_1}
\vect(\shiftmat^{k+1})=&\Big[\feasmat^{\top}\feasmat+\consmat^{\top}\consmat+(2\lambda/\ADMMreg)(\uparmat\uparmat^{\top}\otimes\upermat\upermat^{\top})\nonumber\\&+
(2/\ADMMreg)(\upermat\upermat^{\top}\otimes\upermat\upermat^{\top})\Big]^{-1}\\&
\Big[\feasmat\transpose\exactpfeasmat(\vect(\nucmat^{k})+\lagfirst^{k})+\consmat^{\top}(\exactpfeasvec-\lagsec^{k})\Big] \nonumber
\end{align}
\begin{myitemize}%
\myitem\cmt{orth}by using the orthogonality of  $\exactpfeasmat$.
\myitem\cmt{simplify DtC}To reduce the computational complexity
of \eqref{closed_1}, note that 
\begin{align}
&\feasmat^\top\exactpfeasmat\vect(\nucmat^{k})=
\regnucpar\linfunmat_{\parallel}^{\top}(\consmat_{1}\otimes\consmat_{1})\vect(\nucmat^{k})+\nonumber\\& \regnucper\linearmat_{\perp}^{\top}(\consmat_{2}\otimes\consmat_{2})\vect(\nucmat^{k})= \regnucpar\linfunmat_{\parallel}^{\top}\vect(\nucmat_1^{k})+\nonumber\\ &\regnucper\linearmat_{\perp}^{\top}\vect(\nucmat_2^{k})\nonumber
\end{align}
and
\begin{align}
\feasmat^\top\exactpfeasmat\lagfirst^{k}
&=
\regnucpar\linfunmat_{\parallel}^{\top}\vect(\dummymat_{1,1}^{k})+ \regnucper\linearmat_{\perp}^{\top}\vect(\dummymat_{1,2}^{k})\nonumber
\end{align}
to obtain
\begin{align}\label{adm1}
&\vect(\shiftmat^{k+1})=\Big[
\feasmat^{\top}\feasmat+\consmat^{\top}\consmat+(2\lambda/\ADMMreg)
(\uparmat\uparmat^{\top}\otimes\upermat\upermat^{\top})\nonumber\\&+(2/\ADMMreg)(\upermat\upermat^{\top}\otimes\upermat\upermat^{\top})\Big]^{-1}\Big[\regnucpar\linfunmat_{\parallel}^{\top}\vect(\nucmat_1^{k}+\dummymat_{1,1}^{k})\nonumber\\
&
+ \regnucper\linearmat_{\perp}^{\top}\vect(\nucmat_2^{k}+\dummymat_{1,2}^{k})
+\consmat^{\top}(\exactpfeasvec-\lagsec^k)\Big].
\end{align}

\end{myitemize}%

\cmt{summary}The overall procedure is summarized as Algorithm~\ref{approx-algorithm}.
\cmt{complexity}\arev{Its complexity is dominated by the inversion in \eqref{adm1}, which involves $O(\nodenum^6)$ arithmetic operations.  This complexity is much lower than the one of  general-purpose convex solvers and does not limit the values of $\nodenum$ to be used in practice given the intrinsic limitations of graph filters; see Sec.~\ref{sec:conclusions}.}


\end{myitemize}%

\begin{algorithm}[t]
\caption{Iterative solver for \eqref{prob:approxrelaxed}.}
\label{approx-algorithm}
\begin{algorithmic}[1]
\REQUIRE $\edgeset$, $\uparmat$, 
 $\ADMMreg$, $\lambda$, $\regnucpar$, $\regnucper$, $\eigenreg$.
\STATE{Obtain $\upermat$ s.t. $\upermat\transpose\upermat=\bm
I_{\nodenum-\subspacedim}$ and $\uparmat\transpose\upermat=\bm 0_{\subspacedim\times\nodenum}$.}
\STATE{Initialize
$\nucmat_{1}^0,~\nucmat_{2}^0,~ \dummymat_{1,1}^0,~\dummymat_{1,2}^0$,
$\lagsec^0$, $k=0$}

\WHILE{stopping\_criterion\_not\_met()}
\STATE{obtain ${\shiftmat}^{{k+1}}$ via \eqref{adm1}.}	
\STATE{ obtain $  \nucmat_1^{k+1}$, $  \nucmat_2^{k+1}$,
$\dummymat_{1,1}^{k+1}$, $\dummymat_{1,2}^{k+1}$ via \eqref{eq:updatesyyqq}.}	
\STATE{ obtain $\lagsec^{k+1}$ via \eqref{q2}.}
\STATE{ $k\leftarrow k+1$.}	
	\ENDWHILE
\RETURN	 $\shiftmat^k$	
\end{algorithmic}
\end{algorithm}

\section{Proof of \thref{th:tightness}}
\label{sec:proof:tightness}

{

Given any vector $\bm x\define [x_1,\ldots, x_{NB}]\transpose$, one
can collect its $L\define\noevalsfun(\bm x)$ distinct entries into the
vector $\bbm z\define [\bar z_1,\ldots, \bar z_L]\transpose$ and
construct a partition map $\ell:\{1,\ldots,
NB\} \rightarrow \{1,\ldots, L\}$ such that $ x_i=
\bar z_{\ell(i)}$. With this notation and letting $\bm a_i$ denote the
$i$-th column of $[\bm A_1,\ldots, \bm A_B]$, one can express the
first block row of $\bm A \bm x$ as $ \sum_{l=1}^{L} \bbm a_l \bar
z_l$, where $\bbm a_l\define \sum_{i:\ell(i)=l} \bm a_i,~l=1,\ldots,L$. Thus, the
system $\bm A\bm x = \bm b$ admits a solution iff there exist $\bm x$,
$\bbm z$, and $\ell$ satisfying the aforementioned relations such that
$ \bbm A \bbm z=\bm 0$, $\sum_i \bm E_i \bm x_i = \bm 0$ and $\bm
1\transpose \bm x=NB(B+1)/2$, where $\bbm A\define [\bbm
a_1,\ldots, \bbm a_L]$ and $\bm x_i\in \rfield^N$, $i=1,\ldots,B$, is
such that $\bm x=[\bm x_1\transpose,\ldots,\bm
x_B\transpose]\transpose$.

To prove part (a), observe that, regardless of the partition $\ell$,
the fact that the columns of $\bm A_i$ are independently drawn from a
continuous distribution implies that $\{\bbm a_l\}_l$ also adhere to a
continuous distribution and are independent of each other. Thus
$\rank(\bbm A) = \min(M,L)$ with probability 1. A vector $\bm x$
satisfying $\bm A\bm x=\bm b$ exists iff the homogeneous system $ \bbm
A \bbm z=\bm 0$ admits at least a non-zero solution. This is because
any non-zero solution  $\bbm z$ can be normalized to satisfy $\bm
1\transpose \bm x=NB(B+1)/2$ and thus $\bm A\bm x=\bm b$. Given
that $\rank(\bbm A) = \min(M,L)$ with probability 1, the system $ \bbm
A \bbm z=\bm 0$ admits a non-zero solution iff $L>M$. The proof is
concluded by noting that the minimum $L$ satisfying this condition is
$M+1$. The hypothesis $NB>M$ is necessary because $L\leq NB$.

To prove part (b) let $\cbm A\define[\bm a_{1,1},\ldots, \bm
a_{1,N-1},\bm a_{2,1},\ldots,\bm a_{B,N-1}]$ collect the independent
vectors in $\{\bm a_{i,j}\}_{i,j}$ and note that $[\bm A_1,\ldots, \bm
A_B] = \cbm A \bm T$, where
\begin{align*}
\bm T \define
\left[
\begin{array}{ccccccc}
\bm I_{N-1}&\bm -\bm 1&\bm 0&\bm 0&\ldots&\bm 0&\bm 0\\
\bm 0&\bm 0&\bm I_{N-1}&\bm -\bm 1&\ldots&\bm 0&\bm 0\\
\vdots&\vdots&\vdots&\vdots&\ddots&\vdots&\vdots\\
\bm 0&\bm 0&\bm 0&\bm 0&\ldots&\bm I_{N-1}&\bm -\bm 1
\end{array}
\right].
\end{align*}
Thus, the system $\bm A\bm x = \bm b$ admits a solution iff there
exists $\bm x$ such that $ \cbm A \bm T \bm x=\bm 0$, $\sum_i \bm
E_i \bm x_i = \bm 0$ and $\bm 1\transpose \bm x=NB(B+1)/2$.  Observe
that, by setting $\bm x_i = i\bm 1~\forall i$, the resulting $\bm x$
satisfies these three conditions, which establishes that
$\noevalsfun^*\leq \noevalsfun(\bm x) = B$.  Besides, it can be easily
seen that this solution is the only that satisfies $ \bm T \bm x=\bm
0$, $\sum_i \bm E_i \bm x_i = \bm 0$, and $\bm 1\transpose \bm x=NB(B+1)/2$. In particular, this implies that
$ \cbm A \bm T \bm x=\bm 0$ holds regardless of the value of $\cbm
A$. It remains only to show that, with probability 1, there exists no
solution $\bm x'$ with $\noevalsfun(\bm x') < B$.

To this end, note that the set of realizations of $\cbm A$ for which
such an $\bm x'$ exists is the union over partitions $\ell$ of the
sets of realizations of $\cbm A$ for which a solution $\bm x'$ with
partition $\ell$ exists. Since there are only finitely many
partitions, it suffices to show that the probability of finding such
an $\bm x'$ is 0 for a
single generic partition. Let $\ell$ denote such a partition and
observe that $\bm x$ can be expressed as $\bm x=\bm S_\ell \bbm z$ for
some $NB\times L$ matrix $\bm S_\ell $ with ones and zeros. The
probability that there exists a solution is $
\mathbb{P}[\exists \bbm z \in \mathcal{Z}_\ell: \cbm A \bm T \bm 
S_\ell \bbm z = \bm 0] $ where $\mathcal{Z}_\ell\define \{\bbm z:~\bm
x=\bm S_\ell \bbm z$ satisfies $
\sum_i \bm
E_i \bm x_i = \bm 0$ and $\bm 1\transpose \bm x=NB(B+1)/2\}$.  If
$L<B$, the product $ \bm T \bm S_\ell \bbm z$ is necessarily non-zero
for $\bbm z \in \mathcal{Z}_\ell$,
as described earlier. Thus, the aforementioned probability equals
\begin{subequations}
\begin{align}
&\mathbb{P}[\exists \bbm z \in \mathcal{Z}_\ell: \cbm A \bm T \bm 
S_\ell \bbm z = \bm 0, \bm T \bm 
S_\ell \bbm z \neq \bm 0 ] \\
&\leq \mathbb{P}[\exists \bbm t \in \range{ \bm T \bm 
S_\ell}: \cbm A\bbm t = \bm 0, \bbm t \neq \bm 0 ]\\
&= \mathbb{P}[\exists \bm v \in \rfield^r: \cbm A\bm Q_\ell \bm v = \bm 0, \bm v \neq \bm 0 ]\\
&= \mathbb{P}[\rank[ \cbm A\bm Q_\ell]< r ],
\end{align}
\end{subequations}
where $\bm Q_\ell$ is a matrix whose  $r$  linearly independent columns
constitute a basis for $\range{ \bm T \bm 
S_\ell}$. Noting that $r\leq L<B$ and, by hypothesis, $M\geq B$, it
follows that $\rank[ \cbm A\bm Q_\ell]< r$ is only possible if the $M$
rows of $\cbm A$ lie in a proper subspace of  $\range{ \bm T \bm 
S_\ell}$. Since proper subspaces have zero Lebesgue measure and $\cbm
A$ obeys a continuous probability distribution, it follows that 
$\mathbb{P}[\rank[ \cbm A\bm Q_\ell]< r ]=0$ and, consequently, 
$
\mathbb{P}[\exists \bbm z \in \mathcal{Z}_\ell: \cbm A \bm T \bm 
S_\ell \bbm z = \bm 0]=0$.

}

\end{document}